\documentclass[english]{article}
\usepackage[T1]{fontenc}
\usepackage[utf8]{inputenc}
\usepackage[a4paper]{geometry}
\geometry{verbose,tmargin=2.5cm,bmargin=2.5cm,lmargin=2.5cm,rmargin=2.5cm}
\usepackage{bm}
\usepackage{amsmath}
\usepackage{amsthm}
\usepackage{amssymb}
\usepackage{cite}

\usepackage{mathdots}
\usepackage{setspace}
\setstretch{1.1}

\makeatletter
\numberwithin{equation}{section}
\theoremstyle{plain}
\newtheorem{prop}{\protect\propositionname}[section]
\ifx\proof\undefined\
  \newenvironment{proof}[1][\proofname]{\par
    \normalfont\topsep6\p@\@plus6\p@\relax
    \trivlist
    \itemindent\parindent
    \item[\hskip\labelsep
          \scshape
      #1]\ignorespaces
  }{%
    \endtrivlist\@endpefalse
  }
  \providecommand{\proofname}{Proof}
\fi
\theoremstyle{plain}
\newtheorem{lem}{\protect\lemmaname}[section]
\theoremstyle{definition}
\newtheorem{rem}{\protect\remarkname}[section]
\theoremstyle{definition}

\theoremstyle{definition}
\newtheorem{defn}{\protect\definitionname}[section]
\theoremstyle{plain}
\newtheorem{thm}{\protect\theoremname}[section]



\date{}

\usepackage{authblk}

\author[$\dagger$]{Błażej M. Szablikowski\footnote{Corresponding author.}}
\author[$\dagger$]{Maciej Błaszak}
\author[$\ddagger$]{Krzysztof Marciniak}

\setlength{\affilsep}{1.5em}

\affil[$\dagger$]{Faculty of Physics, Adam Mickiewicz University\authorcr
Uniwersytetu Pozna\'{n}skiego 2, 61-614 Poznań, Poland\authorcr
\texttt{bszablik@amu.edu.pl, blaszakm@amu.edu.pl}\authorcr\mbox{}}

\affil[$\ddagger$]{Department of Science and Technology\authorcr
Link\"{o}ping University, Campus Norrk\"{o}ping\authorcr
601 74 Norrk\"{o}ping, Sweden\authorcr
\texttt{krzma@itn.liu.se}}

\makeatother

\usepackage{babel}
\providecommand{\definitionname}{Definition}
\providecommand{\examplename}{Example}
\providecommand{\lemmaname}{Lemma}
\providecommand{\propositionname}{Proposition}
\providecommand{\remarkname}{Remark}
\providecommand{\theoremname}{Theorem}

\begin{document}
\title{\textsc{Stationary coupled KdV systems and their Stäckel representations}}
\maketitle

\begin{abstract}
In this article we investigate stationary coupled Korteweg--de Vries (cKdV) systems
and prove that every $N$-field stationary cKdV system can be written,
after a careful reparametrization of jet variables, as a classical
separable Stäckel system in $N+1$ different ways. For each of these
$N+1$ parametrizations we present an explicit map between the jet
variables and the separation variables of the system. Finally, we
show that each pair of Stäckel representations of the same stationary
cKdV system, when considered in the phase space extended by Casimir
variables, is connected by an appropriate finite-dimensional Miura map, which leads to
an $(N+1)$-Hamiltonian formulation for the stationary cKdV system.
\end{abstract}

{\bf Keywords:} cKdV hierarchy, DWW hierarchy, stationary flows, Stäckel systems, Miura maps

\tableofcontents{}

\section{Introduction}

Since the classical works of Bogoyavlensky, Novikov \cite{BN} and
Mokhov \cite{Mo} there has been a tremendous amount of research devoted
to connections between soliton hierarchies and their integrable finite-dimensional
reductions (see for example the bibliography in the survey \cite{blaszak1998}
as well as the introductory part in \cite{1}).
In \cite{ant87,ant92} the authors presented a fairly general construction of
finite-dimensional completely integrable Hamiltonian systems obtained
by reductions of the Korteweg--de Vries (KdV)  and the coupled Korteweg--de Vries (cKdV) soliton hierarchies to their stationary and restricted flows.

In this article we thoroughly investigate the concept of
stationary cKdV systems and their Stäckel representations,
obtaining new results in this area, as described below.

In Section~\ref{S2} we remind the construction of the $N$-component
cKdV hierarchy $\bm{u}_{t_{n}}=\bm{K}_{n}[\bm{u}]$
from the energy-dependent Schrödinger spectral problem
\cite{Alonso,AF}. We present these -- classical -- results in a novel setting of symmetric
bilinear differential operators $\mathcal{J}_{i}$, defined in (\ref{s10}),
which allows for presenting the pure algebraic recursion formulas
for conserved densities (see Proposition \ref{wazna} and formula
(\ref{beautiful})). In Subsection \ref{mhs} we present the integrated
form of kernels of all $N+1$ Hamiltonian operators $\mathbb{B}_{m}$ of the cKdV hierarchy (see Proposition \ref{propker})
(the $N$-component cKdV hierarchy is $N+1$ Hamiltonian) in the language
of bi-linear operators $\mathcal{J}_{i}$. Both these results are
new. In Subsection \ref{S24} we remind the reader the structure
of the Lax formulation of cKdV hierarchy.

In Section~\ref{3S} we study the main object of this paper: the stationary
cKdV\ system. This is a finite-dimensional system originating by
restricting the cKdV hierarchy to one of its stationary manifolds
$\mathcal{M}_{n}=\left\{ \left[\bm{u}\right]:K_{n+1}[\bm{u}]=0\right\} $.
In Subsection \ref{31S} we explain how each Hamiltonian operator
$\mathbb{B}_{m}$ generates a Hamiltonian foliation of the $\left(2n+N\right)$-dimensional
stationary manifold $\mathcal{M}_{n}$ into $2n$-dimensional leaves.
These Hamiltonian foliations are transversal to each other. In Subsection
\ref{32S} we introduce another set of $N+1$ foliations of $\mathcal{M}_{n}$
into $2n$-dimensional leaves, each related with imposing a particular
stationary constraint on Lax representation of the cKdV hierarchy.
We call these foliations Stäckel foliations of $\mathcal{M}_{n}$.
We conclude this subsection by proving that Hamiltonian foliation
and Stäckel foliation for the same $m$ coincide (see Theorem~\ref{foliacje}).
In Subsection~\ref{33S} we present the Lax representation
of the stationary cKdV system on leaves of each of Stäckel foliations
and then we show how these foliations lead in a very natural way to
separation (spectral) curves of Stäckel systems.

Section~\ref{4S} contains some basic facts about Stäckel systems
and their Lax formulation. This section is
introduced in order to keep the article self-contained.

In Section \ref{5S}, comparing the Lax formulation
of the stationary cKdV system as given in Subsection~\ref{33S} with
the Lax formulation of Stäckel systems as given in Section~\ref{4S}, we prove
the main theorem of this article (Theorem~\ref{great}), stating that
each stationary cKdV\ system can be represented as a Stäckel system
on $\mathcal{M}_{n}$ on $N+1$ ways.

In Section~\ref{6S} we prove that all Stäckel representations of
the stationary cKdV systems obtained in Section \ref{5S}, are connected
by a corresponding Miura map. This yields immediately $(N+1)$-Hamiltonian
formulations of the cKdV system on $\mathcal{M}_{n}$. Thus, in this
article we demonstrate that the $(N+1)$-Hamiltonian structure of
the cKdV hierarchy generates $(N+1)$-Hamiltonian structure of the
stationary cKdV system on $\mathcal{M}_{n}$. In consequence, the
leaves of a $m$-th Stäckel foliation become symplectic leaves of
the corresponding $m$-th Hamiltonian operator of the stationary cKdV
system.

Section~\ref{7S} is devoted to examples. The subsection \ref{P1} focuses on the Dispersive Water Waves (DWW) hierarchy (so that $N=2$) and its first part contains a very detailed presentation of all
main formulas and ingredients for the case $n=2$. The second part of this subsection presents the case $n=3$ and $m=0$. The subsection \ref{ostatni} is devoted to the case $N=4$, $n=2$ and $m=0$. This last example is non-generic since $n+m<N-1$.

In article \cite{1} we performed a similar analysis to described above, but for the KdV case, i.e.~the one-component case.

Recently, in \cite{FH,FH2}, the authors revisited the idea of the
stationary cKdV system. Their results focused mainly on the $N=2$
and $N=3$ case and on first two (i.e.~lowest) flows of the
hierarchy, while our analysis is general, i.e.~valid for all $N$
and all $n$. Besides, they do not consider the separability problem in the general setting, intensively studied in our article.

Let us also mention that
a new approach, partly related to our results, based on the
Nijenhuis geometry applicable to multi-component integrable
equations, i.e.~of KdV type, was proposed recently in
\cite{bolsIII,bolsIV}.

Let us mention that the construction inverse to the one presented in this article is also possible:
starting from a carefully chosen family of Stäckel systems one can reconstruct the related hierarchies of stationary systems and hence reconstruct the associated soliton hierarchy. The idea of such a construction appeared for the first time in 1999 during a visit of one of the authors (M.B.) to 
A.P.~Fordy in Leeds University. This idea was then explored for the
first time in \cite{blaszak2005,blaszak2006,blaszak2008}.

Finally, let us also mention that a similar idea, linking Stäckel systems with dispersionless field systems was introduced in
papers by Ferapontov and Fordy \cite{FF1,FF2,FF3} and the paper \cite{fordy}.

\section{Coupled KdV hierarchy\label{S2}}

 In this section we review, following  \cite{Alonso,AF}, and develop the classical construction of the cKdV hierarchy from the energy dependent
Schrödinger spectral problem as well its multi-Hamiltonian representation.

\subsection{Energy-dependent Schrödinger spectral problem}

The $N$-component cKdV hierarchy originates as the compatibility
condition of the energy dependent Schrödinger spectral problem with
the appropriate evolutionary part:
\begin{align}
 & \psi_{xx}+\mathbb{Q}\psi=0,\nonumber \\
 & \psi_{t_{k}}=\frac{1}{2}\mathbb{P}_{k}\psi_{x}-\frac{1}{4}(\mathbb{P}_{k})_{x}\psi,\qquad k=1,2,\ldots,\label{c1}
\end{align}
where
\[
\mathbb{Q}:=\sum_{i=0}^{N}u_{i}\lambda^{i},\qquad u_{N}\equiv-1.
\]
Here and below, $u_{i}=u_{i}(x,t_{1},t_{2},\ldots)$ are the dynamical
fields, while
$\mathbb{P}_{k}$ are so far unspecified functions of the spectral parameter $\lambda$
and jet variables in $u_i$.
The compatibility conditions $\left(\psi_{xx}\right)_{t_{k}}=\left(\psi_{t_{k}}\right)_{xx}$
of \eqref{c1} yield the following hierarchy of evolution equations
\begin{equation}
\mathbb{Q}_{t_{k}}=\frac{1}{4}(\mathbb{P}_{k})_{3x}+\mathbb{Q}(\mathbb{P}_{k})_{x}+\frac{1}{2}\mathbb{Q}_{x}\mathbb{P}_{k}\equiv J\mathbb{P}_{k},\qquad k=1,2,\ldots \label{c2}
\end{equation}
where
\begin{equation}
J\equiv\sum_{i=0}^{N}J_{i}\lambda^{i},\qquad J_{i}:=\frac{1}{4}\delta_{i0}\partial_{x}^{3}
+u_{i}\partial_{x}+\frac{1}{2}(u_{i})_x.\label{c22}
\end{equation}

Further, in accordance with \cite{Alonso,AF}, we assume that each
$\mathbb{P}_{k}$ is a polynomial of order $k-1$ in $\lambda$:
\begin{equation}
\mathbb{P}_{k}=\sum_{i=0}^{k-1}P_{k-1-i}\lambda^{i}\equiv P_{0}\lambda^{k-1}+\ldots+P_{k-2}\lambda+P_{k-1}.\label{c3}
\end{equation}
The conditions for the coefficients $P_{i}$ in (\ref{c3}) can be
obtained by requiring consistency of the evolution equations \eqref{c2}.
It turns out that the coefficients $P_{i}$ in \eqref{c2} or \eqref{c3}
actually do \emph{not} depend on $k$ and that they satisfy the equation
\begin{equation}
J\mathcal{P}\equiv\frac{1}{4}\mathcal{P}_{3x}+\mathbb{Q}\mathcal{P}_{x}+\frac{1}{2}\mathbb{Q}_{x}\mathcal{P}=0,\label{c5}
\end{equation}
where $\mathcal{P}$ is the Laurent series in $\lambda$:
\begin{equation}
\mathcal{P}=\sum_{i=0}^{\infty}P_{i}\lambda^{-i},\label{c4}
\end{equation}
and thus
\[
\mathbb{P}_{k}=\left[\lambda^{k-1}\mathcal{P}\right]_{\geqslant0},
\]
where $[\cdot]_{\geqslant0}$ means the projection on the part polynomial
in $\lambda$.

Notice that
\[
J\mathcal{P}=\sum_{i=0}^{N}\sum_{j=0}^{\infty}J_{i}P_{j}\lambda^{i-j}\equiv\sum_{k=-\infty}^{N}\text{ }\sum_{i=\text{max}\{0,k\}}^{N}J_{i}P_{i-k}\lambda^{k}
\]
and thus a straightforward consequence of \eqref{c5} is the equality
\begin{equation}
\sum_{i=0}^{N}J_{i}P_{i-k}=0,\qquad\text{where}\quad k\leqslant N.\label{c6}
\end{equation}
To simplify the notation, in \eqref{c6} and further we assume that
$P_{i}=0$ for $i<0$.\footnote{Thus in particular for $0\leqslant k\leqslant N$ the formula \eqref{c6}
takes the form
\[
\sum_{i=k}^{N}J_{i}P_{i-k}=0.
\]
} The condition \eqref{c5} not only provides us with (differential)
equations on the coefficients $P_{i}$ in \eqref{c3} or \eqref{c4},
but also assures the consistency of the equations from the hierarchy
\eqref{c2}.

There exists an alternative description of the cKdV hierarchy within
the above scheme. If we define
\begin{equation*}
\bar{\mathbb{P}}_{k}:=\bigl[\lambda^{k-1}\mathcal{P}\bigr]_{<0}\equiv\lambda^{k-1}\mathcal{P}-\mathbb{P}_{k}=\sum_{i=k}^{\infty}P_{i}\lambda^{k-1-i} 
\end{equation*}
then, due to \eqref{c5}, $J\mathbb{P}_{k}=-J\bar{\mathbb{P}}_{k}$, and
thus the hierarchy \eqref{c2} can alternatively be written as
\begin{equation}
\mathbb{Q}_{t_{k}}=-J\bar{\mathbb{P}}_{k}.\label{c10}
\end{equation}

Consequently, the cKdV hierarchy can be defined by \eqref{c2},
or equivalently by \eqref{c10}, provided that the condition \eqref{c5} holds.
The members of the cKdV hierarchy have the form of mutually commuting
$N$-component evolution equations,
\begin{equation}
\bm{u}_{t_{k}}=\bm{K}_{k},\qquad k=1,2,\ldots,\label{c11a}
\end{equation}
defined on an infinite-dimensional functional (smooth) manifold $\mathcal{F}$.
Coordinates on $\mathcal{F}$ are given by jet variables $[\bm{u}]:=(\bm{u},\bm{u}_{x},\bm{u}_{xx},\ldots)$,
with the (field) vector $\bm{u}:=(u_{0},\ldots,u_{N-1})^{T}$. More
explicitly, by \eqref{c2} and \eqref{c10}, the evolution equations
\eqref{c11a} can be written in the following two equivalent ways:
\begin{equation}
(u_{i-1})_{t_{k}}=(\bm{K}_{k})_{i}\equiv\sum_{j=0}^{i-1}J_{j}P_{j-i+k}\equiv-\sum_{j=i}^{N}J_{j}P_{j-i+k},\qquad i=1,\ldots,N.\label{c11b}
\end{equation}
The above equivalence is immediately apparent form the equality \eqref{c6}
and will be used in subsection \ref{mhs} for reconstructing the known \cite{AF} multi-Hamiltonian structure of the cKdV hierarchy.

For $N=1$ the cKdV hierarchy reduces to the KdV
hierarchy and for $N=2$ to the Dispersive Water Waves (DWW) hierarchy.

\subsection{Algebraic recursion}

To solve \eqref{c5} for the coefficients $P_{i}$ we need to integrate differential
equations provided by this condition or equivalently by \eqref{c6}.
Integrating \eqref{c5} we obtain
\begin{equation}
\frac{1}{2}\mathcal{PP}_{xx}-\frac{1}{4}\mathcal{P}_{x}^{2}+\mathbb{Q}\mathcal{P}^{2}
= c(\lambda) \equiv  -4\lambda^{N},\label{c7}
\end{equation}
where $c(\lambda)$ is a polynomial in $\lambda$ with coefficient being
constants of integration of differential equations provided by \eqref{c5}.
Here, for convenience, we make the simplest possible choice $c(\lambda)\equiv-4\lambda^{N}$ so that $P_{0}=2$. Other choices lead to hierarchies \eqref{c11a} with members being linear combinations of symmetries originating from the simplest possible choice as described above
and by a linear change of basis in the cKdV hierarchy we can always choose the polynomial 
constant $c(\lambda)$ as in \eqref{c7}.

We will now attempt to solve \eqref{c7} recursively for the coefficients $P_{i}$. Let us start
by defining the following auxiliary differential (symmetric) bi-linear operators:
\begin{equation}
\mathcal{J}(f,g):=\sum_{i=0}^{N}\mathcal{J}_{i}(f,g)\lambda^{i},\qquad\mathcal{J}_{i}(f,g):=-\frac{1}{16}\delta_{i0}\left(f_{xx}g+fg_{xx}-f_{x}g_{x}\right)-\frac{1}{4}u_{i}fg\label{s10}.
\end{equation}
Thus
\begin{equation}
\mathcal{J}(f,g)\equiv-\frac{1}{16}(f_{xx}g+fg_{xx}-f_{x}g_{x})-\frac{1}{4}\mathbb{Q}fg.\label{Jpisane}
\end{equation}
The bi-linear operators \eqref{s10} and the linear operators \eqref{c22}
are related by the following useful formulas:
\begin{equation}
\left [\mathcal{J}_{i}(f,g) \right ]_{x}=-\frac{1}{4}\left(fJ_{i}g+gJ_{i}f\right)\label{s12b}
\end{equation}
and
\begin{equation*}
\left [\mathcal{J}(f,f)\right ]_{x}=-\frac{1}{2}fJf. 
\end{equation*}

Using the above b-linear operators, the equation \eqref{c7} can be written in the equivalent form:
\begin{equation}
\mathcal{J}(\mathcal{P},\mathcal{P})=\lambda^{N}.\label{c8}
\end{equation}
Now, since
\[
\mathcal{J}(\mathcal{P},\mathcal{P})\equiv\sum_{i=0}^{N}\sum_{j=0}^{\infty}\sum_{k=0}^{\infty}\mathcal{J}_{i}(P_{j},P_{k})\lambda^{i-j-k}\equiv\sum_{k=-\infty}^{N}\sum_{i=0}^{N}\sum_{j=0}^{i-k}\mathcal{J}_{i}(P_{j},P_{i-j-k})\lambda^{k},
\]
by \eqref{s10} and since $P_{0}=2$ we find that $\mathcal{J}_{N}(P_{0},P_{0})=1$
and thus, from the above equation, it follows that
\begin{equation}
\sum_{i=0}^{N}\sum_{j=0}^{i-k}\mathcal{J}_{i}(P_{j},P_{i-j-k})=0\qquad\text{for}\qquad k<N.\label{c8a}
\end{equation}

In \cite{AF} it was noticed that we can always solve \eqref{c7}
for coefficients $P_i$ in terms of previously calculated (differential) expressions in 
$u_0,\ldots, u_{N-1}$. The following proposition provides us with a compact formula for that.

\begin{prop}
\label{wazna}The coefficients $P_{i}$ of the series \eqref{c4}
satisfy the following recursive formula
\begin{equation}
P_{k}=-\sum_{j=1}^{k-1}\mathcal{J}_{N}(P_{j},P_{k-j})-\sum_{i=0}^{N-1}\sum_{j=0}^{i+k-N}\mathcal{J}_{i}(P_{j},P_{i-j+k-N}),\qquad k=1,2,\ldots.\label{c8b}
\end{equation}
\end{prop}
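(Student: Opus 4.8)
The plan is to derive everything from the single algebraic identity \eqref{c8a}, which is just the statement that every coefficient of $\lambda^{m}$ with $m<N$ in $\mathcal{J}(\mathcal{P},\mathcal{P})=\lambda^{N}$ (see \eqref{c8}) vanishes. The point is to choose the right instance of \eqref{c8a} so that the coefficient $P_{k}$ we want to solve for appears, alone, as the top term. Concretely, for a fixed target index $k\geqslant1$ I would specialise \eqref{c8a} to the value $N-k$ of its free index (this is allowed precisely because $N-k<N$). In the resulting double sum, for each $i$ the index in the second slot of $\mathcal{J}_{i}$ is at most $i-(N-k)=i+k-N\leqslant k$, with equality attained only for $i=N$ and $j=0$; and by symmetry of $\mathcal{J}_{i}$ the index $k$ also occurs for $i=N$, $j=k$. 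So $P_{k}$ enters only through the $i=N$ block.

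Next I would peel that $i=N$ block, namely $\sum_{j=0}^{k}\mathcal{J}_{N}(P_{j},P_{k-j})$, off the rest of the sum. For $N\geqslant1$ the $\delta_{i0}$-term in \eqref{s10} is absent when $i=N$, so $\mathcal{J}_{N}(f,g)=-\tfrac14 u_{N}fg=\tfrac14 fg$ because $u_{N}\equiv-1$. Using this together with $P_{0}=2$, the two boundary terms $j=0$ and $j=k$ of that block combine, via the symmetry of $\mathcal{J}_{N}$, into $2\,\mathcal{J}_{N}(P_{0},P_{k})=\tfrac12 P_{0}P_{k}=P_{k}$, while the interior terms give $\sum_{j=1}^{k-1}\mathcal{J}_{N}(P_{j},P_{k-j})$ and the $i\leqslant N-1$ part is $\sum_{i=0}^{N-1}\sum_{j=0}^{i+k-N}\mathcal{J}_{i}(P_{j},P_{i-j+k-N})$. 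Thus the specialised \eqref{c8a} reads $P_{k}+\sum_{j=1}^{k-1}\mathcal{J}_{N}(P_{j},P_{k-j})+\sum_{i=0}^{N-1}\sum_{j=0}^{i+k-N}\mathcal{J}_{i}(P_{j},P_{i-j+k-N})=0$, and solving for $P_{k}$ gives exactly \eqref{c8b}.

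Finally I would check that \eqref{c8b} really is a recursion, i.e. that its right-hand side involves only $P_{0},\dots,P_{k-1}$: in the first sum both $j$ and $k-j$ lie in $\{1,\dots,k-1\}$, and in the second sum both $j$ and $i-j+k-N$ lie in $\{0,\dots,k-1\}$ whenever $0\leqslant i\leqslant N-1$; hence, starting from $P_{0}=2$, the coefficients $P_{1},P_{2},\dots$ are determined uniquely and successively. The only genuine work here is the index bookkeeping in the substitution of $N-k$ for the free index of \eqref{c8a} and the careful separation of the two boundary terms in the $i=N$ block — the rest is immediate from the symmetry of $\mathcal{J}_{N}$ and the normalisations $u_{N}\equiv-1$, $P_{0}=2$; there is no analytic obstacle at all.
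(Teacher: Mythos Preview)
Your proof is correct and follows essentially the same approach as the paper: both specialise \eqref{c8a} (equivalently \eqref{c8c}) at the index value $N-k$, observe that $P_{k}$ occurs only through the $i=N$ boundary terms $\mathcal{J}_{N}(P_{0},P_{k})=\tfrac{1}{2}P_{k}$, and solve for $P_{k}$. Your write-up is slightly more explicit than the paper's in separating the two boundary contributions $j=0$ and $j=k$ and in justifying $\mathcal{J}_{N}(f,g)=\tfrac{1}{4}fg$ via $u_{N}\equiv-1$, but the argument is the same.
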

Note that (\ref{c8b}) is indeed of a recursive form as the right
hand side contains $P_{i}$ only up to $P_{k-1}$. Note also that
this formula is purely differential-algebraic.

\begin{proof}
The recursion \eqref{c8b} is a consequence of the formula \eqref{c8a},
which can be rewritten in the form
\begin{equation}
\sum_{i=0}^{N}\sum_{j=0}^{i+k-N}\mathcal{J}_{i}(P_{j},P_{i-j+k-N})=0,\qquad\text{where}\quad k\geqslant1.\label{c8c}
\end{equation}
For fixed $k$ \eqref{c8c} involves only coefficients $P_{i}$ for
$0\leqslant i\leqslant k$ and $P_{k}$ can be found only in the terms
$\mathcal{J}_{N}(P_{0},P_{k})=\frac{1}{2}P_{k}$. Thus, solving for
$P_{k}$ we find the recursion \eqref{c8b}.
\end{proof}

Explicitly, (\ref{c8b}) can be written as
\begin{equation}
P_{k}=\frac{1}{4}\left[-\sum_{j=1}^{k-1}P_{k-j}P_{j}+\sum_{j=0}^{k-N}\left(\frac{1}{2}P_{k-j-N}(P_{j})_{xx}-\frac{1}{4}(P_{k-j-N})_{x}(P_{j})_{x}\right)+\sum_{i=0}^{N-1}\sum_{j=0}^{i+k-N}u_{i}P_{i-j+k-N}P_{j}\right],\label{beautiful}
\end{equation}
where $k=1,2,\ldots$. The formula (\ref{c8b}) (or (\ref{beautiful}))
was not present in literature before and it provides us with an effective
way of calculating higher coefficients $P_{i}$ from the lower ones without
any need of integration. The functions $P_{i}$ turn out to be components
of cosymmetries of the cKdV hierarchy (\ref{c11a}), see the formula (\ref{mham}) below.

\subsection{Multi-Hamiltonian structure\label{mhs}}

The evolution equations from the $N$-component cKdV hierarchy
\eqref{c11a} are multi-Hamiltonian with respect to $N+1$ mutually
compatible Hamiltonian operators $\mathbb{B}_{m}$: \cite{Alonso,AF}
\begin{equation}
\bm{u}_{t_{r}}=\bm{K}_{r}\equiv\mathbb{B}_{0}\bm{\gamma}_{r}=\ldots=\mathbb{B}_{m}\bm{\gamma}_{r-m}=\ldots=\mathbb{B}_{N}\bm{\gamma}_{r-N},\qquad m=0,1,\ldots,N,\label{mham}
\end{equation}
where $\bm{\gamma}_{r}=(P_{r},\ldots,P_{r+N-1})^{T}$ are cosymmetries
of the hierarchy with $P_{i}$ given by \eqref{c8b} or \eqref{beautiful}. The first and the last Hamiltonian structure (that
is with respect to the Hamiltonian operators $\mathbb{B}_{0}$ and
$\mathbb{B}_{N}$) are direct consequences of the two representations
the hierarchy expressed in \eqref{c11b}. The remaining Hamiltonian
structures with respect to the Hamiltonian operators $\mathbb{B}_{m}$
can be constructed taking $m$ first equations from the structure
with respect to the operator $\mathbb{B}_{N}$ and $N-m$ last equations
from the structure with respect to the operator~$\mathbb{B}_{0}$.
So, the Hamiltonian operators $\mathbb{B}_{m}$, for $m=0,1,\ldots N$,
act on an arbitrary covector field $\bm{\eta}=(\eta_{1},\ldots,\eta_{N})^{T}$
as
\begin{align*}
\bigl(\mathbb{B}_{m}\bm{\eta}\bigr)_{j} & =\sum_{i=0}^{j-1}J_{i}\eta_{i-j+m+1}\qquad\text{for}\qquad1\leqslant j\leqslant m,\\
\bigl(\mathbb{B}_{m}\bm{\eta}\bigr)_{j} & =-\sum_{i=j}^{N}J_{i}\eta_{i-j+m+1}\qquad\text{for}\qquad m+1\leqslant j\leqslant N,
\end{align*}
Thus, the operators $\mathbb{B}_{m}$ have the explicit form:
\begin{equation*}
\mathbb{B}_{m}=\left(\begin{array}{c|c}
\begin{matrix} &  & J_{0}\\
 & \iddots & \vdots\\
J_{0} & \cdots & J_{m-1}
\end{matrix} & 0\\
\hline 0 & \begin{matrix}-J_{m+1} & \cdots & -J_{N}\\
\vdots & \iddots\\
-J_{N}
\end{matrix}
\end{array}\right),\qquad m=0,\ldots,N. 
\end{equation*}

Any two consecutive Hamiltonian operators~$\mathbb{B}_{m}$
define (the same) hereditary recursion operator $\mathbb{R}$ through
\[
\mathbb{R}:=\mathbb{B}_{m+1}\mathbb{B}_{m}^{\,-1},\qquad m=0,1,\ldots,N-1,
\]
given explicitly by
\begin{equation}
\mathbb{R}=\left(\begin{array}{c|c}
\begin{matrix}0 & \cdots & 0\end{matrix} & -J_{0}J_{N}^{-1}\\
\hline \begin{matrix}1\\
 & \ddots\\
 &  & 1
\end{matrix} & \begin{matrix}-J_{1}J_{N}^{-1}\\
\vdots\\
-J_{N-1}J_{N}^{-1}
\end{matrix}
\end{array}\right), \label{R}
\end{equation}
so that one can generate all the vector fields of cKdV hierarchy and
their cosymmetries with the help of the recursion operator $\mathbb{R}$
through
\begin{equation*}
\bm{K}_{r+1}=\mathbb{R}^{r}\bm{K}_{1}\qquad\text{and}\qquad\bm{\gamma}_{r+1-N}=(\mathbb{R}^{\dagger})^{r}\bm{\gamma}_{1-N},\qquad r=1,2,\ldots, 
\end{equation*}
where $\bm{K}_{1}=\bm{u}_{x}$ and $\bm{\gamma}_{1-N}=(0,\ldots,P_{0})^{T}$.
Let us however point out that the above method of constructing the
hierarchy requires integrating the nonlocal operator (\ref{R}) while
our formula (\ref{c8b}) gives us an explicit (although recursive)
form of all $P_{k}$ that are obtained by purely differential operations.

In what follows we will need two propositions (Proposition
\ref{propker} and Proposition \ref{altpropker}) that characterize
the kernels of the Hamiltonian operators $\mathbb{B}_{m}$. In
order to formulate and prove these propositions we need to define
the following functions:
\begin{subequations}
\label{d1}
\begin{align}
f_{k,m}(\bm{\xi}) & :=\sum_{i=0}^{k-1}\sum_{j=i+1}^{k}\mathcal{J}_{i}(\xi_{m+j-k},\xi_{m+i-j+1}),\qquad 1\leqslant k\leqslant m,\label{d1a}\\
g_{k,m}(\bm{\xi}) & :=-2\sum_{i=k}^{N}\sum_{j=k}^{i}\mathcal{J}_{i}(P_{j-k},\xi_{m+i-j+1}),
\qquad m+1\leqslant k\leqslant N,\label{d1b}\\
\tilde{g}_{k,m}(\bm{\xi},) & :=\sum_{i=k}^{N}\sum_{j=k}^{i}\mathcal{J}_{i}(\xi_{m+j-k+1},\xi_{m+i-j+1}),\qquad m+1\leqslant k\leqslant N,\label{d1c}
\end{align}
\end{subequations}
 where $m\in\left\{ 0,\ldots,N\right\} $
and $\bm{\xi}=(\xi_{1},\ldots,\xi_{N})^{T}$ with $\xi_{i}=\xi_{i}[\bm{u}]$
is an arbitrary covector.
\begin{lem}
We have:
\begin{subequations}
\label{d2}
\begin{align}
[f_{k,m}(\bm{\xi})]_{x} & =-\frac{1}{2}\sum_{j=1}^{k}\xi_{m+j-k}\bigl(\mathbb{B}_{m}\bm{\xi}\bigr)_{j},\label{d2a}\\{}
[g_{k,m}(\bm{\xi})]_{x} & =-\frac{1}{2}\sum_{j=k}^{N}P_{j-k}\bigl(\mathbb{B}_{m}\bm{\xi}\bigr)_{j},\label{d2b}\\{}
[\tilde{g}_{k,m}(\bm{\xi})]_{x} & =\frac{1}{2}\sum_{j=k}^{N}\xi_{m+j-k+1}\bigl(\mathbb{B}_{m}\bm{\xi}\bigr)_{j}.\label{d2c}
\end{align}
\end{subequations}
\end{lem}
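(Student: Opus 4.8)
The plan is to differentiate each of the auxiliary functions in \eqref{d1a}--\eqref{d1c} term by term and to convert every resulting total $x$-derivative by means of the identity \eqref{s12b}, $\mathcal{J}_i(f,g)_x=-\frac14(fJ_ig+gJ_if)$; this trades each bilinear term for an expression in which the linear operators $J_i$ act on the entries of $\bm{\xi}$ (and, in the case of $g_{k,m}$, on the $P_j$). After that, one only has to reorganize the emerging double sums so that, for each fixed value $j$ of the ``free'' covector index, the coefficient collapses to a single component $(\mathbb{B}_m\bm{\xi})_j$ as given by the componentwise formula for $\mathbb{B}_m$ preceding \eqref{Bmy}. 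In each identity the relevant components of $\bm{\xi}$ fall into exactly one of the two index regimes of $\mathbb{B}_m$: $f_{k,m}$ uses the branch $(\mathbb{B}_m\bm{\xi})_j=\sum_{i=0}^{j-1}J_i\xi_{i-j+m+1}$ (for $1\le j\le k\le m$), while $g_{k,m}$ and $\tilde g_{k,m}$ use $(\mathbb{B}_m\bm{\xi})_j=-\sum_{i=j}^{N}J_i\xi_{i-j+m+1}$ (for $m+1\le k\le j\le N$).

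For \eqref{d2a}, applying \eqref{s12b} to \eqref{d1a} produces the symmetric bracket $\xi_{m+j-k}J_i\xi_{m+i-j+1}+\xi_{m+i-j+1}J_i\xi_{m+j-k}$ summed over $0\le i\le k-1$, $i+1\le j\le k$. The key point is that, for each fixed $i$, the map $j\mapsto i+k+1-j$ is an involution of $\{i+1,\dots,k\}$ that interchanges the indices $m+j-k$ and $m+i-j+1$; hence the two summands are equal upon summing over $j$, and the bracket collapses to $-\frac12\sum_{i=0}^{k-1}\sum_{j=i+1}^{k}\xi_{m+j-k}J_i\xi_{m+i-j+1}$. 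Reversing the order of summation (the region $0\le i\le k-1$, $i+1\le j\le k$ equals $1\le j\le k$, $0\le i\le j-1$) and reading off the first branch of $\mathbb{B}_m$ for each fixed $j$ gives exactly $-\frac12\sum_{j=1}^{k}\xi_{m+j-k}(\mathbb{B}_m\bm{\xi})_j$. Identity \eqref{d2c} is handled the same way: \eqref{s12b} applied to \eqref{d1c} again yields a symmetric bracket, the involution $j\mapsto k+i-j$ of $\{k,\dots,i\}$ swaps the two $\xi$-indices, and after reversing the summation order ($k\le i\le N$, $k\le j\le i$ becomes $k\le j\le N$, $j\le i\le N$) one recognizes $-(\mathbb{B}_m\bm{\xi})_j=\sum_{i=j}^{N}J_i\xi_{m+i-j+1}$ for each fixed $j$, which yields $\frac12\sum_{j=k}^{N}\xi_{m+j-k+1}(\mathbb{B}_m\bm{\xi})_j$.

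Identity \eqref{d2b} is the only genuinely different case, since the first slot in \eqref{d1b} carries $P_{j-k}$ instead of a component of $\bm{\xi}$, so there is no symmetry to exploit. Applying \eqref{s12b} gives $[g_{k,m}(\bm{\xi})]_x=\frac12\sum_{i=k}^{N}\sum_{j=k}^{i}\bigl(P_{j-k}J_i\xi_{m+i-j+1}+\xi_{m+i-j+1}J_iP_{j-k}\bigr)$, and I would split this into its two halves. Reversing the order of summation in the first half (as for $\tilde g_{k,m}$) and using the second branch of $\mathbb{B}_m$ produces precisely $-\frac12\sum_{j=k}^{N}P_{j-k}(\mathbb{B}_m\bm{\xi})_j$. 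For the second half, the substitution $l=i-j+k$ rewrites it as $\frac12\sum_{i=k}^{N}\sum_{l=k}^{i}\xi_{m+l-k+1}J_iP_{i-l}$; reversing the order of summation turns its inner sum into $\sum_{i=l}^{N}J_iP_{i-l}$, which vanishes identically by the recursion relation \eqref{c6} (in the form displayed in the footnote, valid for $0\le l\le N$). Hence the second half disappears and only the first survives, which is \eqref{d2b}. The one place where something beyond bookkeeping is needed is exactly this last step --- recognizing the leftover as an instance of \eqref{c6} --- so I expect that to be the main (though mild) obstacle; everything else is careful tracking of summation ranges against the two branches of $\mathbb{B}_m$.
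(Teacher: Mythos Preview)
Your proposal is correct and follows essentially the same route as the paper's own proof: differentiate each of \eqref{d1a}--\eqref{d1c} via the identity \eqref{s12b}, collapse the symmetric brackets in the $f_{k,m}$ and $\tilde g_{k,m}$ cases by the obvious index involution, swap the order of summation to recognize the relevant branch of $\mathbb{B}_m$, and in the $g_{k,m}$ case kill the extra half by the recursion \eqref{c6}. The paper carries out exactly these steps, with the same substitution (your $l=i-j+k$, relabeled back to $j$) for the vanishing term.
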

\begin{proof}
Differentiating \eqref{d1a} and \eqref{d1c} with respect to $x$
and using the relation \eqref{s12b} we see that
\begin{align*}
[f_{k,m}(\bm{\xi})]_{x} & =-\frac{1}{4}\sum_{i=0}^{k-1}\sum_{j=i+1}^{k}\bigl[\xi_{m+j-k}J_{i}\xi_{m+i-j+1}+\xi_{m+i-j+1}J_{i}\xi_{m+j-k}\bigr]\\
 & \equiv-\frac{1}{2}\sum_{i=0}^{k-1}\sum_{j=i+1}^{k}\xi_{m+j-k}J_{i}\xi_{m+i-j+1}=-\frac{1}{2}\sum_{j=1}^{k}\xi_{m+j-k}\sum_{i=0}^{j-1}J_{i}\xi_{m+i-j+1}
\end{align*}
and
\begin{align*}
[\tilde{g}_{k,m}(\bm{\xi})]_{x} & =-\frac{1}{4}\sum_{i=k}^{N}\sum_{j=k}^{i}\bigl[\xi_{m+j-k+1}J_{i}\xi_{m+i-j+1}+\xi_{m+i-j+1}J_{i}\xi_{m+j-k+1}\bigr]\\
 & \equiv-\frac{1}{2}\sum_{i=k}^{N}\sum_{j=k}^{i}\xi_{m+j-k+1}J_{i}\xi_{m+i-j+1}=-\frac{1}{2}\sum_{j=k}^{N}\xi_{m+j-k+1}\sum_{i=j}^{N}J_{i}\xi_{m+i-j+1}.
\end{align*}
Hence, we obtain \eqref{d2a} and \eqref{d2c}. Differentiating \eqref{d1b}
we obtain
\begin{align*}
[g_{k,m}(\bm{\xi})]_{x} & =\frac{1}{2}\sum_{i=k}^{N}\sum_{j=k}^{i}\bigl[\xi_{m+i-j+1}J_{i}P_{j-k}+P_{j-k}J_{i}\xi_{m+i-j+1}\bigr]\\
 & =\frac{1}{2}\sum_{i=k}^{N}\sum_{j=k}^{i}P_{j-k}J_{i}\xi_{m+i-j+1}=\frac{1}{2}\sum_{j=k}^{N}P_{j-k}\sum_{i=j}^{N}J_{i}\xi_{m+i-j+1}
\end{align*}
as
\[
\sum_{i=k}^{N}\sum_{j=k}^{i}\xi_{m+i-j+1}J_{i}P_{j-k}\equiv\sum_{i=k}^{N}\sum_{j=k}^{i}\xi_{m+j-k+1}J_{i}P_{i-j}\equiv\sum_{j=k}^{N}\xi_{m+j-k+1}\sum_{i=j}^{N}J_{i}P_{i-j}=0,
\]
where the last equality is a consequence of \eqref{c6} since here $j\geqslant0$.
Hence, \eqref{d2b} follows.
\end{proof}
The following proposition describes the form of kernels
of all Hamiltonian operators $\mathbb{B}_{m}$.
\begin{prop}
\label{propker} For fixed $m\in\{0,1,\ldots,N\}$, $\bm{\xi}\in\ker\mathbb{B}_{m}$
(that is $\mathbb{B}_{m}\bm{\xi}=0$) if and only if
\begin{subequations}
\label{d3}
\begin{align}
f_{k,m}(\bm{\xi}) & =c_{k}\qquad\text{for}\qquad1\leqslant k\leqslant m,\label{d3a}\\
g_{k,m}(\bm{\xi}) & =c_{k}\qquad\text{for}\qquad m+1\leqslant k\leqslant N,\label{d3b}
\end{align}
\end{subequations}
 where $c_{1},\ldots,c_{N}$ are arbitrary constants.
\end{prop}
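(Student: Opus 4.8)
The plan is to read the proposition off the preceding Lemma, which was manufactured for exactly this purpose. Writing $\bm{\eta}:=\mathbb{B}_m\bm{\xi}$ with components $\eta_j$, formulas \eqref{d2a} and \eqref{d2b} express $[f_{k,m}(\bm{\xi})]_x$ (for $1\le k\le m$) and $[g_{k,m}(\bm{\xi})]_x$ (for $m+1\le k\le N$) as \emph{triangular} linear combinations of the $\eta_j$. Consequently the statement is trivial in one direction and amounts to inverting a triangular system in the other.

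First I would dispose of the implication that $\mathbb{B}_m\bm{\xi}=0$ forces \eqref{d3}: if $\bm{\eta}=0$, the right-hand sides of \eqref{d2a} and \eqref{d2b} vanish identically, so $[f_{k,m}(\bm{\xi})]_x=0$ and $[g_{k,m}(\bm{\xi})]_x=0$; integrating over the connected domain yields $f_{k,m}(\bm{\xi})=c_k$ and $g_{k,m}(\bm{\xi})=c_k$ with appropriate constants. For the converse, assuming \eqref{d3a}--\eqref{d3b}, I would differentiate in $x$ and invoke \eqref{d2a}, \eqref{d2b} to obtain
\begin{align*}
\sum_{j=1}^{k}\xi_{m+j-k}\,\eta_j &= 0 \qquad (1\le k\le m),\\
\sum_{j=k}^{N}P_{j-k}\,\eta_j &= 0 \qquad (m+1\le k\le N).
\end{align*}
By the block form of $\mathbb{B}_m$ in \eqref{Bmy} the first group involves only $\eta_1,\dots,\eta_m$ and the second only $\eta_{m+1},\dots,\eta_N$, so the two systems decouple. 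The second is triangular with constant diagonal $P_0=2$: descending back-substitution from $k=N$, where it reads $2\eta_N=0$, gives $\eta_{m+1}=\dots=\eta_N=0$. The first is triangular with diagonal $\xi_m$: ascending substitution from $k=1$, where it reads $\xi_m\eta_1=0$, and cancelling $\xi_m$ gives $\eta_1=\dots=\eta_m=0$. Hence $\mathbb{B}_m\bm{\xi}=0$.

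The obstacle here is essentially organisational: arranging the index shifts so that the decoupled, triangular shape of the two linear systems in the $\eta_j$ becomes visible, and making sure the diagonal coefficients ($P_0=2$, respectively $\xi_m$) may indeed be cancelled. All the genuine analytic content has already been spent in proving \eqref{d2a}--\eqref{d2c} in the Lemma, so nothing further is really required.
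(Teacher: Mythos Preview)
Your argument is correct and follows the same route as the paper's own proof: differentiate \eqref{d3a}--\eqref{d3b}, invoke \eqref{d2a}--\eqref{d2b}, and then peel off the components $(\mathbb{B}_m\bm{\xi})_j$ one at a time from the resulting triangular systems; the reverse direction is immediate integration. The paper presents the two implications in the opposite order and phrases the back-substitution simply as ``recursively implies'', but the content is identical. Your concern about cancelling $\xi_m$ is a fair one that the paper also leaves implicit; in the applications (where $\bm{\xi}=\bm{\gamma}_{n+1-m}$) the diagonal entry is $P_n$, which is generically nonzero, so neither proof pauses over this point.
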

\begin{proof}

Let us assume that the conditions (\ref{d3}) hold. Differentiating
\eqref{d3a} and using \eqref{d2a} we obtain the system
\begin{equation}
[f_{k,m}(\bm{\xi})]_{x}\equiv-\frac{1}{2}\sum_{j=1}^{k}\xi_{m+j-k}\bigl(\mathbb{B}_{m}\bm{\xi}\bigr)_{j}=0,\qquad1\leqslant k\leqslant m,\label{d4a}
\end{equation}
which recursively implies that
\[
(\mathbb{B}_{m}\bm{\xi}\bigr)_{j}=0\qquad\text{for}\qquad1\leqslant j\leqslant m.
\]
Notice that the implication is correct since the formula \eqref{d4a} can be interpreted as the matrix product of the triangular matrix $\xi_{m+j-k}$ with the constant non-zero diagonal term
$\xi_m$ with the vector $\mathbb{B}_{m}\bm{\xi}$.
For $m\geqslant 1$ always $\xi_m\neq 0$, which follows from the condition: $J_0\xi_m = 0$, required by the fact that $\bm{\xi}\in\ker\mathbb{B}_{m}$.

Next, differentiating \eqref{d3b} and using \eqref{d2b} we have
\begin{equation}
[g_{k,m}(\bm{\xi})]_{x}=-\frac{1}{2}\sum_{j=k}^{N}P_{j-k}\bigl(\mathbb{B}_{m}\bm{\xi}\bigr)_{j}=0,\qquad m+1\leqslant k\leqslant N,\label{d4b}
\end{equation}
from which
\[
(\mathbb{B}_{m}\bm{\xi}\bigr)_{j}=0\qquad\text{for}\qquad m+1\leqslant j\leqslant N.
\]
Thus, the conditions (\ref{d3}) imply that $\mathbb{B}_{m}\bm{\xi}=0$.
The reverse implication is a matter of straightforward integration
of \eqref{d4a} and \eqref{d4b}.
\end{proof}

Later we will also need an alternative description of kernels
of $\mathbb{B}_{m}$, contained in the following proposition.
\begin{prop}
\label{altpropker} Let us fix $m\in\{0,1,\ldots,N\}$ and a natural
$n>0$ such that $n+m\leqslant N-2$. Then $\bm{\xi}\in\ker\mathbb{B}_{m}$
(that is $\mathbb{B}_{m}\bm{\xi}=0$) if and only if
\begin{subequations}
\label{d5}
\begin{align}
f_{k,m}(\bm{\xi}) & =c_{k}\qquad\text{for}\qquad1\leqslant k\leqslant m,\label{d5a}\\
g_{k,m}(\bm{\xi})+\tilde{g}_{k+n+1,m}(\bm{\xi}) & =c_{k}\qquad\text{for}\qquad m+1\leqslant k\leqslant N-n-1,\label{d5b}\\
g_{k,m}(\bm{\xi}) & =c_{k}\qquad\text{for}\qquad N-n\leqslant k\leqslant N,\label{d5c}
\end{align}
\end{subequations}
 where $c_{1},\ldots,c_{N}$ are arbitrary constants.
\end{prop}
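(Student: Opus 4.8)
The plan is to compare Proposition \ref{altpropker} with the already-proven Proposition \ref{propker}, reducing the new claim to an identity satisfied by the auxiliary functions. Concretely, both propositions characterize membership in $\ker\mathbb{B}_{m}$, and they share the conditions \eqref{d5a}/\eqref{d3a} on the ``$f$-block'' $1\leqslant k\leqslant m$ verbatim. So the heart of the matter is to show that, \emph{modulo constants}, the middle conditions \eqref{d5b} together with the tail conditions \eqref{d5c} are equivalent to the conditions \eqref{d3b} of Proposition \ref{propker}, under the standing hypothesis $n+m\leqslant N-2$ (which guarantees the ranges $m+1\leqslant k\leqslant N-n-1$ and $N-n\leqslant k\leqslant N$ partition $\{m+1,\ldots,N\}$ with no overlap and that the index $k+n+1$ appearing in $\tilde g_{k+n+1,m}$ stays $\leqslant N$).

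First I would differentiate. By the Lemma, specifically \eqref{d2b} and \eqref{d2c}, we have
\begin{equation}
[g_{k,m}(\bm{\xi})+\tilde{g}_{k+n+1,m}(\bm{\xi})]_{x}=-\frac{1}{2}\sum_{j=k}^{N}P_{j-k}\bigl(\mathbb{B}_{m}\bm{\xi}\bigr)_{j}+\frac{1}{2}\sum_{j=k+n+1}^{N}\xi_{m+j-k-n}\bigl(\mathbb{B}_{m}\bm{\xi}\bigr)_{j}.\label{aux-step}
\end{equation}
Assume now the conditions \eqref{d5} hold. From \eqref{d5a}, exactly as in the proof of Proposition \ref{propker}, differentiation and \eqref{d2a} give recursively $\bigl(\mathbb{B}_{m}\bm{\xi}\bigr)_{j}=0$ for $1\leqslant j\leqslant m$. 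From \eqref{d5c}, differentiation and \eqref{d2b} give, for $N-n\leqslant k\leqslant N$, the triangular system $\sum_{j=k}^{N}P_{j-k}\bigl(\mathbb{B}_{m}\bm{\xi}\bigr)_{j}=0$; since the leading coefficient is $P_{0}=2\neq0$, this forces $\bigl(\mathbb{B}_{m}\bm{\xi}\bigr)_{j}=0$ for $N-n\leqslant j\leqslant N$. Feeding this vanishing tail back into \eqref{aux-step}: for $k$ in the range $m+1\leqslant k\leqslant N-n-1$ the second sum runs over $j\geqslant k+n+1\geqslant m+n+2$, and since $k+n+1$ can be as small as $m+n+2$ while we have just shown all components with index $\geqslant N-n$ vanish — here one uses the descending induction on $k$: for $k=N-n-1$ the second sum is $\sum_{j=N}^{N}$, already known to vanish, so \eqref{d5b} differentiated gives $\sum_{j=N-n-1}^{N}P_{j-(N-n-1)}(\mathbb{B}_m\bm\xi)_j=0$, forcing $(\mathbb{B}_m\bm\xi)_{N-n-1}=0$; then decrease $k$ by one and repeat. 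This propagates $\bigl(\mathbb{B}_{m}\bm{\xi}\bigr)_{j}=0$ down to $j=m+1$, completing $\mathbb{B}_{m}\bm{\xi}=0$.

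For the reverse implication, assume $\mathbb{B}_{m}\bm{\xi}=0$. Then every right-hand side in \eqref{d2a}, \eqref{d2b}, \eqref{d2c} vanishes identically, so $f_{k,m}(\bm{\xi})$, $g_{k,m}(\bm{\xi})$ and $\tilde g_{k,m}(\bm{\xi})$ are each constant in $x$; integrating yields \eqref{d5a}, \eqref{d5c}, and (adding the two relevant constants) \eqref{d5b}. I expect the main obstacle to be purely bookkeeping: verifying that the shifted index $k+n+1$ in $\tilde g_{k+n+1,m}$ stays within the legal range $\{1,\ldots,N\}$ where the Lemma applies — this is exactly where $n+m\leqslant N-2$ is needed, since the largest relevant value is $k+n+1\leqslant(N-n-1)+n+1=N$ and the smallest is $(m+1)+n+1=m+n+2\leqslant N$ — and that the descending induction in the middle block is correctly seeded by the tail block \eqref{d5c}; no genuinely new computation beyond the Lemma is required.
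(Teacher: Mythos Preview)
Your proof is correct and follows essentially the same route as the paper's: first use \eqref{d5a} and \eqref{d5c} to kill the components $(\mathbb{B}_{m}\bm{\xi})_{j}$ for $1\leqslant j\leqslant m$ and $N-n\leqslant j\leqslant N$, then feed this into the differentiated form of \eqref{d5b} and run a descending recursion on $k$ through the middle block. The paper truncates the sums in your \eqref{aux-step} using the already-established vanishing before writing them down, whereas you carry the full sums from the Lemma and invoke the vanishing inside the induction; the substance is identical.
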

\begin{proof}
Just like in the previous proof, one can see that \eqref{d5a} and
\eqref{d5c} imply that
\begin{equation}
(\mathbb{B}_{m}\bm{\xi}\bigr)_{j}=0\qquad\text{for}\qquad1\leqslant j\leqslant m\quad\text{and}\quad N-n\leqslant j\leqslant N.\label{d6}
\end{equation}
Finally, differentiating \eqref{d5b} and taking into account \eqref{d6},
\begin{equation}
[g_{k,m}(\bm{\xi})+\tilde{g}_{k+n+1,m}(\bm{\xi})]_{x}=-\frac{1}{2}\sum_{j=k}^{N-n-1}P_{j-k}\bigl(\mathbb{B}_{m}\bm{\xi}\bigr)_{j}+\frac{1}{2}\sum_{j=k+n+1}^{N-n-1}\xi_{m+j-k-n}\bigl(\mathbb{B}_{m}\bm{\xi}\bigr)_{j}=0,\label{d7}
\end{equation}
where $m+1\leqslant k\leqslant N-n-1$. Thus, \eqref{d7} recursively
implies that
\begin{equation}
(\mathbb{B}_{m}\bm{\xi}\bigr)_{j}=0\qquad\text{for}\qquad m+1\leqslant j\leqslant N-n-1.\label{d8}
\end{equation}
Hence, collecting together \eqref{d6} and \eqref{d8} we actually
see that from the conditions (\ref{d5}) it follows that $\mathbb{B}_{m}\bm{\xi}=0$.
The reverse implication is straightforward.
\end{proof}
\begin{rem}
Let us notice that the characterizations of the kernels of the Hamiltonian
operators $\mathbb{B}_{m}$, as in the above propositions, are not
the only possible ones. For instance, Proposition \ref{propker} would
still be correct if the functions $g_{k,m}$ in \eqref{d3b} were
entirely replaced by $\tilde{g}_{k,m}$, as defined in \eqref{d1c}.
Choices made in Propositions \ref{propker} and \ref{altpropker}
are dictated by later needs.
\end{rem}

\subsection{Lax representation\label{S24}}

Introducing the vector eigenfunction $\Psi=(\psi,\psi_{x})^{T}$ we
can rewrite the linear problems \eqref{c1} for the cKdV hierarchy
(\ref{c11a}) in the form
\begin{equation}
\Psi_{t_{k}}=\mathbb{V}_{k}\Psi,\qquad k=1,2,3,\ldots,\label{l1}
\end{equation}
where $t_{1}\equiv x$ and
\begin{equation}
\mathbb{V}_{1}=\begin{pmatrix}0 & 1\\
-\mathbb{Q} & 0
\end{pmatrix}\text{ \ \ \ and \ \ \ }\mathbb{V}_{k}=\begin{pmatrix}-\frac{1}{4}\left(\mathbb{P}_{k\,}\right)_{x} & \frac{1}{2}\mathbb{P}_{k}\\
-\frac{1}{4}\left(\mathbb{P}_{k}\right)_{xx}-\frac{1}{2}\mathbb{Q}\mathbb{P}_{k} & \frac{1}{4}\left(\mathbb{P}_{k}\right)_{x}
\end{pmatrix},\qquad k=2,3,\ldots.\label{l11}
\end{equation}
see \cite{AF}. Then, by the compatibility conditions $(\Psi_{x})_{t_{k}}=(\Psi_{t_{k}})_{x}$
and $(\Psi_{t_{k}})_{t_{s}}=(\Psi_{t_{k}})_{t_{s}}$ we obtain, respectively,
the Lax equations
\begin{equation}
\frac{d}{dt_{k}}\mathbb{V}_{1}-\frac{d}{dx}\mathbb{V}_{k}+[\mathbb{V}_{1},\mathbb{V}_{k}]=0,\qquad k=1,2,\ldots\,\label{l2}
\end{equation}
and the zero-curvature equations
\begin{equation}
\frac{d}{dt_{k}}\mathbb{V}_{s}-\frac{d}{dt_{s}}\mathbb{V}_{k}+[\mathbb{V}_{s},\mathbb{V}_{k}]=0,\qquad s,k=2,3,\ldots.\label{l3}
\end{equation}
Thus, $\mathbb{V}_{1}$ can be considered as the Lax matrix of the
cKdV hierarchy, the matrices $\mathbb{V}_{k}$ for $k>1$ play the role of auxiliary matrices while the hierarchy itself can be obtained by the matrix
Lax equations \eqref{l2}. The zero-curvature equations \eqref{l3}
are differential consequences of the hierarchy.

\section{Stationary cKdV systems\label{3S}}

In this section we consider (see also the special case considered
in \cite{1}) stationary cKdV systems. A stationary cKdV system is
a system that originates by restricting the (infinite) cKdV hierarchy
(\ref{c11a}) to one of its stationary manifolds. We will then show
that the resulting finite-dimensional integrable system can be in
a very natural way associated with an appropriate Stäckel system.
Actually, due to the fact that the $N$-component cKdV hierarchy is
$(N+1)$-hamiltonian, see (\ref{mham}), we can perform this association
on $N+1$ different ways.

The $(n+1)$-th stationary flow of the cKdV hierarchy \eqref{c11a}
is determined by the following condition:
\begin{equation}
\bm{u}_{t_{n+1}}=0\qquad\text{or equivalently}\qquad\bm{K}_{n+1}=0,\label{r1}
\end{equation}
which by \eqref{c11b} takes the form of a system of $N$ differential equations:
\begin{equation}
(\bm{K}_{n+1})_{j}\equiv\sum_{i=0}^{j-1}J_{i}P_{i-j+n+1}\equiv-\sum_{i=j}^{N}J_{i}P_{i-j+n+1}=0,\qquad j=1,\ldots,N.\label{s15}
\end{equation}

The stationary condition \eqref{r1} provides a constraint (or rather
a system of constraints) on the infinite-dimensional (functional)
manifold $\mathcal{F}$, on which the cKdV hierarchy is defined, reducing
it to the finite-dimensional submanifold, $n$-th stationary manifold:
\[
\mathcal{M}_{n}=\left\{ [\bm{u}]\in\mathcal{F}\,|\,\bm{K}_{n+1}=0\right\} .
\]
Due to complete integrability of the cKdV hierarchy, the constraints
provided by \eqref{r1} are invariant with respect to all the flows
of the hierarchy. As a result, the infinite hierarchy \eqref{c11a}
reduces to the finite system \eqref{r2} described in the following
definition.
\begin{defn}
\label{maindef} The $n$-th stationary cKdV system is the system
consisting of the first $n$ evolution equations from the cKdV hierarchy
\eqref{c11a} together with its $(n+1)$-th stationary flow:
\begin{equation}
\bm{u}_{t_{1}}=\bm{K}_{1},\qquad\bm{u}_{t_{2}}=\bm{K}_{2},\qquad\ldots,\qquad\bm{u}_{t_{n}}=\bm{K}_{n},\qquad\bm{K}_{n+1}=0.\label{r2}
\end{equation}
\end{defn}
From the recursive formula \eqref{beautiful} one can observe that the
cumulative differential order (i.e.~the sum of differential orders
of all components) of the vector field $\bm{K}_{k}$ increases by
two as $k$ increases by one. Thus, the cumulative order of $(n+1)$-th
vector field $\bm{K}_{n+1}$ is equal $N+2n$, which means that the
vector field $\bm{K}_{n+1}$ depends on $2(N+n)$ jet variables. Since
the stationary condition \eqref{s15} provides $N$ independent constraints
it follows that the stationary manifold $\mathcal{M}_{n}$ is $(2n+N)$-dimensional.

From the integrability of the cKdV hierarchy \eqref{c11a} it
follows that the manifold $\mathcal{M}_{n}$ is invariant with
respect to all the flows of the hierarchy and thus all the vector
fields $\bm{K}_{r}$ in (\ref{r2}) are tangent to
$\mathcal{M}_{n}$. They still pairwise commute since they commute
on the ambient space $\mathcal{F}$. Note that also the higher
vector fields $\bm{K}_{n+2},\bm{K}_{n+3},\ldots$ properly reduce
to $\mathcal{M}_{n}$, however, we do not study these reductions in
this article.


The system (\ref{r2}) is the main object of our study. In \cite{1}
the authors studied the particular case of (\ref{r2}) for $N=1$,
that is the stationary KdV system.

\subsection{Hamiltonian foliations of $\mathcal{M}_{n}$\label{31S}}

The (differential) order of the system of differential constraints
\eqref{s15} can be lowered
by integrating them with respect to the spatial variable $x$. This procedure
provides us with a system of differential constraints parameterized
by $N$ integration constants. It turns out that it can be done on
$N+1$ different ways, due to the $(N+1)$ Hamiltonian structures
of cKdV hierarchy \eqref{mham}. In particular, for the $m$-th Hamiltonian
representation $\mathbb{B}_{m}\bm{\gamma}_{n+1-m}=0$ of the $(n+1)$-th
stationary cKdV flow \eqref{r1} we can see that on the stationary
manifold $\mathcal{M}_{n}$ the covector $\bm{\gamma}_{n+1-m}$ belongs
to the kernel of the respective Hamiltonian operator $\mathbb{B}_{m}$.
As result, the `integrated' constraints with respect to the $m$-th
Hamiltonian structure can be obtained requiring that $\bm{\gamma}_{n+1-m}$
fulfill the conditions from Proposition~\ref{propker} or Proposition~\ref{altpropker}.

Let us, by setting $\bm{\xi}=\bm{\gamma}_{n+1-m}$, where $\xi_{i}=P_{n-m+i}$,
in (\ref{d1}), define the following auxiliary functions:
\begin{subequations}
\begin{align}
f_{k} & :=f_{k,m}(\bm{\gamma}_{n+1-m})\equiv\sum_{i=0}^{k-1}\sum_{j=i+1}^{k}\mathcal{J}_{i}(P_{j+n-k},P_{i-j+n+1}),\label{fk}\\
g_{k} & :=g_{k,m}(\bm{\gamma}_{n+1-m})\equiv-2\sum_{i=k}^{N}\sum_{j=k}^{i}\mathcal{J}_{i}(P_{j-k},P_{i-j+n+1}),\label{gk}\\
\tilde{g}_{k} & :=\tilde{g}_{k,m}(\bm{\gamma}_{n+1-m})\equiv\sum_{i=k}^{N}\sum_{j=k}^{i}\mathcal{J}_{i}(P_{j+n-k+1},P_{i-j+n+1}),\label{ggk}
\end{align}
\end{subequations}
 where in each case $1\leqslant k\leqslant N$. Observe that they
do not depend on $m$.

Thus, by Proposition \ref{propker}, by setting $\bm{\xi}=\bm{\gamma}_{n+1-m}$
in (\ref{d3}) we find the following integrated form of the $(n+1)$-th
stationary cKdV flow \eqref{r1}:
\begin{subequations}\label{r3}
\begin{align}
f_{k} & =c_{k}\qquad\text{for}\qquad1\leqslant k\leqslant m,\label{r3a}\\
g_{k} & =c_{k}\qquad\text{for}\qquad m+1\leqslant k\leqslant N, \label{r3b}
\end{align}
\end{subequations}
 where $c_{1},\ldots,c_{N}$ are (arbitrary) integration constants.
Moreover, for cases that $n+m<N-1$, by Proposition \ref{altpropker},
setting $\bm{\xi}=\bm{\gamma}_{n+1-m}$ in (\ref{d5}) yields the
following alternative integrated form of the $(n+1)$-th stationary
cKdV flow \eqref{r1}:
\begin{subequations}\label{r5}
\begin{align}
f_{k} & =c_{k}\qquad\text{for}\qquad1\leqslant k\leqslant m,\label{r5a}\\
g_{k}+\tilde{g}_{k+n+1} & =c_{k}\qquad\text{for}\qquad m+1\leqslant k\leqslant N-n-1,\label{r5b}\\
g_{k} & =c_{k}\qquad\text{for}\qquad N-n\leqslant k\leqslant N,\label{r5c}
\end{align}
\end{subequations}
 Therefore, for each $m\in\{0,1,\ldots,N\}$ the above relations define
a $2n$-dimensional foliation of the stationary manifold $\mathcal{M}_{n}$,
parameterized by the vector $\bm{c}\equiv(c_{1},\ldots,c_{N})$. The leaves
of this foliation are given by
\begin{equation}
\mathcal{M}_{n,m}^{\bm{c}}:=\left\{ [\bm{u}]\in\mathcal{M}_{n}\,|\,\text{s.t.~(\ref{r3}) for \ensuremath{n+m\geqslant N-1} or (\ref{r5}) for \ensuremath{n+m<N-1}}\right\} ,\label{leaf}
\end{equation}
so that for each $m$:
\begin{equation}
\mathcal{M}_{n}\equiv\bigcup\limits _{\bm{c}\in\mathbb{R}^{N}}\mathcal{M}_{n,m}^{\bm{c}}.\label{fol}
\end{equation}
We will refer to this foliation as \emph{Hamiltonian foliation} of
$\mathcal{M}_{n}$. The case $n+m\geqslant N-1$ will be referred
to as the\emph{ generic }case while the case $n+m<N-1$ we will call
the \emph{non-generic }case.
\begin{rem}
The Hamiltonian foliation \eqref{fol} of the $n$-th stationary
manifold $\mathcal{M}_{n}$ could be defined in a simpler way, through
the leaves
\[
\mathcal{M}_{n,m}^{\bm{c}}:=\left\{ [\bm{u}]\in\mathcal{M}_{n}\,|\,\text{s.t.~(\ref{r3})}\right\} ,
\]
given only by the relations (\ref{r3}), i.e.~without introducing
the non-generic case. We will however use the definition
\eqref{leaf} which is motivated by later needs.
\end{rem}

\subsection{Stäckel foliations of $\mathcal{M}_{n}$\label{32S}}

The stationary manifold $\mathcal{M}_{n}$ can also be foliated in
a way that allows for representing a given cKdV
stationary system as a Stäckel system defined on leaves of this foliation. We will
therefore call this foliation (see its definition below) of $\mathcal{M}_{n}$
the \emph{Stäckel foliation}. In fact, we will construct $N+1$ different
Stäckel foliations of $\mathcal{M}_{n}$, one for each choice of $m\in\left\{ 0,\ldots,N\right\} $.

We start by observing that the $(n+1)$-th stationary flow \eqref{r1}
of the cKdV hierarchy can be written as
\begin{equation}
\mathbb{Q}_{t_{n+1}} = (\mathbb{P}_{n+1})_{x}\mathbb{Q}+\frac{1}{2}\mathbb{P}_{n+1}\mathbb{Q}_{x}+\frac{1}{4}(\mathbb{P}_{n+1})_{3x}\equiv J\mathbb{P}_{n+1}=0.\label{s1}
\end{equation}
Note that this equation contains not only the stationary flow $K_{n+1}=0$
but also the first $n$ equations of the infinite recursion (\ref{c5})
on $P_{k}$. The stationary condition \eqref{s1} can be integrated
once to the form:
\begin{equation}
-\frac{1}{8}\mathbb{P}_{n+1}(\mathbb{P}_{n+1})_{xx}+\frac{1}{16}(\mathbb{P}_{n+1})_{x}^{2}-\frac{1}{4}\mathbb{Q}\mathbb{P}_{n+1}^{\,2}=C(\lambda),\label{s2a}
\end{equation}
or in our shorthand notation, using \eqref{Jpisane}, as
\begin{equation}
\mathcal{J}(\mathbb{P}_{n+1},\mathbb{P}_{n+1})=C(\lambda),\label{s2b}
\end{equation}
where $C(\lambda)$ is an appropriate polynomial in $\lambda$ with
coefficients being integration constants that follow from the next
proposition. Below we investigate the left hand side of (\ref{s2b})
more thoroughly.
\begin{prop}
\label{prophk} The left-hand side of \eqref{s2b}, or \eqref{s2a},
takes the following explicit form
\begin{equation}
\mathcal{J}(\mathbb{P}_{n+1},\mathbb{P}_{n+1})\equiv\lambda^{2n+N}+\sum_{k=0}^{n+N-1}h_{k}\lambda^{k},\label{s13}
\end{equation}
where the coefficients $h_{k}$ are differential functions of $\bm{u}$
given by
\begin{equation}
h_{k}=\sum_{i=0}^{N}\sum_{j=i}^{k}\mathcal{J}_{i}(P_{n-k+j},P_{n+i-j}).\label{s14}
\end{equation}
For $n+N\leq k<2n+N$ the coefficients $h_{k}$ in (\ref{s14}) vanish
and $h_{2n+N}=1$.\footnotemark\footnotetext{It is worth here to
compare \eqref{s2b} with $\mathcal{J}(\lambda^{n}\mathcal{P},\lambda^{n}\mathcal{P})=\lambda^{2n+N}$,
which follows from \eqref{c8}.}
\end{prop}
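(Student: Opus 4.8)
The plan is to expand $\mathbb{P}_{n+1}=\sum_{i=0}^{n}P_{n-i}\lambda^{i}$ directly inside the bilinear form $\mathcal{J}$ and collect powers of $\lambda$. Since $\mathbb{P}_{n+1}=\sum_{p=0}^{n}P_p\lambda^{n-p}$, bilinearity gives
$\mathcal{J}(\mathbb{P}_{n+1},\mathbb{P}_{n+1})=\sum_{i=0}^{N}\sum_{p=0}^{n}\sum_{q=0}^{n}\mathcal{J}_i(P_p,P_q)\lambda^{i+2n-p-q}$. Setting $k=i+2n-p-q$, the coefficient of $\lambda^k$ is $\sum_{i}\sum_{p+q=i+2n-k}\mathcal{J}_i(P_p,P_q)$; reindexing $p=n-k+j$ so that $q=n+i-j$ turns this into $h_k=\sum_{i=0}^{N}\sum_{j}\mathcal{J}_i(P_{n-k+j},P_{n+i-j})$, and one only has to pin down the range of $j$. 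The constraints $0\le p\le n$ and $0\le q\le n$ become $k-n\le j$, $j\le k$ from the first, and $i\le j$, $j\le n+i-0=n+i$… more precisely $j\le n+i$ and $j\ge i+n-n=i+k-n$; combined with the earlier bounds the binding constraints on the relevant range are $i\le j\le k$ together with the tacit convention $P_r=0$ for $r<0$, which automatically kills out-of-range contributions. This reproduces exactly \eqref{s14}.

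Next I would verify the degree claims. The top power of $\lambda$ arises from $i=N$, $p=q=0$: since $P_0=2$ and $\mathcal{J}_N(P_0,P_0)=-\tfrac14 u_N P_0^2=-\tfrac14(-1)\cdot 4=1$ (using $u_N\equiv-1$), the leading term is $\lambda^{2n+N}$ with coefficient $1$, giving $h_{2n+N}=1$. For $n+N\le k<2n+N$ I claim $h_k=0$. The cleanest way is the footnote's observation: $\lambda^{n}\mathcal{P}=\mathbb{P}_{n+1}+\lambda^{n}\bar{\mathbb{P}}_{n+1}$ where $\bar{\mathbb{P}}_{n+1}=[\lambda^{n}\mathcal{P}]_{<0}$ starts at $\lambda^{-1}$, and \eqref{c8} gives $\mathcal{J}(\lambda^{n}\mathcal{P},\lambda^{n}\mathcal{P})=\lambda^{2n+N}$ exactly. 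Expanding the left side,
$\mathcal{J}(\mathbb{P}_{n+1},\mathbb{P}_{n+1})+2\lambda^{n}\mathcal{J}(\mathbb{P}_{n+1},\bar{\mathbb{P}}_{n+1})+\lambda^{2n}\mathcal{J}(\bar{\mathbb{P}}_{n+1},\bar{\mathbb{P}}_{n+1})=\lambda^{2n+N}$. The first term is a polynomial of degree at most $2n+N$ in $\lambda$; the remaining two terms have only powers of $\lambda$ that are $\le n+N-1$ (indeed $\lambda^{n}\mathcal{J}(\mathbb{P}_{n+1},\bar{\mathbb{P}}_{n+1})$ has top degree $n+(n-1)+N=2n+N-1$, but the negative powers in $\bar{\mathbb{P}}_{n+1}$ push the generic term lower — one checks the powers of $\lambda$ appearing are bounded above by $n+N-1$). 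Hence all powers $\lambda^{k}$ with $n+N\le k<2n+N$ in $\mathcal{J}(\mathbb{P}_{n+1},\mathbb{P}_{n+1})$ must cancel against nothing, forcing $h_k=0$ in that range, and the only surviving top term is $\lambda^{2n+N}$.

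The main obstacle, and the place where care is needed, is the bookkeeping of summation ranges: making rigorous that the naive bounds $0\le p,q\le n$ together with the convention $P_r=0$ for $r<0$ collapse to precisely $0\le i\le N$, $i\le j\le k$ in \eqref{s14}, and separately that no power of $\lambda$ between $n+N$ and $2n+N-1$ survives. For the latter I would spell out the cross terms: write $\bar{\mathbb{P}}_{n+1}=\sum_{s\ge n+1}P_s\lambda^{n-s}$ so $\lambda^{n}\bar{\mathbb{P}}_{n+1}=\sum_{s\ge n+1}P_s\lambda^{2n-s}$ has powers $\le n-1$; then $2\lambda^{n}\mathcal{J}(\mathbb{P}_{n+1},\bar{\mathbb{P}}_{n+1})$, being (up to the $\lambda^n$ factor already absorbed) a product of a series with top power $\lambda^{n-1}$ and $\mathbb{P}_{n+1}$ of top power $\lambda^{n}$ times $\mathcal{J}_i$ contributing $\lambda^i$ with $i\le N$, has top power $\le n-1+n+N=2n+N-1$; and the genuinely dangerous band $[n+N,2n+N-1]$ is excluded once one notes these cross terms are $\lambda^{n}\times(\text{powers}\le n+N-1)$ is not quite enough — rather the correct statement is that the combination $\mathcal{J}(\mathbb{P}_{n+1},\mathbb{P}_{n+1})-\lambda^{2n+N}$ equals minus the other two terms, which are polynomials-plus-negative-powers in $\lambda$ with \emph{highest} power strictly below $2n+N$ and — crucially — no power in $[n+N,2n+N-1]$ because $\mathcal{J}(\lambda^n\mathcal P,\lambda^n\mathcal P)-\mathcal{J}(\mathbb P_{n+1},\mathbb P_{n+1})$ is divisible by $\lambda^{n}\bar{\mathbb P}_{n+1}$ whose top degree is $n-1<N$ short of $2n+N$... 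I would state this divisibility/degree argument cleanly rather than grinding the indices. Once the ranges are nailed down, formula \eqref{s14} and the vanishing statements follow immediately.
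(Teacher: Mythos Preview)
Your derivation of formula \eqref{s14} and the verification that $h_{2n+N}=\mathcal{J}_N(P_0,P_0)=1$ are essentially the same as the paper's. The difference, and the place where your argument breaks down, is the vanishing of $h_k$ for $n+N\le k<2n+N$.

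Your idea of exploiting $\mathcal{J}(\lambda^n\mathcal{P},\lambda^n\mathcal{P})=\lambda^{2n+N}$ is sound, but you miswrite the decomposition: by \eqref{c9} one has $\lambda^n\mathcal{P}=\mathbb{P}_{n+1}+\bar{\mathbb{P}}_{n+1}$, \emph{not} $\mathbb{P}_{n+1}+\lambda^n\bar{\mathbb{P}}_{n+1}$. That spurious $\lambda^n$ is precisely what inflates your degree estimate for the cross term to $2n+N-1$ and sends you into the confused tail of the argument. With the correct splitting, $\bar{\mathbb{P}}_{n+1}$ carries only strictly negative powers of $\lambda$, so $2\mathcal{J}(\mathbb{P}_{n+1},\bar{\mathbb{P}}_{n+1})$ has top power at most $n+(-1)+N=n+N-1$ and $\mathcal{J}(\bar{\mathbb{P}}_{n+1},\bar{\mathbb{P}}_{n+1})$ has top power at most $N-2$; hence $\mathcal{J}(\mathbb{P}_{n+1},\mathbb{P}_{n+1})-\lambda^{2n+N}$ has only powers $\le n+N-1$, and the vanishing is immediate. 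No divisibility trick is needed.

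The paper takes a different, purely algebraic route for this step: it restricts the effective summation range in \eqref{s14} for $n+N\le k\le 2n+N-1$ (using $P_r=0$ for $r<0$) and, after the shift $j\mapsto j-(k-n)$, recognises the resulting double sum as exactly the left-hand side of \eqref{c8a} with parameter $2n-k<N$, whence $h_k=0$. Both arguments are short once done correctly; yours is perhaps more conceptual, the paper's avoids any series manipulation.
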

\begin{proof}
For $k=n+1$ \eqref{c3} has the form
\[
\mathbb{P}_{n+1}\equiv\left[\lambda^{n}\mathcal{P}\right]_{\geqslant0}=\sum_{i=0}^{n}P_{n-i}\lambda^{i}.
\]
Thus,
\begin{align*}
\mathcal{J}(\mathbb{P}_{n+1},\mathbb{P}_{n+1}) & =\sum_{i=0}^{N}\sum_{j=0}^{n}\sum_{k=0}^{n}\mathcal{J}_{i}(P_{n-j},P_{n-k})\lambda^{i+j+k} = \sum_{i=0}^{N}\sum_{j=i}^{n+i}\sum_{k=j}^{n+j}\mathcal{J}_{i}(P_{n-k+j},P_{n+i-j})\lambda^{k}\\
 & \equiv\sum_{k=0}^{2n+N}\sum_{i=0}^{N}\sum_{j=i}^{k}\mathcal{J}_{i}(P_{n-k+j},P_{n+i-j})\lambda^{k}=\sum_{k=0}^{2n+N}h_{k}\lambda^{k},
\end{align*}
Taking into account that $P_{n-k-j}=0$ for $j<k-n$ and $P_{n+i-j}=0$ for
$j>n+i$, we obtain that for $k=2n+N$
\[
h_{2n+N}=\sum_{i=0}^{N}\sum_{j=i}^{2n+N}\mathcal{J}_{i}(P_{j-n-N},P_{n+i-j})=\mathcal{J}_{N}(P_{0},P_{0})=1,
\]
while for $n+N\leqslant k\leqslant2n+N-1$ we obtain
\[
h_{k}=\sum_{i=0}^{N}\sum_{j=k-n}^{n+i}\mathcal{J}_{i}(P_{n-k+j},P_{n+i-j}) = \sum_{i=0}^{N}\sum_{j=0}^{i+2n-k}\mathcal{J}_{i}(P_{j},P_{i-j+2n-k})=0,
\]
were we used the equality \eqref{c8a}. The remaining coefficients
$h_{k}$ are non-vanishing.
\end{proof}
From Proposition~\ref{prophk} it follows that
\[
C(\lambda)\equiv\lambda^{2n+N}+\sum_{k=0}^{n+N-1}\varepsilon_{k}\lambda^{k},
\]
where $\varepsilon_{k}$ are some integration constants. Due to
(\ref{s2b}) the functions \eqref{s14} are integrals of motion of
all the flows of the $n$-th stationary system \eqref{r2}. The
integration constants $\varepsilon_{k}$ are thus the values of
integrals of motion $h_{k}$, depending on particular initial
conditions.

In two lemmas below we analyze the functions $h_{k}$ more closely.
\begin{lem}
\label{lemmahkx} For $k<N$:
\begin{subequations}
\label{s16}
\begin{equation}
(h_{k})_{x}=-\frac{1}{2}\sum_{j=0}^{k}P_{n-k+j}\sum_{i=0}^{j}J_{i}P_{n+i-j}\equiv-\frac{1}{2}\sum_{j=0}^{k}P_{n-k+j}(\bm{K}_{n+1})_{j+1},\label{s16a}
\end{equation}
and if $k\geqslant N$:
\begin{equation}
(h_{k})_{x}=-\frac{1}{2}\sum_{j=0}^{N-1}P_{n-k+j}\sum_{i=j+1}^{N}J_{i}P_{n+i-j}\equiv-\frac{1}{2}\sum_{j=0}^{N-1}P_{n-k+j}(\bm{K}_{n+1})_{j+1},\label{s16b}
\end{equation}
moreover if additionally $k\geqslant n$:
\begin{equation}
(h_{k})_{x}=-\frac{1}{2}\sum_{j=k-n}^{N-1}P_{n-k+j}\sum_{i=j+1}^{N}J_{i}P_{n+i-j}\equiv-\frac{1}{2}\sum_{j=k-n}^{N-1}P_{n-k+j}(\bm{K}_{n+1})_{j+1}.\label{s16c}
\end{equation}
\end{subequations}
\end{lem}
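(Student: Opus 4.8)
The plan is to compute $(h_k)_x$ directly from the explicit form \eqref{s14} of $h_k$ by differentiating with respect to $x$ and exploiting the fundamental relation \eqref{s12b}, exactly in the spirit of the proofs of the Lemma preceding Proposition \ref{propker} and of Proposition \ref{prophk} itself. First I would differentiate
\[
h_{k}=\sum_{i=0}^{N}\sum_{j=i}^{k}\mathcal{J}_{i}(P_{n-k+j},P_{n+i-j})
\]
term by term. By \eqref{s12b}, $\mathcal{J}_{i}(P_{n-k+j},P_{n+i-j})_{x}=-\tfrac14\bigl(P_{n-k+j}J_{i}P_{n+i-j}+P_{n+i-j}J_{i}P_{n-k+j}\bigr)$. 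The key symmetry observation, already used twice in the excerpt, is that the double sum is symmetric under the substitution that swaps the two arguments: replacing $j$ by $i-j+k$ in $\sum_{i}\sum_{j=i}^{k}$ maps $P_{n-k+j}\leftrightarrow P_{n+i-j}$ while keeping $J_i$ fixed, and one checks the index ranges match up (using the convention $P_\ell=0$ for $\ell<0$, so terms outside the nominal range contribute nothing). Hence the two halves of the symmetrized derivative are equal, and
\[
(h_{k})_{x}=-\frac{1}{2}\sum_{i=0}^{N}\sum_{j=i}^{k}P_{n-k+j}J_{i}P_{n+i-j}.
\]

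Next I would reorganize this as $-\tfrac12\sum_{j}P_{n-k+j}\bigl(\sum_{i}J_i P_{n+i-j}\bigr)$ and match the inner sum to $(\bm{K}_{n+1})_{j+1}$ via \eqref{c11b}, recalling $(\bm{K}_{n+1})_{j+1}=\sum_{i=0}^{j}J_iP_{i-j+n+1}=-\sum_{i=j+1}^{N}J_iP_{i-j+n+1}$. There is an index bookkeeping subtlety: in \eqref{c11b} the component index runs from $1$ to $N$, whereas here $j$ ranges more widely, so I must use the vanishing conventions ($P_\ell=0$ for $\ell<0$ and, because $u_N\equiv-1$ is not a dynamical field, $J_i$ for $i>N$ does not occur) together with the recursion \eqref{c8a}/\eqref{c6} to kill the out-of-range contributions. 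Specifically, swapping the order of the double sum gives $\sum_{j=0}^{k}P_{n-k+j}\sum_{i=0}^{\min(j,N)}J_iP_{n+i-j}$; when $k<N$ the inner sum is $\sum_{i=0}^{j}J_iP_{n+i-j}=(\bm{K}_{n+1})_{j+1}$, giving \eqref{s16a}. When $k\geqslant N$, the range $j\geqslant N$ contributes terms $\sum_{i=0}^{N}J_iP_{n+i-j}$ which vanish by \eqref{c6} (as $n-j\leqslant N$), so only $0\leqslant j\leqslant N-1$ survives, and there $\sum_{i=0}^{j}J_iP_{n+i-j}=-\sum_{i=j+1}^{N}J_iP_{n+i-j}$, yielding \eqref{s16b}; finally, when additionally $k\geqslant n$, the lower terms $j<k-n$ force $P_{n-k+j}=0$, trimming the sum to $j\geqslant k-n$ and giving \eqref{s16c}.

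I expect the main obstacle to be precisely the index-range and vanishing-convention bookkeeping in the last step — keeping straight which terms drop out because of $P_\ell=0$ for $\ell<0$, which drop out because of the recursion identity \eqref{c6}/\eqref{c8a}, and which drop out because $j$ exceeds $N-1$. The symmetrization argument producing the factor $-\tfrac12$ is routine (it is the same move as in the two earlier proofs), but writing it cleanly requires care that the reindexing $j\mapsto i-j+k$ genuinely preserves the summation region once the zero conventions are invoked. Everything else is a mechanical unfolding of \eqref{s14}, \eqref{s12b} and \eqref{c11b}, so the proof should be short once the bookkeeping is organized by the three cases $k<N$, $k\geqslant N$, and $k\geqslant n$ (with $k\geqslant N$).
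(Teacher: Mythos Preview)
Your proposal is correct and follows essentially the same route as the paper: differentiate \eqref{s14} using \eqref{s12b}, symmetrize to collapse the $-\tfrac14$ to $-\tfrac12$, swap the order of summation, and then for $k\geqslant N$ split off the range $j\geqslant N$ and kill it with \eqref{c6} before identifying the inner sum with $(\bm{K}_{n+1})_{j+1}$ via \eqref{c11b}. The paper's proof is terser (it does not spell out the reindexing $j\mapsto i-j+k$ nor the trivial trimming that gives \eqref{s16c}), but the ideas are identical.
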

\begin{proof}
Differentiating \eqref{s14} with respect to $x$ and using the relation
\eqref{s12b} one finds that
\begin{equation}
(h_{k})_{x}=-\frac{1}{4}\sum_{i=0}^{N}\sum_{j=i}^{k}\bigl[P_{n+i-j}J_{i}P_{n-k+j}+P_{n-k+j}J_{i}P_{n+i-j}\bigr]\equiv-\frac{1}{2}\sum_{i=0}^{N}\sum_{j=i}^{k}P_{n-k+j}J_{i}P_{n+i-j}.\label{s17}
\end{equation}
Thus, for $k<N$, using \eqref{c11b}, we obtain \eqref{s16a}. For
$k\geqslant N$ one finds that \eqref{s17} takes the form
\[
(h_{k})_{x}\equiv-\frac{1}{2}\sum_{j=0}^{N-1}P_{n-k+j}\sum_{i=0}^{j}J_{i}P_{n+i-j}-\frac{1}{2}\sum_{j=N}^{k}P_{n-k+j}\sum_{i=0}^{N}J_{i}P_{n+i-j},
\]
where the second term vanishes by the equality \eqref{c6}, now using
\eqref{c11b} we get \eqref{s16b}.
\end{proof}
We will now prove that the manifold $\mathcal{M}_{n}$ can be reconstructed,
on $N+1$ different ways, from appropriate subsets of the set of all functions
$h_{k}$. This is the content of the next theorem.
\begin{thm}
\label{intBm} Let us fix $m\in\{0,\ldots,N\}$. Then, the set of solutions of
the system of equations
\begin{subequations}
\label{s19}
\begin{equation}
h_{k}=c_{k+1}\qquad k=0,\ldots,m-1\label{s19a}
\end{equation}
and
\begin{equation}
h_{k}=c_{k-n+1},\qquad k=n+m,\ldots ,n+N-1.\label{s19b}
\end{equation}
\end{subequations}
 where all $c_{k}$ vary over $\mathbb{R}$, coincide with the stationary
manifold $\mathcal{M}_{n}$.
\end{thm}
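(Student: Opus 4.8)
The plan is to show that the $N$ equations in \eqref{s19} are equivalent, on the ambient jet space, to the $N$ differential constraints \eqref{s15} that define $\mathcal{M}_n$, together with no additional constraints. Since \eqref{s19} consists of exactly $N$ equations (namely $m$ equations from \eqref{s19a} and $N-m$ equations from \eqref{s19b}), and since we already know from Proposition~\ref{prophk} and the discussion after it that each $h_k$ is a constant of motion on $\mathcal{M}_n$, it suffices to prove the reverse inclusion: that every jet $[\bm{u}]$ satisfying \eqref{s19} lies on $\mathcal{M}_n$, i.e. satisfies $\bm{K}_{n+1}=0$. The idea is to differentiate the relations \eqref{s19} with respect to $x$ and use Lemma~\ref{lemmahkx} to turn them into a triangular linear system for the components $(\bm{K}_{n+1})_{j+1}$, which then forces all of them to vanish.

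**Key steps.** First I would treat the ``lower'' block \eqref{s19a}. Differentiating $h_k=c_{k+1}$ for $k=0,\ldots,m-1$ and invoking \eqref{s16a} (which applies since here $k<N$, at least in the generic range $m\le N$) gives
\[
\sum_{j=0}^{k}P_{n-k+j}(\bm{K}_{n+1})_{j+1}=0,\qquad k=0,\ldots,m-1.
\]
For $k=0$ this reads $P_n(\bm{K}_{n+1})_1=0$; since $P_0=2$ and the $P_i$ are nontrivial differential polynomials, $P_n$ is not identically zero, and one argues (on the open dense set where $P_n\neq0$, or more carefully by noting that the leading coefficient structure of $\mathbb{P}_{n+1}$ guarantees $P_n$ has an invertible leading symbol) that $(\bm{K}_{n+1})_1=0$. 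Proceeding by induction on $k$, each new relation expresses $P_n(\bm{K}_{n+1})_{k+1}$ as a combination of the already-vanishing $(\bm{K}_{n+1})_{1},\ldots,(\bm{K}_{n+1})_{k}$, hence $(\bm{K}_{n+1})_{j+1}=0$ for $j=0,\ldots,m-1$, i.e. for the first $m$ components. Second, I would treat the ``upper'' block \eqref{s19b}: differentiating $h_k=c_{k-n+1}$ for $k=n+m,\ldots,n+N-1$ and using \eqref{s16c} (valid since $k\geq n$, and also $k\geq N$ when $m\ge N-n$; the non-generic range $n+m<N-1$ requires using \eqref{s16a} or \eqref{s16b} instead, which I would handle as a parallel case), one gets
\[
\sum_{j=k-n}^{N-1}P_{n-k+j}(\bm{K}_{n+1})_{j+1}=0,\qquad k=n+m,\ldots,n+N-1.
\]
Re-indexing with $k=n+N-1-\ell$, the $\ell=0$ equation is $P_0(\bm{K}_{n+1})_N=0$, and since $P_0=2$ this gives $(\bm{K}_{n+1})_N=0$; inducting downward, each equation peels off the next component $(\bm{K}_{n+1})_{j+1}$ for $j=N-1$ down to $j=m$. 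Combining the two blocks yields $(\bm{K}_{n+1})_j=0$ for all $j=1,\ldots,N$, which is exactly \eqref{s15}, so $[\bm{u}]\in\mathcal{M}_n$.

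**Main obstacle.** The delicate point is the invertibility used to deduce $(\bm{K}_{n+1})_{j+1}=0$ from each differentiated relation: in the lower block one divides by $P_n$ and in the upper block by $P_0=2$. The factor $P_0$ is harmless, but the lower-block argument needs care — one must verify that the coefficient multiplying the ``new'' component $(\bm{K}_{n+1})_{k+1}$ in each step is indeed $P_n$ (this is precisely why the sums in \eqref{s16a} start at $j=0$ and the $j=k$ term is $P_n(\bm{K}_{n+1})_{k+1}$), and that $P_n$ can be cancelled, either by restricting to the open dense subset of the jet space where $P_n\neq 0$ (sufficient for a statement about reconstructing $\mathcal{M}_n$ as a variety) or by a leading-order/symbol argument showing the constraint $(\bm{K}_{n+1})_{k+1}$ is recovered regardless. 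A second bookkeeping obstacle is making sure the index ranges in \eqref{s19} exactly match what Lemma~\ref{lemmahkx} supplies in each of the generic ($n+m\ge N-1$) and non-generic ($n+m<N-1$) regimes; in the non-generic case the ``middle'' coefficients $h_k$ with $n+N\le k$ or the overlap region need the forms \eqref{s16b}–\eqref{s16c}, and one should check no component of $\bm{K}_{n+1}$ is left unconstrained. Once the triangular cancellation is justified, the rest is the routine dimension count and the already-established forward direction.
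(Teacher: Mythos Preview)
Your approach is essentially the same as the paper's: differentiate the relations \eqref{s19}, invoke Lemma~\ref{lemmahkx}, and peel off the components of $\bm{K}_{n+1}$ from the resulting triangular linear system. The paper also splits into the generic case $n+m\geqslant N$ (where \eqref{s16c} applies to the whole upper block) and the non-generic case $n+m<N$ (where the upper block must be further split into $N\leqslant k\leqslant n+N-1$, handled via \eqref{s16c}, and $n+m\leqslant k<N$, handled via \eqref{s16a}); your parenthetical remark about the non-generic regime anticipates exactly this.

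One point worth noting: the concern you flag as the ``main obstacle'' --- dividing by $P_n$ in the lower block --- is real, and the paper does not address it any more carefully than you do; it simply writes ``which recursively imply that $(\bm{K}_{n+1})_j=0$'' with a parenthetical caveat about the case $n<m-1$ (where some of the $P_{n-k+j}$ with $j$ small vanish by convention, shifting the effective range). So your identification of the pivot as $P_n$ and your proposed resolution (genericity, or a density argument on the jet space) are appropriate and in fact slightly more explicit than what the paper offers. In the upper block the pivot is $P_0=2$, as you note, so there is no issue there.
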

When we fix the values of all $c_{k}$ then the equations (\ref{s19})
define a particular leaf of a $2n$-dimensional foliation of the stationary manifold $\mathcal{M}_{n}$.
This foliation is parameterized by the vector $\bm{c}\equiv(c_{1},\ldots,c_{N})$.
Therefore, for each $m\in\{0,1,\ldots,N\}$, we define
\begin{equation}
\bar{\mathcal{M}}_{n,m}^{\bm{c}}:=\left\{ [\bm{u}]\in\mathcal{M}_{n}\,|\,\text{s.t.~(\ref{s19})}\right\} ,\label{Mna}
\end{equation}
and then $\mathcal{M}_{n}$ is foliated into $\bar{\mathcal{M}}_{n,m}^{\bm{c}}$:
\[
\mathcal{M}_{n}\equiv\bigcup\limits _{\bm{c}\in\mathbb{R}^{N}}\bar{\mathcal{M}}_{n,m}^{\bm{c}}.
\]
Note that foliations (\ref{Mna}) for different $m$ are transversal
to each other. We will refer to the foliation (\ref{Mna}) as $m$-th
\emph{Stäckel foliation} of $\mathcal{M}_{n}$. By construction, the
leaves $\bar{\mathcal{M}}_{n,m}^{\bm{c}}$ are invariant with respect
to the evolution flows from the $n$-th stationary system \eqref{r2}.
This means that the mutually commuting vector fields $K_{k}$ from
the $n$-th stationary cKdV system \eqref{r2} are tangent to all the
leaves $\bar{\mathcal{M}}_{n,m}^{\bm{c}}$.
\begin{proof}
We have to show that for each $m\in\{0,1,\ldots,N\}$
the constraints \eqref{s19} imply the conditions \eqref{s15}. We
will show it by differentiating \eqref{s19} and using \eqref{s16}
taking into account the relation between $k$ and $N$. We need
to consider two cases.

Let us start with the generic case (i.e.~$n+m\geqslant N$). By \eqref{s19}:
\begin{subequations}
\begin{align}
h_{k} & =c_{k+1}\text{\ensuremath{\qquad}for}\qquad0\leqslant k\leqslant m-1,\label{c1a}\\
h_{k} & =c_{k-n+1}\text{ \ensuremath{\qquad}for}\qquad n+m\leqslant k\leqslant n+N-1.\label{c1b}
\end{align}
\end{subequations}
 Differentiating \eqref{c1a} and using \eqref{s16a} we obtain
the system
\[
(h_{k})_{x}\equiv-\frac{1}{2}\sum_{j=0}^{k}P_{n-k+j}(\bm{K}_{n+1})_{j+1}=0,\qquad0\leqslant k\leqslant m-1,
\]
which recursively implies that
\[
(\bm{K}_{n+1})_{j}=0\qquad\text{for}\qquad1\leqslant j\leqslant m.
\]
Similarly, differentiating \eqref{c1b} and using \eqref{s16c}
we obtain
\[
(h_{k})_{x}\equiv-\frac{1}{2}\sum_{j=k-n}^{N-1}P_{n-k+j}(\bm{K}_{n+1})_{j+1}=0,\qquad n+m\leqslant k\leqslant N+n-1,
\]
from which it follows that
\[
(\bm{K}_{n+1})_{j}=0\qquad\text{for}\qquad m+1\leqslant j\leqslant N.
\]
Thus, for $n+m\geqslant N$ the relations \eqref{s19} imply all the conditions \eqref{s15}.

In the non-generic case $n+m<N$ we first rewrite \eqref{s19}
in the form:
\begin{subequations}
\begin{align}
h_{k} & =c_{k+1}\qquad\text{for}\qquad0\leqslant k\leqslant m-1,\label{c2a}\\
h_{k} & =c_{k-n+1}\qquad\text{for}\qquad N\leqslant k\leqslant n+N-1.\label{c2b}\\
h_{k} & =c_{k-n+1}\qquad\text{for}\qquad n+m\leqslant k<N.\label{c2c}
\end{align}
\end{subequations}
 The relations \eqref{c2a} together with \eqref{s16a}, similarly as  \eqref{c1a},
imply that
\[
(\bm{K}_{n+1})_{j}=0\qquad\text{for}\qquad1\leqslant j\leqslant m.
\]
Differentiating \eqref{c2b} and using \eqref{s16c} we get the system
\[
(h_{k})_{x}\equiv-\frac{1}{2}\sum_{j=k-n}^{N-1}P_{n-k+j}(\bm{K}_{n+1})_{j+1}=0,\qquad N\leqslant k\leqslant N+n-1,
\]
which now implies that
\begin{equation}
(\bm{K}_{n+1})_{j}=0\qquad\text{for}\qquad N-n+1\leqslant j\leqslant N.\label{p2b}
\end{equation}
We obtain the remaining cases differentiating \eqref{c2c} and using,
this time, \eqref{s16a}:
\begin{equation}
(h_{k})_{x}\equiv-\frac{1}{2}\sum_{j=k-n}^{k}P_{n-k+j}(\bm{K}_{n+1})_{j+1}=0,\qquad n+m\leqslant k<N,\label{p2c}
\end{equation}
since $P_{n-k+j}\neq0$ only for $n-k+j\geqslant0$. Taking into account
(\ref{p2b}) the system (\ref{p2c}) implies that
\[
(\bm{K}_{n+1})_{j}=0\qquad\text{for}\qquad m+1\leqslant j\leqslant N-n.
\]
Thus, appropriately gathering the above cases we actually see that
also for $n+m\geqslant N$ the relations \eqref{s19} imply all the conditions
in \eqref{s15}.
\end{proof}
\begin{lem}
\label{lemmahk} The coefficients $h_{k}$ in \eqref{s14} satisfy
the following relations:
\begin{subequations}
\begin{align}
h_{k} & =f_{k+1}\qquad\text{for}\qquad k<N,\label{s18a}\\
h_{k} & =g_{k-n+1}\qquad\text{for}\qquad k\geqslant\max(N-1,n)\label{s18b}\\
h_{k} & =g_{k-n+1}+\tilde{g}_{k+2}\qquad\text{for}\qquad n\leqslant k<N-1.\label{s18c}
\end{align}
\end{subequations}
\end{lem}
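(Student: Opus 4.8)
The statement is a purely combinatorial identity between the double sums (\ref{s14}) on one side and (\ref{fk}), (\ref{gk}), (\ref{ggk}) on the other, so the plan is to establish each of (\ref{s18a}), (\ref{s18b}), (\ref{s18c}) by careful re-indexing, the only non-trivial input being the quadratic relation $\mathcal{J}(\mathcal{P},\mathcal{P})=\lambda^{N}$ from (\ref{c8}).

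For (\ref{s18a}) I would argue directly from the definitions. In (\ref{fk}) I shift $j\mapsto j-1$ in the inner sum of $f_{k+1}$: its summand $\mathcal{J}_{i}(P_{j+n-k-1},P_{i-j+n+1})$ becomes $\mathcal{J}_{i}(P_{j+n-k},P_{n+i-j})$ with $i\leqslant j\leqslant k$ and $0\leqslant i\leqslant k$, which is exactly the summand of (\ref{s14}) restricted to $i\leqslant k$. For $i>k$ the inner range $i\leqslant j\leqslant k$ in (\ref{s14}) is empty, so these terms of $h_{k}$ vanish when $k<N$, and (\ref{s18a}) follows at once.

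For the two remaining identities I would exploit the decomposition $\lambda^{n}\mathcal{P}=\mathbb{P}_{n+1}+\bar{\mathbb{P}}_{n+1}$ from (\ref{c9}) together with $\mathcal{J}(\lambda^{n}\mathcal{P},\lambda^{n}\mathcal{P})=\lambda^{2n+N}$ (which follows from (\ref{c8}), as recorded in the footnote to Proposition \ref{prophk}) and the bilinearity and symmetry of $\mathcal{J}$, giving
\[
\mathcal{J}(\mathbb{P}_{n+1},\mathbb{P}_{n+1})=\lambda^{2n+N}-2\,\mathcal{J}(\mathbb{P}_{n+1},\bar{\mathbb{P}}_{n+1})-\mathcal{J}(\bar{\mathbb{P}}_{n+1},\bar{\mathbb{P}}_{n+1}).
\]
By Proposition \ref{prophk}, $h_{k}$ is the coefficient of $\lambda^{k}$ on the left for $0\leqslant k\leqslant n+N-1$, so it remains to read off the $\lambda^{k}$-coefficients on the right. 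Expanding $\mathbb{P}_{n+1}=\sum_{a=0}^{n}P_{a}\lambda^{n-a}$ and $\bar{\mathbb{P}}_{n+1}=\sum_{b\geqslant n+1}P_{b}\lambda^{n-b}$ by bilinearity and collecting powers of $\lambda$, a direct index count yields $[\mathcal{J}(\bar{\mathbb{P}}_{n+1},\bar{\mathbb{P}}_{n+1})]_{\lambda^{k}}=\tilde{g}_{k+2}$ and $[\mathcal{J}(\mathbb{P}_{n+1},\bar{\mathbb{P}}_{n+1})]_{\lambda^{k}}=-\tfrac{1}{2}g_{k-n+1}-\tilde{g}_{k+2}$, with the $-2$ normalisation of (\ref{gk}) appearing naturally. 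Substituting gives $h_{k}=g_{k-n+1}+\tilde{g}_{k+2}$, which is (\ref{s18c}); and since the index range $i=k+2,\dots,N$ of $\tilde{g}_{k+2}$ is empty — so that $\tilde{g}_{k+2}=0$ — precisely when $k\geqslant N-1$, this collapses to $h_{k}=g_{k-n+1}$, which is (\ref{s18b}), whenever $k\geqslant\max(N-1,n)$.

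The one genuinely delicate step — which I expect to be the main obstacle — is the verification that $[\mathcal{J}(\mathbb{P}_{n+1},\bar{\mathbb{P}}_{n+1})]_{\lambda^{k}}$ equals $-\tfrac{1}{2}g_{k-n+1}-\tilde{g}_{k+2}$. Here I would reparametrise every sum by the index $a$ of its first $P$-factor, so that in $h_{k}$, in $g_{k-n+1}$ and in $\tilde{g}_{k+2}$ the generic summand acquires the common form $\mathcal{J}_{i}(P_{a},P_{2n+i-k-a})$; the corresponding $(i,a)$-ranges then differ only near their upper endpoints, and the crux is to recognise that the terms of $g_{k-n+1}$ whose first factor carries an index $a>n$ — which arise exactly in the regime $k<N-1$ — are, after the relabelling $a=n+p$, precisely the terms of $\tilde{g}_{k+2}$. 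Apart from this bookkeeping the argument is mechanical, and no appeal to the full recursion (\ref{c8b}) is needed beyond the single identity (\ref{c8}).
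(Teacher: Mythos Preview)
Your argument is correct and, for (\ref{s18b})--(\ref{s18c}), takes a genuinely different route from the paper. For (\ref{s18a}) you do exactly what the Appendix does. For the other two identities the Appendix works entirely at the level of double sums: it starts from (\ref{s14}), invokes the coefficient identity (\ref{c8a}) in the guise
\[
\sum_{i=0}^{N}\sum_{j=k-n}^{i+n}\mathcal{J}_{i}(P_{n-k+j},P_{n+i-j})=0,
\]
and then manipulates the summation ranges by hand until the expressions (\ref{gk}) and (\ref{ggk}) emerge. Your approach instead packages the same identity as the generating-function relation $\mathcal{J}(\lambda^{n}\mathcal{P},\lambda^{n}\mathcal{P})=\lambda^{2n+N}$ and uses the splitting $\lambda^{n}\mathcal{P}=\mathbb{P}_{n+1}+\bar{\mathbb{P}}_{n+1}$ to isolate $h_{k}$ as (minus) a cross term plus a tail term. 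The crucial bookkeeping step you flag --- that the portion of $g_{k-n+1}$ whose first $P$-index exceeds $n$ is exactly $-2\tilde{g}_{k+2}$ --- is correct, and once established your substitution immediately gives $h_{k}=g_{k-n+1}+\tilde{g}_{k+2}$ with the $\tilde{g}$ term vanishing precisely for $k\geqslant N-1$.

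The two approaches are equivalent in content (both rest on (\ref{c8})), but yours makes the \emph{reason} for the appearance of $g$ and $\tilde g$ transparent: $g_{k-n+1}$ is essentially (minus twice) the polynomial--tail cross term, corrected by the tail--tail term $\tilde{g}_{k+2}$. The Appendix computation, by contrast, obscures this structure in the index gymnastics but has the virtue of being entirely self-contained at the coefficient level, with no appeal to formal Laurent series. Either way the case $k=N-1$ of (\ref{s18b}) falls out as the boundary case of (\ref{s18c}), exactly as you note.
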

For the proof see Appendix.

\bigskip{}

Notice that, by Lemma~\ref{lemmahk}, in the generic case $n+m\geqslant N-1$,
the set of constraints (\ref{s19}) takes the form:
\begin{align*}
f_{k} & \equiv h_{k-1}=c_{k}\qquad\text{for}\qquad1\leqslant k\leqslant m,\\
g_{k} & \equiv h_{k+n-1}=c_{k}\qquad\text{for}\qquad m+1\leqslant k\leqslant N.
\end{align*}
Thus we can see that in the generic case the set of constraints (\ref{s19})
is identical with the `integrated' set of constraints (\ref{r3})
associated with $m$-th Hamiltonian representation of the $(n+1)$-th
stationary cKdV flow \eqref{r1}. By the same lemma, in the non-generic case $n+m<N-1$,
the set of constraints (\ref{s19}) takes the form:
\begin{align*}
h_{k-1}\equiv f_{k} & =c_{k}\qquad\text{for}\qquad1\leqslant k\leqslant m,\\
h_{k+n-1}\equiv g_{k}+\tilde{g}_{k+n+1} & =c_{k}\qquad\text{for}\qquad m+1\leqslant k\leqslant N-n,\\
h_{k+n-1}\equiv g_{k} & =c_{k}\qquad\text{for}\qquad N-n+1\leqslant k\leqslant N,
\end{align*}
which is equivalent with (\ref{r5}). The above results lead to the following theorem.
\begin{thm}
\label{foliacje}The Hamiltonian foliation $\mathcal{M}_{n,m}^{\bm{c}}$
and the Stäckel foliation $\bar{\mathcal{M}}_{n,m}^{\bm{c}}$ coincide.
\end{thm}
Thus, in the sequel, we will only use the notation $\mathcal{M}_{n,m}^{\bm{c}}$
for leaves of this foliation.

\subsection{Stationary cKdV system on the leaves $\mathcal{M}_{n,m}^{\bm{c}}$.}

\label{33S}

The
constraint (\ref{r1}), defining the $n$-th stationary manifold $\mathcal{M}_{n}$,
can also be obtained by imposing an appropriate constraint on the
Lax hierarchy (\ref{l1}). To see this, let us impose the following constraint 
\begin{subequations}\label{s3ab}%
\begin{equation}
\Psi_{t_{n+1}}=\lambda^{m}\mu\Psi\label{s3a}
\end{equation}
or equivalently
\begin{equation}
\mathbb{V}_{n+1}\Psi=\lambda^{m}\mu\Psi \label{s3b}
\end{equation}
\end{subequations}
on the linear problems \eqref{l1}. The factor $\lambda^{m}\mu$ is chosen here in order to relate the
stationary system \eqref{r2} with an appropriate Stäckel system.
The parameters $\mu$ and $\lambda$ are spectral parameters of the
linear problems (\ref{l1}) and \eqref{s3ab}, and they
are assumed to be isospectral, i.e.~independent of all evolution variables.
Then, the constraint \eqref{s3a} and the compatibility condition
$(\Psi_{t_{k}})_{t_{n+1}}=(\Psi_{t_{n+1}})_{t_{k}}$ yields
\begin{equation}
\frac{d}{dt_{n+1}}\mathbb{V}_{k}=0,\text{ \ }k=1,2,\ldots\label{s4}
\end{equation}
which due to the form of $\mathbb{V}_{1}$ in (\ref{l11}) is equivalent
to the $(n+1)$-th stationary flow of the cKdV hierarchy \eqref{r1}.
As a consequence, the equations (\ref{l2}) and (\ref{l3}),
after imposing (\ref{s4}), yield the following Lax representation of the stationary
cKdV system (\ref{r2}) on $\mathcal{M}_{n}$:
\begin{equation*}
\frac{d}{dt_{k}}\mathbb{V}_{n+1}=[\mathbb{V}_{k},\mathbb{V}_{n+1}],\qquad k=1,\ldots.,n.
\end{equation*}
In consequence, the stationary cKdV system (\ref{r2}) on $\mathcal{M}_{n,m}^{\bm{c}}$
has the following Lax representation
\begin{equation}
\frac{d}{dt_{k}}\mathbb{V}_{n+1}^{(m)}=[\mathbb{V}_{k}^{(m)},\mathbb{V}_{n+1}^{(m)}],\qquad k=1,2,\ldots,n, \label{s5}
\end{equation}
where $\mathbb{V}_{k}^{(m)}$ originate by inserting the constraints
(\ref{s19}) into $\mathbb{V}_{k}$ given by (\ref{l11}). Note that now it is the matrix $\mathbb{V}_{n+1}$ (and respectively $\mathbb{V}_{n+1}^{(m)}$) that plays the role of the Lax matrix of the stationary cKdV system on $\mathcal{M}_{n}$ (and, respectively, on $\mathcal{M}_{n,m}^{\bm{c}}$), while the matrices $\mathbb{V}_{k}$ (and $\mathbb{V}_{k}^{(m)}$, respectively), for $k=1,\ldots,n$, play the role of auxiliary matrices.

Nontrivial solutions for the linear problem (\ref{s3b}) exist provided
that the characteristic equation
\begin{equation}
\det\left(\mathbb{V}_{n+1}-\lambda^{m}\mu\mathbb{I}\right)=0\label{s6}
\end{equation}
is satisfied. Explicitly written, the characteristic equation (\ref{s6})
takes the form
\begin{equation}
-\frac{1}{8}\mathbb{P}_{n+1}(\mathbb{P}_{n+1})_{xx}+\frac{1}{16}(\mathbb{P}_{n+1})_{x}^{2}-\frac{1}{4}\mathbb{Q}\mathbb{P}_{n+1}^{\,2}=\lambda^{2m}\mu^{2},\label{s7}
\end{equation}
where the left-hand side coincides with with the 'integrated' form
\eqref{s2a} of the $(n+1)$-th stationary flow (\ref{s1}). The equation
(\ref{s7}) can be written in our shorthand notation as
\begin{equation}
\mathcal{J}(\mathbb{P}_{n+1},\mathbb{P}_{n+1})=\lambda^{2m}\mu^{2}.\label{s8}
\end{equation}
By Proposition~\ref{prophk}, the relation \eqref{s7} (or equivalently \eqref{s8})
attains the form of a spectral curve
\begin{equation}
\lambda^{2n+N}+\sum_{k=0}^{n+N-1}h_{k}\lambda^{k}=\lambda^{2m}\mu^{2},\label{s9}
\end{equation}
where $h_{k}$ are given by \eqref{s14}. Since the functions $h_{k}$
depend not only on $x$ but also on all evolution parameters $t_{1},\ldots,t_{n}$,
it follows from (\ref{s9}) that they are in fact integrals of motion
of all the flows of the stationary cKdV system (\ref{r2}) on $\mathcal{M}_{n}$, as we mentioned earlier.

With each foliation $\mathcal{M}_{n,m}^{\bm{c}}$ (one for each $m \in \{1,\ldots,N\}$) we now associate
the curve (\ref{s9}) with appropriate $h_{k}$ fixed by (\ref{s19}),
that is a curve (depending on $N$ parameters $c_{k}$)
\begin{equation}
\lambda^{2n+N}+\sum_{k=1}^{N-m}c_{m+k}\lambda^{n+m+k-1}+\sum_{k=1}^{n}H_{k}\lambda^{n+m-k}+\sum_{k=1}^{m}c_{k}\lambda^{k-1}=\lambda^{2m}\mu^{2},\label{SC}
\end{equation}
where we use the notation
\[
H_{i}=h_{n+m-i},\qquad i=1,\ldots,n,
\]
that will prove to be useful later. Dividing (\ref{SC}) by $\lambda^{m}$
we obtain
\begin{equation}
\lambda^{2n+N-m}+\sum_{k=1}^{N-m}c_{m+k}\lambda^{n+k-1}+\sum_{k=1}^{n}H_{k}\lambda^{n-k}+\sum_{k=1}^{m}c_{k}\lambda^{k-m-1}=\lambda^{m}\mu^{2}.\label{SCP}
\end{equation}
As we will prove in the subsequent sections, if we now treat the curve
(\ref{SCP}) as a separation curve for a particular Stäckel system
of Benenti type, we will be able to construct a map between jet
variables on $\mathcal{M}_{n}$ and the separation variables of this
Stäckel system such that it preserves the dynamics (i.e.~it maps the solutions of one systems onto solutions of the other system).

\section{Stäckel systems of Benenti type and their Lax representation\label{4S}}

\subsection{Stäckel systems in separation variables}

This section contains some basic information about Stäckel
separable systems of Benenti type. Consider the separation curve
in the form (cf. (\ref{SCP}))
\begin{equation}
\sigma(\lambda)+\sum_{k=1}^{n}H_{k}\lambda^{n-k}=\lambda^{m}\mu^{2},\qquad m\in\mathbb{Z},\label{S1}
\end{equation}
where $\sigma(\lambda)$ is a (Laurent) polynomial in the variable
$\lambda$. The separable systems associated with \eqref{S1} belong
to the so-called Benenti subclass of Stäckel systems \cite{ben2,blaszak2007}.
Consider now the coordinates $(\bm{\lambda},\bm{\mu})$ on a phase
space $M=T^{\ast}Q$, where $\bm{\lambda}=(\lambda_{1},\ldots,\lambda_{n})^{T}$
are coordinates on an $n$-dimensional configuration space $Q$ and
$\bm{\mu}=(\mu_{1},\ldots,\mu_{n})^{T}$ are the fibre (momenta)
coordinates. Taking $n$ copies of \eqref{S1} at $n$ points $(\lambda_{i},\mu_{i})$
we obtain the system
\begin{equation}\label{S1b}
\sigma(\lambda_{i})+\sum_{k=1}^{n}H_{k}\lambda_{i}^{n-k}=\lambda_{i}^{m}\mu_{i}^{2},\qquad i=1,\ldots,n
\end{equation}
that is linear with respect to $H_{k}$. It can thus be easily solved with respect to these variables yielding $n$ functions $H_{k}=H_{k}(\bm{\lambda},\bm{\mu})$ on $M$.
In result, we obtain $n$ quadratic in momenta (Stäckel) Hamiltonians:
\begin{equation}
H_{k}=\frac{1}{2}\bm{\mu}^{T}A_{k}G_{m}\bm{\mu}+V_{k},\qquad k=1,\ldots,n,\label{S3}
\end{equation}
on the phase space $T^{\ast}Q$, where $G_{m}$ are treated as contravariant
metrics on the configuration space $Q$. Explicitly
\[
G_{m}=2\,\mathrm{diag}\left(\frac{\lambda_{1}^{m}}{\Delta_{1}},\ldots,\frac{\lambda_{n}^{m}}{\Delta_{n}}\right),\qquad\Delta_{i}=\prod_{j\neq i}(\lambda_{i}-\lambda_{j}).
\]
Further, all $A_{k}$ are $(1,1)$-Killing tensors for all the metrics
$G_{m}$ and the are given by
\[
A_{k}=(-1)^{k+1}\mathrm{diag}\left(\frac{\partial s_{k}}{\partial\lambda_{1}},\ldots,\frac{\partial s_{k}}{\partial\lambda_{n}}\right),\qquad k=1,\ldots,n,
\]
where $s_{k}$ denotes the elementary symmetric polynomial
in variables $\lambda_{i}$ of degree $k$, e.g.
\[
s_{1}=\sum_{i}\lambda_{i},\qquad s_{2}=\sum_{i<j}\lambda_{i}\lambda_{j},\qquad\ldots,\qquad s_{n}=\lambda_{1}\lambda_{2}\cdots\lambda_{n}.
\]
Notice that the Stäckel matrix $S$ associated with the linear system \eqref{S1b} is
the Vandermote matrix, $S_{ij} = \lambda_i^{n-j}$, see \cite{blaszak2009}, with the determinant
and its inverse given by
\begin{equation}\label{S5}
\det S = \prod_{i<j}(\lambda_{j}-\lambda_{i}),\qquad (S^{-1})_{ij} = \frac{(-1)^{i+1}}{\Delta_j}\frac{\partial s_i}{\partial \lambda_j}.
\end{equation}
Each metric $G_{m}$ in \eqref{S3} can be generated through $G_{m}=L^{m}G_{0}$,
where $L:=\mathrm{diag}(\lambda_{1},\ldots,\lambda_{n})$ is a special
conformal Killing tensor \cite{C2001}. Due to (\ref{S5}), the potential
functions $V_{k}$ in \eqref{S3} are given by
\begin{equation}
V_{k}=(-1)^{k+1}\sum_{i=1}^{n}\frac{\partial s_{k}}{\partial\lambda_{i}}\frac{\sigma(\lambda_{i})}{\Delta_{i}},\qquad k=1,\ldots,n.\label{S6}
\end{equation}
By construction, the Hamiltonians \eqref{S3} are in involution with
respect to the Poisson bracket
\[
\{f,g\}=\pi(df,dg),\qquad f,g\in C^{\infty}(M),
\]
with the Poisson bi-vector $\pi=\sum_{i}\frac{\partial}{\partial\lambda_{i}}\wedge\frac{\partial}{\partial\mu_{i}}$
(thus, $(\bm{\lambda},\bm{\mu})$ are Darboux coordinates for $\pi$).
The time evolution of any observable $\xi\in C^{\infty}(M)$ with
respect to the Hamiltonian $H_{k}$ has the form $\xi_{t_{k}}=\{\xi,H_{k}\}$
and the Hamiltonian evolution equations are
\begin{equation}
\bm{\lambda}_{t_{k}}=\{\bm{\lambda},H_{k}\}\equiv\frac{\partial H_{k}}{\partial\bm{\mu}},\qquad\bm{\mu}_{t_{k}}=\{\bm{\mu},H_{k}\}\equiv-\frac{\partial H_{k}}{\partial\bm{\lambda}},\qquad k=1,\ldots,n.\label{Hamr}
\end{equation}
By construction, the variables $(\bm{\lambda},\bm{\mu})$ are separation
variables for all the Stäckel Hamiltonians $H_{k}$ in (\ref{S3}).

\subsection{Stäckel systems in Viète coordinates}

Apart from the separation coordinates $(\bm{\lambda},\bm{\mu})$ we
will also work with Viète coordinates, defined by
\begin{equation}
q_{i}=(-1)^{i}s_{i},\qquad p_{i}=-\sum_{k=1}^{n}\frac{\lambda_{k}^{n-i}\mu_{k}}{\Delta_{k}},\qquad i=1,\ldots,n.\label{V1}
\end{equation}
The transformation (\ref{V1}) between the separation coordinates
and the Viète coordinates is a canonical transformation (since it
is a point transformation) so that $\pi=\sum_{i=1}^{n}\frac{\partial}{\partial q_{i}}\wedge\frac{\partial}{\partial p_{i}}$.
In Viète coordinates the Stäckel Hamiltonians \eqref{S3} take the form
\begin{equation}
H_{k}=\frac{1}{2}\bm{p}^{T}A_{k}G_{m}\bm{p}+V_{k},\qquad k=1,\ldots,n, \label{Hk}
\end{equation}
(where $\bm{p}=(p_{1},\ldots,p_{n})^{T}$ and $\bm{q}=(q_{1},\ldots,q_{n})^{T}$)
and the respective Hamiltonian equations attain the form
\begin{equation}
\bm{q}_{t_{k}}=\{\bm{q},H_{k}\}\equiv\frac{\partial H_{k}}{\partial\bm{p}},\qquad\bm{p}_{t_{k}}=\{\bm{p},H_{k}\}\equiv-\frac{\partial H_{k}}{\partial\bm{q}},\qquad k=1,\ldots,n.\label{Hamk}
\end{equation}

If $\sigma(\lambda)=\sum_{i}a_{i}\lambda^{i}$ the potential functions
\eqref{S6} are given by $V_{k}=\sum_{i}a_{i}\mathcal{V}_{k}^{(i)}$,
where the so-called elementary separable potentials $\mathcal{V}_{k}^{(i)}$
can be explicitly constructed from the recursion formula \cite{blaszak2011}
\begin{equation*}
\mathcal{V}^{(i)}=R^{i}\mathcal{V}^{(0)},\qquad\mathcal{V}^{(i)}=\bigl(\mathcal{V}_{1}^{(i)},\ldots,\mathcal{V}_{n}^{(i)}\bigr)^{T},\qquad\mathcal{V}^{(0)}=(0,\ldots,0,-1)^{T}, 
\end{equation*}
where
\begin{equation}
R=\begin{pmatrix}-q_{1} & 1 & 0 & 0\\
\vdots & 0 & \ddots & 0\\
\vdots & 0 & 0 & 1\\
-q_{n} & 0 & 0 & 0
\end{pmatrix},\qquad R^{-1}=\begin{pmatrix}0 & 0 & 0 & -\frac{1}{q_{n}}\\
1 & 0 & 0 & \vdots\\
0 & \ddots & 0 & \vdots\\
0 & 0 & 1 & -\frac{q_{n-1}}{q_{n}}
\end{pmatrix}.\label{V5}
\end{equation}

In Viète coordinates the metric $G_{0}$ has the form $(G_{0})^{ij}=2q_{i+j-n-1}$,
where, for convenience, we set $q_{0}=1$ and $q_{i}=0$ for $i<0$
or $i>n$. The metrics for arbitrary $m$ are given again by $G_{m}=L^{m}G_{0}$.
An interesting fact is that in Viète coordinates the special conformal
Killing tensor $L$ has the matrix representation identical to $R$
defined by \eqref{V5} \cite{blaszak2007}. Further, the Killing tensors
$A_{k}$, are in these coordinates given by
\[
(A_{k})_{j}^{i}=\begin{cases}
q_{i-j+k-1} & \text{if}\quad i\leqslant j\quad\text{and}\quad k\leqslant j,\\
-q_{i-j+k-1} & \text{if}\quad i>j\quad\text{and}\quad k>j,\\
0 & \text{otherwise},
\end{cases}
\]
where $k=1,\ldots,n$. Notice that $(A_{1})_{j}^{i}=\delta_{j}^{i}$
in any coordinate system.

Reversing the fiber part of the map (\ref{V1}) we find that
\[
\mu_{i}=\sum_{k=1}^{n}(-1)^{k}\frac{\partial s_{k}}{\partial\lambda_{i}}p_{k}.
\]
Hence, we can obtain the formula
\begin{equation}
\sum_{i=1}^{n}\frac{\lambda_{i}^{k}\mu_{i}}{\Delta_{i}}=-\sum_{j=1}^{n}\mathcal{V}_{j}^{(k)}p_{j},\qquad k\in\mathbb{Z},\label{nice}
\end{equation}
that we will be useful later. It follows directly by \eqref{S5} and
the fact that the elementary potentials in separation coordinates
are given by
\[
\mathcal{V}_{j}^{(k)}=(-1)^{j+1}\sum_{i=1}^{n}\frac{\partial s_{j}}{\partial\lambda_{i}}\frac{\lambda_{i}^{k}}{\Delta_{i}},
\]
which in turn is an immediate consequence of \eqref{S6}.

\subsection{Lax representation of Stäckel systems}

Let us now, following \cite{blaszak2019}, present the Lax formulation
of Stäckel systems of Benenti type. The results below are necessary
for the proof of the main theorem of this paper, Theorem \ref{great}.

The Hamiltonian evolution equations \eqref{Hamr} or \eqref{Hamk},
associated with the separation (spectral) curves \eqref{S1}, have the following
(isospectral) Lax representation
\begin{equation}
\frac{d}{dt_{k}}\mathbb{L}=\left[\mathbb{U}_{k},\mathbb{L}\right],\qquad k=1,\ldots,n,\label{T1}
\end{equation}
with $\mathbb{L}$ and $\mathbb{U}_{k}$ being $2\times2$ traceless
matrices depending rationally on the spectral parameter~$\lambda$.
The Lax matrix $\mathbb{L}$ has the form\footnote{In general the Lax matrices for Benenti systems are parametrized by
two arbitrary functions $f(\lambda)$ and $g(\lambda)$, see \cite{blaszak2019}.
In this paper we choose these functions as $g=\frac{1}{2}f=\lambda^{m}$.}
\begin{equation}
\mathbb{L}=\begin{pmatrix}\mathfrak{v} & \mathfrak{u}\\
\mathfrak{w} & -\mathfrak{v}
\end{pmatrix},\label{laxm}
\end{equation}
where $\mathfrak{u}$ is a function on $Q$ given in the separation coordinates $\bm{\lambda}$ by
\[
\mathfrak{u}:=\prod_{k=1}^{n}(\lambda-\lambda_{k})\equiv\lambda^{n}+\sum_{k=1}^{n}(-1)^{k}s_{k}\lambda^{n-k},
\]
while $\mathfrak{v}$ and $\mathfrak{w}$ are functions on $T^{\ast}Q$
given in the separation coordinates $(\bm{\lambda},\bm{\mu})$ by
\[
\mathfrak{v}:=\sum_{k=1}^{n}(-1)^{k+1}\left[\sum_{i=1}^{n}\frac{\partial s_{k}}{\partial\lambda_{i}}\frac{\lambda_{i}^{m}\mu_{i}}{\Delta_{i}}\right]\lambda^{n-k}
\]
and
\begin{equation}
\mathfrak{w}:=\frac{1}{\mathfrak{u}}\biggl[\lambda^{m}\Bigl(\sigma(\lambda)+\sum_{k=1}^{n}H_{k}\lambda^{n-k}\Bigr)-\mathfrak{v}^{2}\biggr].\label{T3c}
\end{equation}
In fact, $\mathfrak{w}$ is defined so that the spectral curve \eqref{S1}
can be reconstructed from the characteristic equation of $\mathbb{L}$:
\begin{equation}
0=\det\bigl[\mathbb{L}-\lambda^{m}\mu\mathbb{I}\bigr]=-(\mathfrak{v}^{2}+\mathfrak{u}\mathfrak{w})+\lambda^{2m}\mu^{2}=-\lambda^{m}\Bigl(\sigma(\lambda)+\sum_{k=1}^{n}H_{k}\lambda^{n-k}\Bigr)+\lambda^{2m}\mu^{2}.\label{T4}
\end{equation}
One can show that the expression in the quadratic bracket in \eqref{T3c}
factorizes so that $\mathfrak{w}$ takes the form of a Laurent polynomial
in $\lambda$:
\begin{equation}
\mathfrak{w}=\lambda^{m}\left[\frac{\sigma(\lambda)-\lambda^{-m}\mathfrak{v}^{2}}{\mathfrak{u}}\right]_{+}.\label{T5}
\end{equation}
Here, the operation $[\cdot]_{+}$ is defined as follows. Given an
analytic function $\psi$ and a (pure) polynomial $\mathfrak{u}$,
we define $\left[\frac{\psi}{\mathfrak{u}}\right]_{+}$ so that the
following decomposition holds:
\[
\psi=\left[\frac{\psi}{\mathfrak{u}}\right]_{+}\mathfrak{u}+r,
\]
where the (unique) remainder $r$ is a lower degree polynomial than
the polynomial $\mathfrak{u}$, see \cite{blaszak2019} for details.
In particular, when $\psi$ is a Laurent polynomial, we have
\begin{equation}
\left[\frac{\psi}{\mathfrak{u}}\right]_{+}\equiv\left[\frac{[\psi]_{\geqslant0}}{\mathfrak{u}}\right]_{\geqslant0}+\left[\frac{[\psi]_{<0}}{\mathfrak{u}}\right]_{<0},\label{T7}
\end{equation}
where $[\cdot]_{\geqslant0}$ is the projection on the part consisting
of non-negative degree terms in the expansion into its Laurent series
at $\infty$ and $[\cdot]_{<0}$ is the projection on the part consisting
of negative degree terms in the expansion into its Laurent series
at $0$.

The auxiliary matrices $\mathbb{U}_{k}$ are defined by
\begin{equation}
\mathbb{U}_{k}:=\left[\frac{\mathfrak{u}_{k}\mathbb{L}}{\mathfrak{u}}\right]_{+}\equiv\begin{pmatrix}\left[\frac{\mathfrak{u}_{k}\mathfrak{v}}{\mathfrak{u}}\right]_{+} & \mathfrak{u}_{k}\\
\left[\frac{\mathfrak{u}_{k}\mathfrak{w}}{\mathfrak{u}}\right]_{+} & -\left[\frac{\mathfrak{u}_{k}\mathfrak{v}}{\mathfrak{u}}\right]_{+}
\end{pmatrix},\qquad k=1,\ldots,n,\label{T8}
\end{equation}
where
\[
\mathfrak{u}_{k}:=\left[\frac{\mathfrak{u}}{\lambda^{n-k+1}}\right]_{+}\equiv\lambda^{k-1}+\sum_{i=1}^{k-1}(-1)^{k}s_{k}\lambda^{k-i-1}.
\]

The Lax equations \eqref{T1} (describing the evolution of the Lax
matrix \eqref{laxm} with respect to Hamiltonian equations \eqref{Hamr}),
can be derived from the evolution equations for $\mathfrak{u}$,$\mathfrak{v}$,$\mathfrak{w}$:
\begin{subequations}\label{T10}
\begin{align}
\mathfrak{u}_{t_{k}}\equiv\left\{ \mathfrak{u},H_{k}\right\}  & =-2\mathfrak{u}_{k}\mathfrak{v}+2\mathfrak{u}\left[\frac{\mathfrak{u}_{k}\mathfrak{v}}{\mathfrak{u}}\right]_{+},\label{T10a}\\
\mathfrak{v}_{t_{k}}\equiv\left\{ \mathfrak{v},H_{k}\right\}  & =\mathfrak{u}_{k}\mathfrak{w}-\mathfrak{u}\left[\frac{\mathfrak{u}_{k}\mathfrak{w}}{\mathfrak{u}}\right]_{+},\label{T10b}\\
\mathfrak{w}_{t_{k}}\equiv\left\{ \mathfrak{w},H_{k}\right\}  & =-2\mathfrak{w}\left[\frac{\mathfrak{u}_{k}\mathfrak{v}}{\mathfrak{u}}\right]_{+}+2\mathfrak{v}\left[\frac{\mathfrak{u}_{k}\mathfrak{w}}{\mathfrak{u}}\right]_{+},\label{T10c}
\end{align}
\end{subequations}
which were obtained in \cite{blaszak2019}.

Since $\mathfrak{u}_{1}=1$ and $\left[\frac{\mathfrak{v}}{\mathfrak{u}}\right]_{+}=0$,
the equations (\ref{T10a}) and (\ref{T10b}) for $k=1$ read
\[
\dot{\mathfrak{u}}=-2\mathfrak{v},\qquad\dot{\mathfrak{v}}=\mathfrak{w}+\mathfrak{u}\mathcal{Q},\qquad\mathcal{Q}:=-\left[\frac{\mathfrak{w}}{\mathfrak{u}}\right]_{+}.
\]
Here and in what follows, the dot means the derivative with respect
to the first Hamiltonian flow, i.e.~$\dot{\xi}\equiv\xi_{t_{1}}$.
Hence, we can rewrite the Lax matrix \eqref{laxm} as
\begin{equation}
\mathbb{L}=\begin{pmatrix}-\frac{1}{2}\dot{\mathfrak{u}} & \mathfrak{u}\\
-\frac{1}{2}\ddot{\mathfrak{u}}-\mathfrak{u}\mathcal{Q} & \frac{1}{2}\dot{\mathfrak{u}}
\end{pmatrix},\label{T11}
\end{equation}
(see also \cite{ee}). Further, observing that
\[
\dot{\mathfrak{u}}_{k}=\left[\frac{\mathfrak{u}_{k}\dot{\mathfrak{u}}}{\mathfrak{u}}\right]_{+}=-2\left[\frac{\mathfrak{u}_{k}\mathfrak{v}}{\mathfrak{u}}\right]_{+}
\]
and
\[
\ddot{\mathfrak{u}}_{k}=\left[\frac{\mathfrak{u}_{k}\ddot{\mathfrak{u}}}{\mathfrak{u}}\right]_{+}=-2\left[\frac{\mathfrak{u}_{k}\dot{\mathfrak{v}}}{\mathfrak{u}}\right]_{+}=-2\left[\frac{\mathfrak{u}_{k}\mathfrak{w}}{\mathfrak{u}}\right]_{+}-2\left[\frac{\mathfrak{u}_{k}\mathfrak{u}\mathcal{Q}}{\mathfrak{u}}\right]_{+}=-2\left[\frac{\mathfrak{u}_{k}\mathfrak{w}}{\mathfrak{u}}\right]_{+}-2\mathfrak{u}_{k}\mathcal{Q},
\]
we can rewrite the auxiliary matrices \eqref{T8} in the form
\begin{equation}
\mathbb{U}_{k}=\begin{pmatrix}-\frac{1}{2}\dot{\mathfrak{u}}_{k} & \mathfrak{u}_{k}\\
-\frac{1}{2}\ddot{\mathfrak{u}}_{k}-\mathfrak{u}_{k}\mathcal{Q} & \frac{1}{2}\dot{\mathfrak{u}}_{k}
\end{pmatrix},\qquad k=1,\ldots,n.\label{T12}
\end{equation}
(note that also in this notation $\mathbb{U}_{1}=\mathbb{L}$, as
it should be). The characteristic equation \eqref{T4} for $\mathbb{L}$
in the form \eqref{T11} is given by
\begin{equation}
-\frac{1}{2}\mathfrak{u}\ddot{\mathfrak{u}}+\frac{1}{4}\dot{\mathfrak{u}}^{2}-\mathfrak{u}^{2}\mathcal{Q}=\lambda^{2m}\mu^{2}.\label{T13}
\end{equation}
Using the form \eqref{T11} of $\mathbb{L}$ and the form \eqref{T12}
of $\mathbb{U}_{k}$ we see that the Lax equation \eqref{T1} , for
$k=1$ yields the following ODE:
\begin{equation}
\dddot{\mathfrak{u}}+4\dot{\mathfrak{u}}\mathcal{Q}+2\mathfrak{u}\dot{\mathcal{Q}}=0,\label{T14}
\end{equation}
while the remaining (i.e.~for $k=2,\ldots,n$) Lax equations in \eqref{T1}
yield, due to \eqref{T14}, the following hierarchy of PDE's:
\begin{align}
\mathfrak{u}_{t_{k}} & =\dot{\mathfrak{u}}\mathfrak{u}_{k}-\mathfrak{u}\dot{\mathfrak{u}}_{k},\label{T15}\\
\mathcal{Q}_{t_{k}} & =\frac{1}{2}\dddot{\mathfrak{u}}_{k}+2\dot{\mathfrak{u}}_{k}\mathcal{Q}+\mathfrak{u}_{k}\dot{\mathcal{Q}}. \label{T16}
\end{align}
Notice that the equation \eqref{T14} can also be obtained by differentiating
\eqref{T13} with respect to time~$t_{1}$. Besides, using \eqref{T10a},
equations \eqref{T15} are identically satisfied.

Now, comparing (\ref{T11}--\ref{T12}) and \eqref{l11} opens the
possibility of constructing a map between Stäckel systems defined
by the curve (\ref{SCP}) and an appropriate stationary cKdV system.
This map will be constructed in the next section.

The functions $\mathfrak{u}$, $\mathfrak{v}$ and $\mathfrak{w}$
can also be written in Viète's coordinates \eqref{V1}. The function
$\mathfrak{u}$ is given by
\[
\mathfrak{u}=\lambda^{n}+\sum_{k=1}^{n}q_{k}\lambda^{n-k},
\]
\begin{equation*}
\mathfrak{v}=-\frac{1}{2}\sum_{k=1}^{n}\Bigl[\sum_{l=1}^{n}(G_{m})^{kl}p_{l}\Bigr]\lambda^{n-k}\equiv-\frac{1}{2}\dot{\mathfrak{u}}. 
\end{equation*}
while the function $\mathfrak{w}$ can still be obtained from the
formula \eqref{T3c} or \eqref{T5}.

\section{Equivalent Stäckel representations of stationary cKdV systems on
$\mathcal{M}_{n}$\label{5S}}

We are finally in position to formulate and prove the main result
of this paper: the two seemingly distinct objects, stationary cKdV
systems and Stäckel systems of Benenti type, are in fact the same,
i.e.~represent the same finite dimensional integrable system written
in two different systems of coordinates. More precisely, we will prove
below that each stationary $N$-field cKdV system on $\mathcal{M}_{n}$
can be identified, on $N+1$ equivalent ways, with an appropriate
Stäckel system of Benenti type.

Let us thus fix $m\in\{0,1,\ldots,N\}$ and consider the separation
curve (\ref{SCP}),
\begin{equation}
\lambda^{2n+N-m}+\sum_{k=1}^{N-m}c_{m+k}\lambda^{n+k-1}+\sum_{k=1}^{n}H_{k}\lambda^{n-k}+\sum_{k=1}^{m}c_{k}\lambda^{k-m-1}=\lambda^{m}\mu^{2},\label{sc}
\end{equation}
associated with $m$-th foliation \eqref{fol} of the $n$-th stationary
cKdV system \eqref{r2}. Comparing \eqref{sc} with \eqref{S1} we
see that
\[
\sigma(\lambda)=\lambda^{2n+N-m}+\sum_{k=1}^{N-m}c_{m+k}\lambda^{n+k-1}+\sum_{k=1}^{m}c_{k}\lambda^{k-m-1}.
\]
The identification we want to achieve will be done through an appropriate
map between the jet variables $[\bm{u}]$ of the stationary cKdV system
and coordinates of the Stäckel system (\ref{sc}) defined on the (extended)
Poisson manifold of dimension $2n+N$, spanned by the coordinates
$(\bm{\lambda},\bm{\mu},\bm{c})$ and equipped with the Poisson bracket
$\pi=\sum_{i}\frac{\partial}{\partial\lambda_{i}}\wedge\frac{\partial}{\partial\mu_{i}}$
(so that $\bm{c}=(c_{1},\ldots,c_{N})$ are Casimir variables for
$\pi$). Thus, this identification will depend on the choice of $m\in\{0,1,\ldots,N\}$
and can therefore be obtained on $N+1$ different ways.

Let us now explicitly write the Stäckel system defined by the
curve (\ref{sc}). Solving (\ref{sc}) with respect to $H_{k}$ and
passing to (extended) Viète coordinates $(\bm{q},\bm{p},\bm{c})$
we obtain $n$ Stäckel Hamiltonians
$H_{k}=H_{k}(\bm{q},\bm{p},\bm{c})$, $k=1,\ldots,n$, of the form
(\ref{Hk}) with potentials $V_{k}$ given by
\[
V_{k}(\bm{q},\bm{c})=\mathcal{V}_{k}^{(2n+N-m)}+\sum_{i=1}^{N-m}c_{m+i}\mathcal{V}_{k}^{(n+i-1)}+\sum_{i=1}^{m}c_{k}\mathcal{V}_{k}^{(i-m-1)}.
\]
The Stäckel Hamiltonians $H_{k}$ on the extended phase space generate
the following Stäckel system
\begin{equation}
\bm{q}_{t_{k}}=\{\bm{q},H_{k}\}\equiv\frac{\partial H_{k}}{\partial\bm{p}},\qquad\bm{p}_{t_{k}}=\{\bm{p},H_{k}\}\equiv-\frac{\partial H_{k}}{\partial\bm{q}},\qquad\bm{c}_{t_{k}}=\{\bm{c},H_{k}\}\equiv0,\qquad k=1,\ldots,n.\label{StE}
\end{equation}

In what follows we will need the following lemma.
\begin{lem}
\label{beata}The element $\mathcal{Q}$ from the Lax matrices \eqref{T11}
and \eqref{T12} associated with the separation curve \eqref{sc}
is a polynomial of order $N$:
\begin{equation}
\mathcal{Q}\equiv-\left[\frac{\mathfrak{w}}{\mathfrak{u}}\right]_{+}=-\left[\frac{\mathfrak{w}}{\mathfrak{u}}\right]_{\geqslant0}=-\lambda^{N}+l.d.t.,\label{qq2}
\end{equation}
where $l.d.t.$ denotes the lower degree terms with coefficients being
functions of $(\bm{q},\bm{p},\bm{c})$.
\end{lem}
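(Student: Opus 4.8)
The plan is to work from formula \eqref{T3c} for $\mathfrak{w}$ (rather than \eqref{T5}) and reduce the whole statement to a count of degrees in $\lambda$. For the curve \eqref{sc} we have
\[
\sigma(\lambda)=\lambda^{2n+N-m}+\sum_{k=1}^{N-m}c_{m+k}\lambda^{n+k-1}+\sum_{k=1}^{m}c_{k}\lambda^{k-m-1},
\]
and the key elementary observation is that $\lambda^{m}\sigma(\lambda)$ is a \emph{genuine} monic polynomial of degree $2n+N$: the lowest exponent occurring in $\lambda^{m}\sigma(\lambda)$ is $m+(1-m-1)=0$, produced by the $k=1$ term of the last sum, so all negative powers of $\lambda$ in $\sigma$ are absorbed exactly by the factor $\lambda^{m}$. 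First I would record the degree bounds that are immediate from the displayed formulas for $\mathfrak{u}$ and $\mathfrak{v}$ (in separation or in Viète coordinates): $\deg_{\lambda}\mathfrak{u}=n$ with leading coefficient $1$, $\deg_{\lambda}\mathfrak{v}\leqslant n-1$, hence $\deg_{\lambda}\mathfrak{v}^{2}\leqslant 2n-2$, and also $\deg_{\lambda}\left(\lambda^{m}\sum_{k=1}^{n}H_{k}\lambda^{n-k}\right)=n+m-1\leqslant n+N-1$.

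Next I would assemble these in \eqref{T3c}. Put $\Pi(\lambda):=\mathfrak{w}\,\mathfrak{u}=\lambda^{m}\bigl(\sigma(\lambda)+\sum_{k=1}^{n}H_{k}\lambda^{n-k}\bigr)-\mathfrak{v}^{2}$. By the bounds above, both $\lambda^{m}\sum_{k}H_{k}\lambda^{n-k}$ and $\mathfrak{v}^{2}$ have degree strictly smaller than $2n+N=\deg_{\lambda}(\lambda^{m}\sigma)$, so $\Pi(\lambda)$ is a monic polynomial of degree exactly $2n+N$. By the factorization property of $\mathfrak{w}$ already recalled after \eqref{T3c} (equivalently, by \eqref{T5}), $\mathfrak{u}$ divides $\Pi$, so $\mathfrak{w}=\Pi/\mathfrak{u}$ is a polynomial; being the quotient of a monic polynomial of degree $2n+N$ by the monic polynomial $\mathfrak{u}$ of degree $n$, it is a monic polynomial of degree $n+N$. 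Moreover all its coefficients are (polynomial) functions of the coordinates $(\bm{q},\bm{p},\bm{c})$, because those of $\mathfrak{u}$ and $\mathfrak{v}$ are, by the Viète formulas, while those of $\sigma$ are $1$ and the constants $c_{i}$.

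Finally, since $\mathfrak{w}$ is an honest polynomial, $[\mathfrak{w}]_{<0}=0$, so \eqref{T7} yields $\bigl[\tfrac{\mathfrak{w}}{\mathfrak{u}}\bigr]_{+}=\bigl[\tfrac{\mathfrak{w}}{\mathfrak{u}}\bigr]_{\geqslant0}$, which is exactly the polynomial quotient in the Euclidean division of $\mathfrak{w}$ by $\mathfrak{u}$. That quotient has degree $(n+N)-n=N$ and leading coefficient $1$, whence $\mathcal{Q}\equiv-\bigl[\tfrac{\mathfrak{w}}{\mathfrak{u}}\bigr]_{+}=-\lambda^{N}+l.d.t.$ with lower-degree coefficients again functions of $(\bm{q},\bm{p},\bm{c})$, which is \eqref{qq2}. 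I do not expect any real obstacle: the only points requiring minimal care are verifying that the negative powers of $\lambda$ in $\sigma(\lambda)$ are exactly cancelled by $\lambda^{m}$ so that $\Pi$ is a true polynomial, and quoting the already-stated divisibility $\mathfrak{u}\mid\Pi$; the rest is bookkeeping of degrees.
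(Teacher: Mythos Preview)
Your proof is correct and follows essentially the same route as the paper: both arguments hinge on the identity \eqref{T7} together with the fact that $\mathfrak{w}$ is a genuine polynomial, after which the equality $[\mathfrak{w}/\mathfrak{u}]_{+}=[\mathfrak{w}/\mathfrak{u}]_{\geqslant0}$ and the degree count are immediate. Your version simply makes explicit the degree bookkeeping (that $\lambda^{m}\sigma(\lambda)$ is a monic polynomial of degree $2n+N$, hence $\mathfrak{w}$ is monic of degree $n+N$) which the paper leaves implicit.
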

\begin{proof}
The relation \eqref{qq2} is a direct consequence of the identity
\eqref{T7} and the fact that $\mathfrak{w}$ is a polynomial.
\end{proof}
We are now in position to formulate and prove the main theorem of
this paper.
\begin{thm}
\label{great} For a given $m\in\{0,1,\ldots,N\}$, the transformation
between the jet variables $[\mathbf{{u}]}$ on the stationary manifold
$\mathcal{M}_{n}$ and the Viète coordinates $(\bm{q},\bm{p},\bm{c})$
given by
\begin{subequations}
\label{mapa}
\begin{equation}
q_{i}=\frac{1}{2}P_{i},\qquad p_{i}=\frac{1}{2}\sum_{j=1}^{n}\bigl(G_{m}^{-1}\bigr)_{ij}(P_{j})_{x},\qquad i=1,\ldots,n,\label{mapaa}
\end{equation}
and
\begin{equation}
c_{i}=h_{i-1},\qquad i=1,\ldots,m,\qquad c_{i}=h_{n+i-1},\qquad i=m+1,...,N,\label{mapac}
\end{equation}
\end{subequations}
 maps (after the identification $t_{1}\equiv x$) the $r$-th flow
$\bm{u}_{t_{r}}=\bm{K}_{r}$ of the stationary cKdV system (\ref{r2})
on $\mathcal{M}_{n}$ onto the $r$-th flow of the Stäckel system
(\ref{StE}). The metric $G_{m}$ in (\ref{mapa}) is expressed in
coordinates $q_{i}=q_{i}[\mathbf{{u}]}$ that are given by the first
formula in (\ref{mapaa}).
\end{thm}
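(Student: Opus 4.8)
The plan is to prove Theorem \ref{great} by identifying the two Lax representations that are already at our disposal. On the leaf $\mathcal{M}_{n,m}^{\bm{c}}$ the stationary cKdV system carries the Lax representation \eqref{s5}, with Lax matrix $\mathbb{V}_{n+1}^{(m)}$ and auxiliary matrices $\mathbb{V}_{k}^{(m)}$ ($k=1,\dots,n$) obtained from \eqref{l11}, while the Benenti--Stäckel system \eqref{StE} carries the Lax representation \eqref{T1}, with Lax matrix $\mathbb{L}$ and auxiliary matrices $\mathbb{U}_{k}$ of Section \ref{4S}. I will show that, after the substitution \eqref{mapa} and the identification $t_{1}\equiv x$, these matrices coincide as $2\times2$ matrices of Laurent polynomials in $\lambda$, i.e. $\mathbb{L}=\mathbb{V}_{n+1}^{(m)}$ and $\mathbb{U}_{k}=\mathbb{V}_{k}^{(m)}$; then the two Lax equations literally coincide and the claimed intertwining of flows follows. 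The first step is to match the scalar ingredients of $\mathbb{L}$: since $P_{0}=2$ and $\mathbb{P}_{n+1}=\sum_{i=0}^{n}P_{n-i}\lambda^{i}$, the first formula in \eqref{mapaa} is precisely the statement $\mathfrak{u}=\tfrac12\mathbb{P}_{n+1}$ (in the Viète form $\mathfrak{u}=\lambda^{n}+\sum_{k}q_{k}\lambda^{n-k}$); inserting the second formula of \eqref{mapaa} into the Viète expression \eqref{kasia} for $\mathfrak{v}$ and using $t_{1}\equiv x$, $(P_{0})_{x}=0$ gives $\mathfrak{v}=-\tfrac14(\mathbb{P}_{n+1})_{x}$; and $\mathfrak{u}_{k}=\bigl[\mathfrak{u}/\lambda^{n-k+1}\bigr]_{+}=\tfrac12\bigl[\lambda^{k-1-n}\mathbb{P}_{n+1}\bigr]_{\geqslant0}=\tfrac12\mathbb{P}_{k}$ by \eqref{c3}.

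Next I would identify $\mathfrak{w}$ through the spectral curve. Proposition \ref{prophk}, together with the constraints \eqref{s19} cutting out $\mathcal{M}_{n,m}^{\bm{c}}$ --- with the constants prescribed as the Casimirs \eqref{mapac} and $H_{i}=h_{n+m-i}$ --- identifies $\mathcal{J}(\mathbb{P}_{n+1},\mathbb{P}_{n+1})$ with $\lambda^{m}\bigl(\sigma(\lambda)+\sum_{k=1}^{n}H_{k}\lambda^{n-k}\bigr)$, i.e. with $\lambda^{m}$ times the left side of the Stäckel curve \eqref{sc}, the non-generic case being handled by Lemma \ref{lemmahk}. On the other hand, expanding the same quantity via \eqref{Jpisane} and substituting $\mathfrak{u}=\tfrac12\mathbb{P}_{n+1}$, $\mathfrak{v}=-\tfrac14(\mathbb{P}_{n+1})_{x}$ yields
\[
\mathcal{J}(\mathbb{P}_{n+1},\mathbb{P}_{n+1})=\mathfrak{v}^{2}+\mathfrak{u}\bigl(-\tfrac14(\mathbb{P}_{n+1})_{xx}-\tfrac12\mathbb{Q}\mathbb{P}_{n+1}\bigr).
\]
Since by \eqref{T3c}--\eqref{T4} the same curve equals $\mathfrak{v}^{2}+\mathfrak{u}\mathfrak{w}$, this forces $\mathfrak{w}=-\tfrac14(\mathbb{P}_{n+1})_{xx}-\tfrac12\mathbb{Q}\mathbb{P}_{n+1}$, whence $\mathbb{L}=\mathbb{V}_{n+1}^{(m)}$; moreover, using Lemma \ref{beata} (so that $\mathcal{Q}=-[\mathfrak{w}/\mathfrak{u}]_{\geqslant0}$) and the fact that $(\mathbb{P}_{n+1})_{xx}/\mathbb{P}_{n+1}$ carries only negative powers of $\lambda$, one recovers $\mathcal{Q}=\mathbb{Q}$, consistently with Lemma \ref{beata}.

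Then I would treat the auxiliary matrices and the dynamics. From $\mathbb{P}_{k}+\bar{\mathbb{P}}_{k}=\lambda^{k-1-n}(\mathbb{P}_{n+1}+\bar{\mathbb{P}}_{n+1})$ (cf. \eqref{c9}) one gets $\mathbb{P}_{k}/\mathbb{P}_{n+1}=\lambda^{k-1-n}-\bigl(\sum_{i=k}^{n}P_{i}\lambda^{k-1-i}\bigr)/\mathbb{P}_{n+1}$, whose second summand is $O(\lambda^{-1})$; a short computation based on this identity (and on $\lambda$ being isospectral) shows that every entry of $\mathbb{V}_{k}^{(m)}-\tfrac{\mathbb{P}_{k}}{\mathbb{P}_{n+1}}\mathbb{V}_{n+1}^{(m)}$ contains only negative powers of $\lambda$. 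As $\mathbb{V}_{k}^{(m)}$ is a matrix polynomial in $\lambda$, this gives $\mathbb{V}_{k}^{(m)}=\bigl[\tfrac{\mathbb{P}_{k}}{\mathbb{P}_{n+1}}\mathbb{V}_{n+1}^{(m)}\bigr]_{\geqslant0}=\bigl[\tfrac{\mathfrak{u}_{k}\mathbb{L}}{\mathfrak{u}}\bigr]_{+}=\mathbb{U}_{k}$ by \eqref{T8} and \eqref{T7}. With the matrix identifications established, the Stäckel Lax equations \eqref{T1} become, under \eqref{mapa}, the stationary cKdV Lax equations \eqref{s5}. By Section \ref{4S} (equivalently \eqref{T10a}--\eqref{T16}) equation \eqref{T1} is equivalent to the Hamiltonian system \eqref{StE}, while by Subsection \ref{33S} equation \eqref{s5} holds along the $r$-th flow $\bm{u}_{t_{r}}=\bm{K}_{r}$; and since the coefficients of $\mathbb{V}_{n+1}^{(m)}$ --- the functions $P_{1},\dots,P_{n}$, $(P_{1})_{x},\dots,(P_{n})_{x}$ and $u_{0},\dots,u_{N-1}$ --- together with the Casimirs \eqref{mapac} furnish the reparametrization of jet variables on $\mathcal{M}_{n}$ announced in the introduction, the map \eqref{mapa} is a diffeomorphism and the evolution of $\mathbb{V}_{n+1}^{(m)}$ determines the flow uniquely. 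Hence \eqref{mapa} carries the $r$-th stationary cKdV flow onto the $r$-th Stäckel flow \eqref{StE}, and for $r=1$ both reduce, under $t_{1}\equiv x$, to the translation flow $\partial_{x}$.

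The main obstacle is the bookkeeping in the two middle steps: one must carefully line up the powers of $\lambda$ produced by the constraints \eqref{s19} --- including the $\tilde g$-corrections of the non-generic case via Lemma \ref{lemmahk} --- against those of $\sigma(\lambda)+\sum_{k}H_{k}\lambda^{n-k}$, and keep exact track of which terms of the relevant rational functions belong to the $[\,\cdot\,]_{\geqslant0}$ and $[\,\cdot\,]_{<0}$ parts when verifying $\mathbb{U}_{k}=\mathbb{V}_{k}^{(m)}$. Establishing that \eqref{mapa} is a genuine change of coordinates on $\mathcal{M}_{n}$, rather than merely a well-defined map, is the other delicate point, and it relies on the structural properties of the functions $P_{i}$ developed in Section \ref{S2}.
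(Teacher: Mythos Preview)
Your proposal is correct and follows essentially the same strategy as the paper: identify the two Lax representations via $\mathfrak{u}=\tfrac12\mathbb{P}_{n+1}$, $\mathcal{Q}=\mathbb{Q}$, and $\mathfrak{u}_{k}=\tfrac12\mathbb{P}_{k}$, so that $\mathbb{L}=\mathbb{V}_{n+1}^{(m)}$ and $\mathbb{U}_{k}=\mathbb{V}_{k}^{(m)}$, whence the flows coincide. The only difference is one of execution: the paper obtains $\mathbb{U}_{k}=\mathbb{V}_{k}^{(m)}$ by directly comparing the ``dotted'' forms \eqref{T11}--\eqref{T12} with \eqref{l11} (once $\mathfrak{u}_{k}=\tfrac12\mathbb{P}_{k}$ and $\mathcal{Q}=\mathbb{Q}$ are known this is immediate), whereas you go through the projection identity \eqref{T8}; your route is a bit more laborious but equally valid, and your explicit derivation of $\mathfrak{w}$ and $\mathcal{Q}=\mathbb{Q}$ from the spectral curve is in fact more transparent than the paper's terse appeal to Lemma~\ref{beata}.
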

\begin{proof}
Let us make the following identification
\begin{equation*}
\mathfrak{u}\equiv\frac{1}{2}\mathbb{P}_{n+1} 
\end{equation*}
between the variables $q_{i}$ and the jet variables $[\bm{u}]$.
This formula immediately yields the first part of the map \eqref{mapa}.
Extending it to the momenta part we immediately obtain the second
part (this part depends on $m$) of (\ref{mapa}). By Lemma \ref{beata}
also $\mathcal{Q}$ and $\mathbb{Q}$ must then coincide (the explicit
identification between $\mathcal{Q}$ and $\mathbb{Q}$ also depends
on $m$). Thus further, after the identification $t_{1}=x$ and by
comparing $\mathbb{L}$ in (\ref{T11}) with $\mathbb{V}_{n+1}$ in
(\ref{l11}) and $\mathbb{U}_{k}$ in (\ref{T12}) with respective
$\mathbb{V}_{k}$ in (\ref{l11}) we see that on the particular leaf
$\mathcal{M}_{n,m}^{\bm{c}}$ we must have
\[
\mathbb{L}\equiv\mathbb{V}_{n+1}^{(m)},\qquad\mathbb{U}_{k}\equiv\mathbb{V}_{k}^{(m)},\qquad k=1,\ldots,n,
\]
so that Lax formulations of both systems coincide. Moreover, (\ref{T13})
coincides with (\ref{s7}), (\ref{T14}) coincides with (\ref{s1})
and (\ref{T16}) coincides with (\ref{c2}). Finally, the map (\ref{mapac})
is given by (\ref{s19}). Thus, the whole map (\ref{mapa}) maps the
Stäckel system defined by the curve (\ref{sc}) to the corresponding
stationary cKdV system (\ref{r2}) on $\mathcal{M}_{n}$.
\end{proof}
\begin{rem}
Let us remark that the map (\ref{mapaa}) is defined on each leaf
$\mathcal{M}_{n,m}^{\bm{c}}$ while the whole map (\ref{mapa}) is
defined on the stationary manifold $\mathcal{M}_{n}$. Thus, \eqref{mapaa}
maps the $r$-th flow $\bm{u}_{t_{r}}=\bm{K}_{r}$ on the leaf $\mathcal{M}_{n,m}^{\bm{c}}$
onto the $r$-th flow of the Stäckel system (\ref{Hamk}) (we remind
the reader that each cKdV flow $\bm{u}_{t_{r}}=\bm{K}_{r}$ is tangent
not only to $\mathcal{M}_{n}$ but also to each leaf $\mathcal{M}_{n,m}^{\bm{c}}$
of the respective foliation).
\end{rem}

\section{Miura maps between Stäckel representations on $\mathcal{M}_{n}$\label{6S}}

In this chapter we prove that all $N+1$ Stäckel representations
(\ref{StE}) of the stationary cKdV system (\ref{r2}), when
considered on the whole stationary manifold $\mathcal{M}_{n}$ and
not only on the leaves $M_{n,m}^{\bm{c}}$ (that is on the extended
phase space when $\bm{c}$ are variables on $\mathcal{M}_{n}$
rather than parameters of the foliation $M_{n,m}^{\bm{c}}$) are
connected by appropriate Miura maps. To simplify the presentation,
we will only consider the case when one of the representations is
given by $m=0$ (the general case does not create any problems
whatsoever). Consider thus two Stäckel representations of the
same cKdV stationary system (\ref{r2}) on $\mathcal{M}_{n}$
generated by the curves
\begin{equation}
\lambda^{2n+N}+\sum_{k=1}^{N}c_{k}\lambda^{n+k-1}+\sum_{k=1}^{n}H_{k}\lambda^{n-k}=\mu^{2}\label{K0}
\end{equation}
and
\begin{equation}
\bar{\lambda}^{2n+N-m}+\sum_{k=1}^{N-m}\bar{c}_{m+k}\bar{\lambda}^{n+k-1}+\sum_{k=1}^{n}\bar{H}_{k}\bar{\lambda}^{n-k}+\sum_{k=1}^{m}\bar{c}_{k}\bar{\lambda}^{k-m-1}=\bar{\lambda}^{m}\bar{\mu}^{2},\label{Km}
\end{equation}
with $m\in\left\{ 1,\ldots,N\right\} $. We will thus consider the
Stäckel system generated by the curve (\ref{K0}) (it has the form
(\ref{StE})) in the extended Viète coordinates
$(\bm{q},\bm{p},\bm{c})$, with $(\bm{q},\bm{p})$ still given by
(\ref{V1}) and we will also consider the Stäckel system generated
by the curve (\ref{Km}) (it also has the form (\ref{StE})) in the
extended Viète coordinates
$(\bar{\bm{q}},\bar{\bm{p}},\bar{\bm{c}})$ where
$(\bar{\bm{q}},\bar{\bm{p}})$ are Viète coordinates defined by
$\bar{\lambda}_{k}$ and $\bar{\mu}_{k}$. The theorem below shows
that these systems are indeed two different parametrizations of
the same system, connected by a Miura map, as it should be, since
both are connected by two invertible maps (\ref{mapa}) with the
same stationary cKdV system (\ref{c11a}).
\begin{thm}
\label{Miury} The following Miura map on the stationary manifold
$\mathcal{M}_{n}$
\begin{align}
\bm{q} & =\bar{\bm{q}}\nonumber \\
\bm{p} & =\left(R^{T}\right)^{m}\bar{\bm{p}},\qquad\left[\left(R^{T}\right)^{m}\right]_{ij}=\mathcal{V}_{j}^{(n-i+m)}(\bar{\bm{q}}),\quad i,j=1,...,n,\label{mu1}\\
c_{i} & =\bar{H}_{m-i+1}(\bar{\bm{q}},\bar{\bm{p}},\bar{\bm{c}}),\qquad i=1,\ldots,m,\nonumber \\
c_{i} & =\bar{c}_{i}\text{, \ \ }i=m+1,\ldots,N\nonumber
\end{align}
transforms the Stäckel system defined by the curve (\ref{Km}) to
the Stäckel system generated by the curve (\ref{K0}). Similarly,
the inverse of this map, given by
\begin{align*}
\bar{\bm{q}} & =\bm{q}\nonumber \\
\bar{\bm{p}} & =\left(R^{T}\right)^{-m}\bm{p},\qquad\left[\left(R^{T}\right)^{m}\right]_{ij}=\mathcal{V}_{j}^{(n-i+m)}(\bm{q}),\quad i,j=1,...,n,\\ 
\bar{c}_{i} & =H_{n-i+1}(\bm{q},\bm{p},\bm{c}),\qquad i=1,\ldots,m\nonumber \\
\bar{c}_{i} & =c_{i},\qquad i=m+1,\ldots,N.\nonumber
\end{align*}
transforms the Stäckel system defined by the curve (\ref{K0}) to
the Stäckel system generated by the curve (\ref{Km}).
\end{thm}
\begin{proof}
For the fixed $m\in\{1,\ldots,N\}$, the map
\begin{equation}
\lambda=\bar{\lambda},\qquad\mu=\bar{\lambda}^{m}\bar{\mu}\label{pmap}
\end{equation}
transforms the curve (\ref{Km}) into the curve (\ref{K0}), provided
that
\begin{equation}
\begin{split}c_{i} & =\bar{c}_{i},\qquad i=m+1,\ldots,N\\
c_{i} & =\bar{H}_{m-i+1},\qquad i=1,\ldots,m\\
H_{i} & =\bar{H}_{m+i},\qquad i=1,\ldots,n-m\\
H_{i} & =\bar{c}_{n-i+1},\qquad i=n-m+1,\ldots,n.
\end{split}
\label{map}
\end{equation}
As result, the maps (\ref{pmap}) and (\ref{map}) induce the following
Miura map $\mathfrak{M}:\mathcal{M}_{n}\rightarrow\mathcal{M}_{n}$
\begin{equation}
\begin{split}\lambda_{i} & =\bar{\lambda}_{i},\qquad i=1,\ldots,n\\
\mu_{i} & =\bar{\lambda}_{i}^{m}\bar{\mu}_{i},\qquad i=1,\ldots,n\\
c_{i} & =\bar{H}_{m-i+1}(\bar{\bm{\lambda}},\bar{\bm{\mu}},\bar{\bm{c}}),\qquad i=1,\ldots,m\\
c_{i} & =\bar{c}_{i},\qquad i=m+1,\ldots.
\end{split}
\label{miura}
\end{equation}
Let us now write the Miura map (\ref{miura}) in Viète coordinates.
Since $\lambda_{i}=\bar{\lambda}_{i}$ we have $q_{i}=\bar{q}_{i}$.
Further, since $\mu_{i}=\bar{\lambda}_{i}^{m}\bar{\mu}_{i}$, we obtain
\[
p_{i}=-\sum_{k=1}^{n}\frac{\lambda_{k}^{n-i}\mu_{k}}{\Delta_{k}}=-\sum_{k=1}^{n}\frac{\bar{\lambda}_{k}^{n-i+m}\bar{\mu}_{k}}{\bar{\Delta}_{k}}=\sum_{k=1}^{n}\mathcal{V}_{k}^{(n-i+m)}(\bar{\bm{q}})\bar{p}_{k},
\]
where the last equality follows from (\ref{nice}). Alternatively,
we see that the first Hamiltonian flows of the corresponding Stäckel
systems are $\dot{\bm{q}}=G_{0}\bm{p}$ and $\dot{\bar{\bm{q}}}=G_{m}\bar{\bm{p}}$.\footnote{Note that we are not distinguishing between metrics defined in coordinates
$q_{i}$ and $\bar{q}_{i}$ because $q_{i}\equiv\bar{q}_{i}$, analogously
in the case of the matrix $R$.} Now, since $\dot{\bm{q}}=\dot{\bar{\bm{q}}}$ and $G_{m}=R^{m}G_{0}=G_{0}\left(R^{T}\right)^{m}$
we have
\[
\bm{p}=G_{0}^{-1}G_{m}\bar{\bm{p}}=\left(R^{T}\right)^{m}\bar{\bm{p}}.
\]
Rewriting the Hamiltonians $\bar{H}_{m}$ in the coordinates $(\bar{\bm{q}},\bar{\bm{p}},\bar{\bm{c}})$
completes expressing the Miura map (\ref{miura}) in Viéte coordinates.

The proof of the second statement of the theorem is analogous. The
relations (\ref{map}) can be inverted to
\begin{equation}
\begin{split}\bar{c}_{i} & =H_{n-i+1},\qquad i=1,\ldots,m\\
\bar{c}_{i} & =c_{i},\qquad i=m+1,\ldots,N\\
\bar{H}_{i} & =c_{m-i+1},\qquad i=1,\ldots,m\\
\bar{H}_{i} & =H_{i-m},\qquad i=m+1,\ldots,n.
\end{split}
\label{mapinv}
\end{equation}
The maps (\ref{pmap}) and (\ref{mapinv}) induce the inverse Miura
map $\mathfrak{M}^{-1}:\mathcal{M}_{n}\rightarrow\mathcal{M}_{n}$
given by
\begin{equation}
\begin{split}\bar{\lambda}_{i} & =\lambda_{i},\qquad i=1,\ldots,n\\
\bar{\mu}_{i} & =\lambda_{i}^{-m}\mu_{i},\qquad i=1,\ldots,n\\
\bar{c}_{i} & =H_{n-i+1}(\bm{\lambda},\bm{\mu},\bm{c}),\qquad i=1,\ldots,m\\
\bar{c}_{i} & =c_{i},\qquad i=m+1,\ldots,N,
\end{split}
\label{minv}
\end{equation}

The Miura map $\mathfrak{M}$ maps the Stäckel system generated by
the curve (\ref{Km}) into the Stäckel system generated by the
curve (\ref{K0}), while the map $\mathfrak{M}^{-1}$ maps this
system back into the system generated by (\ref{Km}).
\end{proof}

Obviously, $(\bm{q},\bm{p},\bm{c})$ are canonical coordinates with
respect to the Poisson bi-vector
$\pi=\sum_{i=1}^{n}\frac{\partial}{\partial
q_{i}}\wedge\frac{\partial}{\partial p_{i}}$ while the coordinates
$(\bar{\bm{q}},\bar{\bm{p}},\bar{\bm{c}})$ are canonical
coordinates with respect to the Poisson bi-vector
$\bar{\pi}=\sum_{i=1}^{n}\frac{\partial}{\partial\bar{q}_{i}}\wedge\frac{\partial}{\partial\bar{p}_{i}}$.
Since there exists a Miura map between (\ref{K0}) and (\ref{Km})\
for each $m\in\left\{ 1,\ldots,N\right\} $, we can construct, in a
standard way, \,$N+1$ Poisson bi-vectors for the Stäckel system
(\ref{K0}) on $\mathcal{M}_{n}$, each given by rewriting the
bi-vector $\bar{\pi}$ in the coordinates $(\bm{q},\bm{p},\bm{c})$,
yielding $N+1$ Hamiltonian representations of the Stäckel system
(\ref{K0}). Thus, we have proved that the $N+1$ Hamiltonian
representations of cKdV hierarchy leads to $N+1$ Hamiltonian
representations of its stationary system. Let us note that the
very existence of Miura maps between stationary systems (and hence
the multi-Hamiltonian structure of these systems) has been first
observed in \cite{ant87} for the KdV case.

\section{Examples}\label{7S}

In this final section we present examples illustrating all main ingredients of the
theory. Subsection \ref{P1} focuses on the DWW hierarchy (so that $N=2$) and contains two examples. The first example is for $n=2$ and all $m=0,1,2$ and is given with all details. In the second example of this subsection we present the case $n=3$ and $m=0$. In subsection \ref{ostatni} we illustrate our theory for the case $N=4$, $n=2$ and $m=0$. This example aims to show what happens in case we reduce the cKdV hierarchy to stationary manifolds $\mathcal{M}_{n}$ of low (in comparison
to the number $N$ of components) dimension.

\subsection{Dispersive Water Waves hierarchy} \label{P1}
  
Assume that $N=2$ and denote $\bm{u}:=(u_{0},u_{1})^{T}=(u,v)^{T}$.
Then the formulas from Section~\ref{S2} lead to the DWW hierarchy, the first members
of which are given by
\begin{subequations}\label{pole123}
\begin{equation}
\begin{pmatrix}u\\
v
\end{pmatrix}_{t_{1}}=\bm{K}_{1}\equiv\begin{pmatrix}u_{x}\\
v_{x}
\end{pmatrix},\qquad\begin{pmatrix}u\\
v
\end{pmatrix}_{t_{2}}=\bm{K}_{2}\equiv\begin{pmatrix}\frac{1}{2}vu_{x}+uv_{x}+\frac{1}{4}v_{3x}\\
u_{x}+\frac{3}{2}vv_{x}
\end{pmatrix},\label{pole12}
\end{equation}
\begin{equation}
\begin{pmatrix}u\\
v
\end{pmatrix}_{t_{3}}=\bm{K}_{3}\equiv\begin{pmatrix}\frac{3}{8}v^{2}u_{x}+\frac{3}{2}uvv_{x}+\frac{3}{2}uu_{x}+\frac{1}{4}u_{3x}+\frac{3}{8}vv_{3x}+\frac{9}{8}v_{x}v_{2x}\\
\frac{3}{2}vu_{x}+\frac{3}{2}uv_{x}+\frac{15}{8}v^{2}v_{x}+\frac{1}{4}v_{3x}
\end{pmatrix},\label{pole3}
\end{equation}
\begin{align}
 & \begin{pmatrix}u\\
v
\end{pmatrix}_{t_{4}}=\bm{K}_{4}\equiv\begin{pmatrix}(\bm{K}_{4})_1\\
\frac{15}{8}v^{2}u_{x}+\frac{15}{4}uvv_{x}+\frac{3}{2}uu_{x}+\frac{1}{4}u_{3x}+\frac{35}{16}v^{3}v_{x}+\frac{5}{8}vv_{3x}+\frac{5}{4}v_{x}v_{2x}
\end{pmatrix},\label{pole4}
\end{align}
\end{subequations}
where
\begin{align*}
(\bm{K}_{4})_1 &= \frac{3}{2}u^{2}v_{x}+\frac{5}{16}v^{3}u_{x}+\frac{15}{8}uv^{2}v_{x}+\frac{9}{4}uvu_{x}+\frac{3}{8}vu_{3x}+\frac{9}{8}u_{2x}v_{x}+\frac{5}{4}u_{x}v_{2x}+\frac{5}{8}uv_{3x}\\
&\quad+\frac{15}{32}v^{2}v_{3x}+\frac{45}{16}vv_{x}v_{2x}+\frac{15}{16}v_{x}^{3}+\frac{1}{16}v_{5x}
\end{align*}
 This hierarchy, according to (\ref{mham}), is three-hamiltonian
with the Hamiltonian operators given by
\begin{align*}
\mathbb{B}_{0} & =\begin{pmatrix}-\frac{1}{2}v\partial_{x}-\frac{1}{2}\partial_{x}v & \partial_{x}\\
\partial_{x} & 0
\end{pmatrix},\qquad\mathbb{B}_{1}=\begin{pmatrix}\frac{1}{4}\partial_{x}^{3}+\frac{1}{2}u\partial_{x}+\frac{1}{2}\partial_{x}u & 0\\
0 & \partial_{x}
\end{pmatrix},\\
\mathbb{B}_{2} & =\begin{pmatrix}0 & \frac{1}{4}\partial_{x}^{3}+\frac{1}{2}u\partial_{x}+\frac{1}{2}\partial_{x}u\\
\frac{1}{4}\partial_{x}^{3}+\frac{1}{2}u\partial_{x}+\frac{1}{2}\partial_{x}u & \frac{1}{2}v\partial_{x}+\frac{1}{2}\partial_{x}v
\end{pmatrix}.
\end{align*}
The cosymmetries $\bm{\gamma}_{k}=(P_{k},P_{k+1})^{T}$ are given
by
\begin{align*}
P_{0} & =2,\qquad P_{1}=v,\qquad P_{2}=u+\frac{3}{4}v^{2},\qquad P_{3}=\frac{3}{2}uv+\frac{5}{8}v^{3}+\frac{1}{4}v_{2x},\\
P_{4} & =\frac{3}{4}u^{2}+\frac{15}{8}uv^{2}+\frac{1}{4}u_{2x}+\frac{35}{64}v^{4}+\frac{5}{8}vv_{2x}+\frac{5}{16}v_{x}^{2},\\
P_{5} & =\frac{15}{8}u^{2}v+\frac{35}{16}uv^{3}+\frac{5}{8}vu_{2x}+\frac{5}{8}u_{x}v_{x}+\frac{5}{8}uv_{2x}+\frac{63}{128}v^{5}+\frac{35}{32}v^{2}v_{2x}+\frac{35}{32}vv_{x}^{2}+\frac{1}{16}v_{4x},\\
 & \vdots
\end{align*}
while the recursion operator (\ref{R}) attains the form
\[
\mathbb{R}=\left(\begin{array}{cc}
0 & \frac{1}{4}\partial_{x}^{2}+u+\frac{1}{2}u_{x}\partial_{x}^{-1}\\
1 & v+\frac{1}{2}v_{x}\partial_{x}^{-1}
\end{array}\right).
\]
Moreover, the Lax matrix $\mathbb{V}_{1}$ and the first three auxiliary
matrices $\mathbb{V}_{k}$ are as follows
\begin{align*}
\mathbb{V}_{1} & =\begin{pmatrix}0 & 1\\
\lambda^{2}-v\lambda-u & 0
\end{pmatrix},\qquad\mathbb{V}_{2}=\begin{pmatrix}-\frac{1}{4}v_{x} & \lambda+\frac{1}{2}v\\
\lambda^{3}-\frac{1}{2}v\lambda^{2}-\left(u+\frac{1}{2}v^{2}\right)\lambda-\frac{1}{2}uv-\frac{1}{4}v_{2x} & \frac{1}{4}v_{x}
\end{pmatrix},
\end{align*}
\begin{align*}
\mathbb{V}_{3} & =\begin{pmatrix}-\frac{1}{4}v_{x}\lambda-\frac{1}{4}u_{x}-\frac{3}{8}vv_{x} & \lambda^{2}+\frac{1}{2}v\lambda+\frac{3}{8}v^{2}+\frac{1}{2}u\\
\left(\mathbb{V}_{3}\right)_{21} & \frac{1}{4}v_{x}\lambda+\frac{1}{4}u_{x}+\frac{3}{8}vv_{x}
\end{pmatrix},
\end{align*}
 where
\[
(\mathbb{V}_{3})_{21}=\lambda^{4}-\frac{1}{2}v\lambda^{3}-\frac{1}{8}\left(4u+v^{2}\right)\lambda^{2}-\frac{1}{8}\left(8uv+3v^{3}+2v_{2x}\right)\lambda-\frac{1}{8}\left(4u^{2}+3uv^{2}+2u_{2x}+3v_{x}^{2}+3vv_{2x}\right),
\]
and
\[
\mathbb{V}_{4}=\begin{pmatrix}(\mathbb{V}_{4})_{11} & \lambda^{3}+\frac{1}{2}v\lambda^{2}+\frac{1}{8}\left(3v^{2}+4u\right)\lambda+\frac{1}{16}\left(5v^{3}+12uv+2v_{2x}\right)\\
(\mathbb{V}_{4})_{21} & -(\mathbb{V}_{4})_{11}
\end{pmatrix},
\]
where
\begin{align*}
(\mathbb{V}_{4})_{11} & =-\frac{1}{4}v_{x}\lambda^{2}-\frac{1}{8}\left(2u_{x}+3vv_{x}\right)\lambda-\frac{1}{32}\left(12vu_{x}+12uv_{x}+15v^{2}v_{x}+2v_{3x}\right),\\
(\mathbb{V}_{4})_{21} & =\lambda^{5}-\frac{1}{2}v\lambda^{4}-\frac{1}{8}\left(4u+v^{2}\right)\lambda^{3}-\frac{1}{16}\left(4uv+v^{3}+2v_{2x}\right)\lambda^{2}\\
 & \quad-\frac{1}{16}\left(8u^{2}+18uv^{2}+4u_{2x}+5v^{4}+8vv_{2x}+6v_{x}^{2}\right)\lambda\\
 & \quad-\frac{1}{16}\left(12u^{2}v+5uv^{3}+6vu_{2x}+12u_{x}v_{x}+8uv_{2x}+\frac{15}{2}v^{2}v_{2x}+15vv_{x}^{2}+v_{4x}\right).
\end{align*}

\subsubsection{The stationary reduction with $n=2$ and $m=0,1,2$} \label{dlugi}

Here we consider three Stäckel representations of the stationary DWW system for $n=2$.

\subsubsection*{Stationary system}

For $N=2$ and $n=2$ the stationary system (\ref{r2}) consists of two evolution equations \eqref{pole12}

\begin{equation}
\begin{pmatrix}u\\
v
\end{pmatrix}_{t_{1}}=\bm{K}_{1}\equiv\begin{pmatrix}u_{x}\\
v_{x}
\end{pmatrix},\qquad\begin{pmatrix}u\\
v
\end{pmatrix}_{t_{2}}=\bm{K}_{2}\equiv\begin{pmatrix}\frac{1}{2}vu_{x}+uv_{x}+\frac{1}{4}v_{3x}\\
u_{x}+\frac{3}{2}vv_{x}
\end{pmatrix}\label{pole12s}
\end{equation}
and of the stationary flow $\bm{K}_{3}=0$,
\[
\begin{pmatrix}\frac{3}{8}v^{2}u_{x}+\frac{3}{2}uvv_{x}+\frac{3}{2}uu_{x}+\frac{1}{4}u_{3x}+\frac{3}{8}vv_{3x}+\frac{9}{8}v_{x}v_{2x}\\
\frac{3}{2}vu_{x}+\frac{3}{2}uv_{x}+\frac{15}{8}v^{2}v_{x}+\frac{1}{4}v_{3x}
\end{pmatrix}=\begin{pmatrix}0\\
0
\end{pmatrix}.
\]
The normal form of the stationary flow $\bm{K}_{3}=0$ is
\begin{align*}
u_{3x} & =-6uu_{x}+\frac{15}{2}v^{2}u_{x}+3uvv_{x}+\frac{45}{4}v^{3}v_{x}-\frac{9}{2}v_{x}v_{2x}\\
v_{3x} & =-6vu_{x}-6uv_{x}-\frac{15}{2}v^{2}v_{x}
\end{align*}
and yields us the stationary manifold $\mathcal{M}_{2}$
parameterized by the jet variables
$[\bm{u}]=(u,u_{x},u_{2x},v,v_{x},v_{2x})$. The vector fields
(\ref{pole12s}) in this parametrization attain on
$\mathcal{M}_{2}$ the form
\begin{equation}
\begin{pmatrix}u\\
v
\end{pmatrix}_{t_{1}}=\begin{pmatrix}u_{x}\\
v_{x}
\end{pmatrix},\qquad\begin{pmatrix}u\\
v
\end{pmatrix}_{t_{2}}=\begin{pmatrix}-vu_{x}-\frac{1}{2}uv_{x}-\frac{15}{8}v^{2}v_{x}\\
u_{x}+\frac{3}{2}vv_{x}
\end{pmatrix}.\label{s2}
\end{equation}
The corresponding separation curve (\ref{s9}) attains the form
\[
\lambda^{6}+h_{3}\lambda^{3}+h_{2}\lambda^{2}+h_{1}\lambda+h_{0}=\lambda^{2m}\mu^{2},
\]
which yields the following $h_{k}$ in (\ref{s14}) on $\mathcal{M}_{2}$
\begin{align*}
h_{3} & =-\frac{3}{2}uv-\frac{1}{4}v_{2x}-\frac{5}{8}v^{3},\\
h_{2} & =-\frac{9}{8}uv^{2}-\frac{1}{4}u_{2x}-\frac{3}{4}u^{2}-\frac{1}{2}vv_{2x}-\frac{5}{16}v_{x}^{2}-\frac{15}{64}v,\\
h_{1} & =-\frac{1}{8}vu_{2x}+\frac{1}{8}u_{x}v_{x}-\frac{1}{8}uv_{2x}-\frac{3}{4}uv^{3}-\frac{3}{4}u^{2}v-\frac{9}{32}v^{2}v_{2x}-\frac{9}{64}v^{5},\\
h_{0} & =-\frac{3}{32}v^{2}u_{2x}+\frac{3}{16}vu_{x}v_{x}-\frac{3}{16}uvv_{2x}-\frac{3}{16}uv_{x}^{2}-\frac{9}{64}uv^{4}-\frac{3}{8}u^{2}v^{2}+\frac{1}{16}u_{x}^{2}-\frac{1}{8}uu_{2x}-\frac{1}{4}u^{3}-\frac{9}{64}v^{3}v_{2x}.
\end{align*}

\subsubsection*{Foliation for $m=0$}

Consider first the case $m=0$. Putting $h_{3}=c_{2}$ and $h_{2}=c_{1}$
we obtain the foliation of $\mathcal{M}_{2}$ into leaves $\mathcal{M}_{2,0}^{\bm{c}}$.
Solving these relations with respect to $u_{2x}$ and $v_{2x}$,
\begin{align*}
u_{2x}=8c_{2}v-4c_{1}-3u^{2}+\frac{15}{2}uv^{2}+\frac{65}{16}v^{4}-\frac{5}{4}v_{x}^{2},\qquad v_{2x}=-4c_{2}-6uv-\frac{5}{2}v^{3},
\end{align*}
we arrive at the curve (\ref{SCP}) for the leaves $\mathcal{M}_{2,0}^{\bm{c}}$.
It has the form
\begin{equation}
\lambda^{6}+c_{2}\lambda^{3}+c_{1}\lambda^{2}+H_{1}\lambda+H_{2}=\mu^{2}\label{k1}
\end{equation}
with $H_{i}$ given by
\begin{align*}
H_{1} & =\frac{3}{8}u^{2}v+\frac{5}{16}uv^{3}+\frac{1}{8}u_{x}v_{x}+\frac{7}{128}v^{5}+\frac{5}{32}vv_{x}^{2}+\frac{1}{2}c_{2}u+\frac{1}{8}c_{2}v^{2}+\frac{1}{2}c_{1}v,\\
H_{2} & =+\frac{1}{8}u^{3}+\frac{3}{32}u^{2}v^{2}-\frac{5}{128}uv^{4}+\frac{3}{16}vu_{x}v_{x}-\frac{1}{32}uv_{x}^{2}+\frac{1}{16}u_{x}^{2}-\frac{15}{512}v^{6}\\
 & \quad+\frac{15}{128}v^{2}v_{x}^{2}-\frac{1}{4}c_{2}uv+\frac{1}{2}c_{1}u-\frac{3}{16}c_{2}v^{3}+\frac{3}{8}c_{1}v^{2}.
\end{align*}
The Lax matrices $\mathbb{V}_{k}^{(0)}$ in (\ref{s5}) are as follows,
$\mathbb{V}_{1}^{(0)}=\mathbb{V}_{1}$,
\[
\mathbb{V}_{2}^{(0)}=\begin{pmatrix}-\frac{1}{4}v_{x} & \frac{1}{2}v+\lambda\\
\lambda^{3}-\frac{1}{2}v\lambda^{2}-\left(u+\frac{1}{2}v^{2}\right)\lambda+uv+\frac{5}{8}v^{3}+c_{2} & \frac{1}{4}v_{x}
\end{pmatrix}
\]
\[
\mathbb{V}_{3}^{(0)}=\begin{pmatrix}-\frac{1}{4}v_{x}\lambda-\frac{1}{4}u_{x}-\frac{3}{8}vv_{x} & \lambda^{2}+\frac{1}{2}v\lambda+\frac{3}{8}v^{2}+\frac{1}{2}u\\
\left(\mathbb{V}_{3}^{(0)}\right)_{21} & \frac{1}{4}v_{x}\lambda+\frac{1}{4}u_{x}+\frac{3}{8}vv_{x}
\end{pmatrix},
\]
where
\[
\left(\mathbb{V}_{3}^{(0)}\right)_{21}=\lambda^{4}-\frac{1}{2}v\lambda^{3}-\frac{1}{2}\left(u-\frac{1}{4}v^{2}\right)\lambda^{2}+\left(\frac{1}{2}uv+\frac{1}{4}v^{3}+c_{2}\right)\lambda-\frac{5}{64}v^{4}+\frac{1}{4}u^{2}-\frac{1}{16}v_{x}^{2}+c_{1}-\frac{1}{2}c_{2}v.
\]

\subsubsection*{Foliation for $m=1$}

For the case $m=1$, putting $h_{0}=c_{1}$ and $h_{3}=c_{2}$ we
obtain the foliation of $\mathcal{M}_{2}$ into leaves $\mathcal{M}_{2,1}^{\bm{c}}$.
Solving these relations with respect to $u_{2x}$ and $v_{2x}$,
\begin{align*}
u_{2x}=6c_{2}v-2u^{2}+\frac{15}{2}uv^{2}+\frac{15}{4}v^{4}+\frac{6vu_{x}v_{x}-6uv_{x}^{2}+2u_{x}^{2}-32c_{1}}{4u+3v^{2}},\qquad v_{2x}=-4c_{2}-6uv-\frac{5}{2}v^{3},
\end{align*}
we arrive at the curve (\ref{SCP}) for the leaves $\mathcal{M}_{2,1}^{\bm{c}}$,
\begin{equation}
\lambda^{5}+c_{2}\lambda^{2}+c_{1}\lambda^{-1}+H_{1}\lambda+H_{2}=\lambda\mu^{2},\label{k2}
\end{equation}
where
\begin{align*}
H_{1} & =\frac{5}{64}v^{4}-\frac{1}{4}u^{2}+\frac{1}{16}v_{x}^{2}-\frac{(2u_{x}+3vv_{x})^{2}}{8\left(4u+3v^{2}\right)}+\frac{8c_{1}}{4u+3v^{2}}+\frac{1}{2}c_{2}v,\\
H_{2} & =\frac{1}{32}v(v^{2}+2u)\left(4u+3v^{2}\right)-\frac{(2u_{x}+3vv_{x})(vu_{x}-2uv_{x})}{8\left(4u+3v^{2}\right)}+\frac{4c_{1}v}{4u+3v^{2}}+\frac{1}{2}c_{2}u+\frac{3}{8}c_{2}v^{2}.
\end{align*}
The Lax matrices $\mathbb{V}_{k}^{(1)}$ in (\ref{s5}) are as follows,
$\mathbb{V}_{1}^{(1)}=\mathbb{V}_{1}$,
\[
\mathbb{V}_{2}^{(1)}=\begin{pmatrix}-\frac{1}{4}v_{x} & \frac{1}{2}v+\lambda\\
\lambda^{3}-\frac{1}{2}v\lambda^{2}-\left(u+\frac{1}{2}v^{2}\right)\lambda+uv+\frac{5}{8}v^{3}+c_{2} & \frac{1}{4}v_{x}
\end{pmatrix}
\]
\[
\mathbb{V}_{3}^{(1)}=\begin{pmatrix}-\frac{1}{4}v_{x}\lambda-\frac{1}{4}u_{x}-\frac{3}{8}vv_{x} & \lambda^{2}+\frac{1}{2}v\lambda+\frac{3}{8}v^{2}+\frac{1}{2}u\\
\left(\mathbb{V}_{3}^{(1)}\right)_{21} & \frac{1}{4}v_{x}\lambda+\frac{1}{4}u_{x}+\frac{3}{8}vv_{x}
\end{pmatrix},
\]
where
\[
\left(\mathbb{V}_{3}^{(1)}\right)_{21}=\lambda^{4}-\frac{1}{2}v\lambda^{3}-\frac{1}{2}\left(u-\frac{1}{4}v^{2}\right)\lambda^{2}+\left(\frac{1}{2}uv+\frac{1}{4}v^{3}+c_{2}\right)\lambda-\frac{9v^{2}v_{x}^{2}+12vu_{x}v_{x}+4u_{x}^{2}-64c_{1}}{8(3v^{2}+4u)}.
\]

\subsubsection*{Foliation for $m=2$}

Finally, for $m=2$ we put $h_{0}=c_{1}$ and $h_{1}=c_{2}$ which
leads to the foliation $\mathcal{M}_{2,2}^{\bm{c}}$ of $\mathcal{M}_{2}$.
Solving these relations again with respect to $u_{2x}$ and $v_{2x}$,
\begin{align*}
u_{2x} & =\frac{48c_{2}v}{4u+3v^{2}}-\frac{2\left(4u+9v^{2}\right)\left(16c_{1}+3uv_{x}^{2}-u_{x}^{2}\right)-36v^{3}u_{x}v_{x}}{\left(4u+3v^{2}\right)^{2}}-2u^{2}+\frac{9}{2}uv^{2}+\frac{9}{4}v^{4},\\
v_{2x} & =\frac{256c_{1}v-12v^{2}u_{x}v_{x}-16vu_{x}^{2}+48uvv_{x}^{2}+16uu_{x}v_{x}}{\left(4u+3v^{2}\right)^{2}}-\frac{32c_{2}}{4u+3v^{2}}-4uv-\frac{3}{2}v^{3},
\end{align*}
we arrive at the curve (\ref{SCP}) for the leaves $\mathcal{M}_{2,2}^{\bm{c}}$.
It has the form
\begin{equation*}
\lambda^{4}+c_{2}\lambda^{-1}+c_{1}\lambda^{-2}+H_{1}\lambda+H_{2}=\lambda^{2}\mu^{2} 
\end{equation*}
while
\begin{align*}
H_{1} & =-\frac{\left(2uv_{x}-vu_{x}\right)\left(2u_{x}+3vv_{x}\right)}{\left(4u+3v^{2}\right)^{2}}-\frac{1}{4}v\left(2u+v^{2}\right)+\frac{8c_{2}}{4u+3v^{2}}-\frac{32c_{1}v}{\left(4u+3v^{2}\right)^{2}},\\
H_{2} & =\frac{\left(\left(3v^{2}-4u\right)v_{x}+4vu_{x}\right)^{2}}{16\left(4u+3v^{2}\right)^{2}}-\frac{\left(2u_{x}+3vv_{x}\right)^{2}}{8\left(4u+3v^{2}\right)}-\frac{1}{64}\left(4u+v^{2}\right)\left(4u+3v^{2}\right)+\frac{4c_{2}v}{4u+3v^{2}}+\frac{8c_{1}\left(4u+v^{2}\right)}{\left(4u+3v^{2}\right)^{2}}.
\end{align*}
The Lax matrices $\mathbb{V}_{k}^{(2)}$ in (\ref{s5}) are as follows,
$\mathbb{V}_{1}^{(2)}=\mathbb{V}_{1}$,
\[
\mathbb{V}_{2}^{(2)}=\begin{pmatrix}-\frac{1}{4}v_{x} & \lambda+\frac{1}{2}v\\
\lambda^{3}-\frac{1}{2}v\lambda^{2}-\left(u+\frac{1}{2}v^{2}\right)\lambda-\bm{\kappa} +\frac{1}{8}v(3v^{2}+4u) & \frac{1}{4}v_{x}
\end{pmatrix},
\]
\[
\mathbb{V}_{3}^{(2)}=\begin{pmatrix}-\frac{1}{4}v_{x}\lambda-\frac{1}{4}u_{x}-\frac{3}{8}vv_{x} & \lambda^{2}+\frac{1}{2}v\lambda+\frac{3}{8}v^{2}+\frac{1}{2}u\\
\lambda^{4}-\frac{1}{2}v\lambda^{3}-\frac{1}{2}\left(u+\frac{1}{4}v^{2}\right)\lambda^{2}
-\bm{\kappa}\,\lambda-\frac{1}{8}\frac{(3v\,v_{x}+2u_{x})^{2}}{3v^{2}+4u}+c_{1}\frac{8}{3v^{2}+4u} & \frac{1}{4}v_{x}\lambda+\frac{1}{4}u_{x}+\frac{3}{8}vv_{x}
\end{pmatrix},
\]
where
\[
\bm{\kappa} =\frac{(3v\,v_{x}+2u_{x})(2uv_{x}-vu_{x})}{(3v^{2}+4u)^{2}}+c_{1}\frac{32v}{(3v^{2}+4u)^{2}}-c_{2}\frac{8}{(3v^{2}+4u)}.
\]

\subsubsection*{Stäckel system for $m=0$}

Take first $m=0$. The separation curve (\ref{k1})
\[
\lambda^{6}+c_{2}\lambda^{3}+c_{1}\lambda^{2}+H_{1}\lambda+H_{2}=\mu^{2}
\]
yields the following Stäckel Hamiltonians in Viéte coordinates
\begin{subequations}
\label{1H}
\begin{align}
H_{1} & =p_{2}^{2}q_{1}+2p_{1}p_{2}+q_{1}^{5}-4q_{2}q_{1}^{3}+3q_{2}^{2}q_{1}-c_{2}q_{1}^{2}+c_{1}q_{1}+c_{2}q_{2},\\
H_{2} & =p_{2}^{2}q_{1}^{2}+2p_{1}p_{2}q_{1}-p_{2}^{2}q_{2}+p_{1}^{2}+q_{2}q_{1}^{4}-3q_{2}^{2}q_{1}^{2}+q_{2}^{3}-c_{2}q_{2}q_{1}+c_{1}q_{2},
\end{align}
\end{subequations}
 that in turn lead to the following Stäckel system
\begin{equation}
\begin{pmatrix}q_{1}\\
q_{2}\\
p_{1}\\
p_{2}
\end{pmatrix}_{t_{1}}=\begin{pmatrix}2p_{2}\\
2p_{2}q_{1}+2p_{1}\\
2c_{1}q_{1}-c_{1}-p_{2}^{2}-5q_{1}^{4}+12q_{2}q_{1}^{2}-3q_{2}^{2}\\
-c_{2}+4q_{1}^{3}-6q_{2}q_{1}
\end{pmatrix},\label{s01}
\end{equation}
\begin{equation}
\begin{pmatrix}q_{1}\\
q_{2}\\
p_{1}\\
p_{2}
\end{pmatrix}_{t_{2}}=\begin{pmatrix}2p_{2}q_{1}+2p_{1}\\
2p_{2}q_{1}^{2}+2p_{1}q_{1}-2p_{2}q_{2}\\
c_{2}q_{2}-2p_{2}^{2}q_{1}-2p_{1}p_{2}-4q_{2}q_{1}^{3}+6q_{2}^{2}q_{1}\\
c_{2}q_{1}-c_{1}+p_{2}^{2}-q_{1}^{4}+6q_{2}q_{1}^{2}-3q_{2}^{2}
\end{pmatrix}.\label{s02}
\end{equation}
The map (\ref{mapa}) attains the form given by
\begin{subequations}
\label{m1}
\begin{align}
q_{1} & =\frac{1}{2}v,\qquad q_{2}=\frac{1}{2}u+\frac{3}{8}v^{2},\qquad p_{1}=\frac{1}{4}u_{x}+\frac{1}{4}vv_{x},\qquad p_{2}=\frac{1}{4}v_{x},
\end{align}
and by
\begin{align}
c_{1} & =h_{2}\equiv-\frac{9}{8}uv^{2}-\frac{1}{4}u_{2x}-\frac{3}{4}u^{2}-\frac{1}{2}vv_{2x}-\frac{5}{16}v_{x}^{2}-\frac{15}{64}v^{4},\qquad c_{2}=h_{3}\equiv-\frac{3}{2}uv-\frac{1}{4}v_{2x}-\frac{5}{8}v^{3},
\end{align}
\end{subequations}
 It maps the first two flows in (\ref{s2}) onto the first two components
in (\ref{s01}) and (\ref{s02}), respectively. The remaining two
components become identities on $\mathcal{M}_{2}$ due to the second
part of the map (\ref{m1}). The Lax matrices $\mathbb{V}_{3}^{(0)}$,
$\mathbb{V}_{1}^{(0)}$ and $\mathbb{V}_{2}^{(0)}$ transforms by
(\ref{m1}) respectively onto $\mathbb{L}$, $\mathbb{U}_{1}$ and
$\mathbb{U}_{2}$ given by
\begin{align*}
\mathbb{L}=\begin{pmatrix}-p_{2}\lambda-p_{1}-p_{2}q_{1} & \lambda^{2}+q_{1}\lambda+q_{2}\\
\lambda^{4}-q_{1}\lambda^{3}+\left(q_{1}^{2}-q_{2}\right)\lambda^{2}+\left(-q_{1}^{3}+c_{2}+2q_{1}q_{2}\right)\lambda + \bm{\kappa} & p_{2}\lambda+p_{1}+p_{2}q_{1}
\end{pmatrix},
\end{align*}
\[
\mathbb{U}_{1}=\begin{pmatrix}0 & 1\\
\lambda^{2}-2q_{1}\lambda+3q_{1}^{2}-2q_{2} & 0
\end{pmatrix},\qquad\mathbb{U}_{2}=\begin{pmatrix}-p_{2} & \lambda+q_{1}\\
\lambda^{3}-q_{1}\lambda^{2}+\left(q_{1}^{2}-2q_{2}\right)\lambda-q_{1}^{3}+4q_{1}q_{2}+c_{2} & p_{2}
\end{pmatrix},
\]
where
\[
\bm{\kappa} = q_{1}^{4}-p_{2}^{2}+q_{2}^{2}-3q_{1}^{2}q_{2}+c_{1}-c_{2}q_{1}.
\]
The inverse of the map \eqref{m1} is given by
\begin{align*}
u=2q_{2}-3q_{1}^{2},\qquad u_{x}=4p_{1}-8p_{2}q_{1},\qquad v=2q_{1},\qquad v_{x}=4p_{2}
\end{align*}
and
\begin{align*}
u_{2x}=16c_{2}q_{1}-4c_{1}-20p_{2}^{2}-52q_{1}^{4}+96q_{2}q_{1}^{2}-12q_{2}^{2},\qquad v_{2x}=-4c_{2}+16q_{1}^{3}-24q_{2}q_{1}.
\end{align*}

\subsubsection*{Stäckel system for $m=1$}

For $m=1$ the curve (\ref{k2}) is
\[
\lambda^{5}+c_{2}\lambda^{2}+c_{1}\lambda^{-1}+H_{1}\lambda+H_{2}=\lambda\mu^{2}
\]
yielding the following Stäckel Hamiltonians in Viéte coordinates
\begin{subequations}
\label{2H}
\begin{align}
H_{1} & =-p_{2}^{2}q_{2}+p_{1}^{2}-q_{1}^{4}+3q_{2}q_{1}^{2}-q_{2}^{2}+c_{1}\frac{1}{q_{2}}+c_{2}q_{1},\\
H_{2} & =-p_{2}^{2}q_{2}q_{1}-2p_{1}p_{2}q_{2}-q_{2}q_{1}^{3}+2q_{2}^{2}q_{1}+c_{1}\frac{q_{1}}{q_{2}}+c_{2}q_{2},
\end{align}
\end{subequations}
 giving rise to the following Stäckel system
\begin{equation}
\begin{pmatrix}q_{1}\\
q_{2}\\
p_{1}\\
p_{2}
\end{pmatrix}_{t_{1}}=\begin{pmatrix}2p_{1}\\
-2p_{2}q_{2}\\
4q_{1}^{3}-6q_{2}q_{1}-c_{2}\\
p_{2}^{2}-3q_{1}^{2}+2q_{2}+c_{1}\frac{1}{q_{2}^{2}}
\end{pmatrix},\label{s11}
\end{equation}
\begin{equation}
\begin{pmatrix}q_{1}\\
q_{2}\\
p_{1}\\
p_{2}
\end{pmatrix}_{t_{2}}=\begin{pmatrix}-2p_{2}q_{2}\\
-2p_{1}q_{2}-2p_{2}q_{1}q_{2}\\
p_{2}^{2}q_{2}-2q_{2}^{2}+3q_{1}^{2}q_{2}-c_{1}\frac{1}{q_{2}}\\
p_{2}^{2}q_{1}+2p_{1}p_{2}+q_{1}^{3}-4q_{2}q_{1}+c_{1}\frac{q_{1}}{q_{2}^{2}}-c_{2}
\end{pmatrix}.\label{s12}
\end{equation}
The map (\ref{mapa}) is given by
\begin{subequations}
\label{m2}
\begin{align}
q_{1} & =\frac{1}{2}v,\qquad q_{2}=\frac{1}{2}u+\frac{3}{8}v^{2},\qquad p_{1}=\frac{1}{4}v_{x},\qquad p_{2}=-\frac{2u_{x}+3vv_{x}}{4u+3v^{2}},
\end{align}
together with
\begin{align}
\begin{split}c_{1}=h_{0} & \equiv-\frac{3}{32}v^{2}u_{2x}+\frac{3}{16}vu_{x}v_{x}-\frac{3}{16}uvv_{2x}-\frac{3}{16}uv_{x}^{2}-\frac{9}{64}uv^{4}-\frac{3}{8}u^{2}v^{2}+\frac{1}{16}u_{x}^{2}\\
 & \quad-\frac{1}{8}uu_{2x}-\frac{1}{4}u^{3}-\frac{9}{64}v^{3}v_{2x},\\
c_{2}=h_{3} & \equiv-\frac{3}{2}uv-\frac{1}{4}v_{2x}-\frac{5}{8}v^{3},
\end{split}
\end{align}
\end{subequations}
 and it maps the first two flows in (\ref{s2}) onto the first two
components of (\ref{s11}) and (\ref{s12}), respectively. The remaining
two components become identities on $\mathcal{M}_{2}$ due to (\ref{m2}).
The Lax matrices $\mathbb{V}_{3}^{(1)}$, $\mathbb{V}_{1}^{(1)}$
and $\mathbb{V}_{2}^{(1)}$ transforms by (\ref{m2}) respectively
onto $\mathbb{L}$, $\mathbb{U}_{1}$ and $\mathbb{U}_{2}$ given
by
\[
\mathbb{L}=\begin{pmatrix}p_{2}q_{2}-p_{1}\lambda & \lambda^{2}+q_{1}\lambda+q_{2}\\
\lambda^{4}-q_{1}\lambda^{3}+\left(q_{1}^{2}-q_{2}\right)\lambda^{2}-\left(q_{1}^{3}-c_{2}-2q_{1}q_{2}\right)\lambda-p_{2}^{2}q_{2}+\frac{c_{1}}{q_{2}} & \lambda p_{1}-p_{2}q_{2}
\end{pmatrix},
\]
\[
\mathbb{U}_{1}=\begin{pmatrix}0 & 1\\
\lambda^{2}-2q_{1}\lambda+3q_{1}^{2}-2q_{2} & 0
\end{pmatrix},\qquad\mathbb{U}_{2}=\begin{pmatrix}-p_{1} & \lambda+q_{1}\\
\lambda^{3}-q_{1}\lambda^{2}+\left(q_{1}^{2}-2q_{2}\right)\lambda-q_{1}^{3}+c_{2}+4q_{1}q_{2} & p_{1}
\end{pmatrix}.
\]
The inverse of the map \eqref{m2} is given by
\begin{align*}
u=2q_{2}-3q_{1}^{2},\qquad u_{x}=-12p_{1}q_{1}-4p_{2}q_{2},\qquad v=2q_{1},\qquad v_{x}=4p_{1}
\end{align*}
together with
\begin{align*}
u_{2x}=12c_{2}q_{1}-\frac{4c_{1}}{q_{2}}+4p_{2}^{2}q_{2}-24p_{1}^{2}-48q_{1}^{4}+84q_{2}q_{1}^{2}-8q_{2}^{2},\qquad v_{2x}=-4c_{2}+16q_{1}^{3}-24q_{2}q_{1}.
\end{align*}

\subsubsection*{Stäckel system for $m=2$}

Finally, for $m=2$ the curve (\ref{k2}) is
\[
\lambda^{4}+c_{2}\lambda^{-1}+c_{1}\lambda^{-2}+H_{1}\lambda+H_{2}=\lambda^{2}\mu^{2},
\]
yielding the Stäckel Hamiltonians
\begin{subequations}
\label{3H}
\begin{align}
H_{1} & =-p_{1}^{2}q_{1}-2p_{1}p_{2}q_{2}+q_{1}^{3}-2q_{2}q_{1}-c_{1}\frac{q_{1}}{q_{2}^{2}}+c_{2}\frac{1}{q_{2}},\\
H_{2} & =-q_{2}p_{1}^{2}+p_{2}^{2}q_{2}^{2}-q_{2}^{2}+q_{1}^{2}q_{2}+c_{1}\left(\frac{1}{q_{2}}-\frac{q_{1}^{2}}{q_{2}^{2}}\right)+c_{2}\frac{q_{1}}{q_{2}},
\end{align}
\end{subequations}
 that generate the following Stäckel system:
\begin{equation}
\begin{pmatrix}q_{1}\\
q_{2}\\
p_{1}\\
p_{2}
\end{pmatrix}_{t_{1}}=\begin{pmatrix}-2p_{1}q_{1}-2p_{2}q_{2}\\
-2p_{1}q_{2}\\
p_{1}^{2}-3q_{1}^{2}+2q_{2}+c_{1}\frac{1}{q_{2}^{2}}\\
2p_{1}p_{2}+2q_{1}+c_{2}\frac{1}{q_{2}^{2}}-2c_{1}\frac{q_{1}}{q_{2}^{3}}
\end{pmatrix},\label{s21}
\end{equation}
\begin{equation}
\begin{pmatrix}q_{1}\\
q_{2}\\
p_{1}\\
p_{2}
\end{pmatrix}_{t_{2}}=\begin{pmatrix}-2p_{1}q_{2}\\
2p_{2}q_{2}^{2}\\
-2q_{1}q_{2}+2c_{1}\frac{q_{1}}{q_{2}^{2}}-c_{2}\frac{1}{q_{2}}\\
-2p_{2}^{2}q_{2}+p_{1}^{2}-q_{1}^{2}+2q_{2}+c_{1}\left(\frac{1}{q_{2}^{2}}-2\frac{q_{1}^{2}}{q_{2}^{3}}\right)+c_{2}\frac{q_{1}}{q_{2}^{2}}
\end{pmatrix}.\label{s22}
\end{equation}
The map (\ref{mapa}) is given by
\begin{subequations}
\label{m3}
\begin{align}
q_{1} & =\frac{1}{2}v,\qquad q_{2}=\frac{1}{2}u+\frac{3}{8}v^{2},\qquad p_{1}=-\frac{2u_{x}+3vv_{x}}{4u+3v^{2}},\qquad p_{2}=-\frac{2\left(-4vu_{x}+4uv_{x}-3v^{2}v_{x}\right)}{\left(4u+3v^{2}\right)^{2}},
\end{align}
and by
\begin{align}
\begin{split}c_{1}=h_{0} & \equiv-\frac{3}{32}v^{2}u_{2x}+\frac{3}{16}vu_{x}v_{x}-\frac{3}{16}uvv_{2x}-\frac{3}{16}uv_{x}^{2}-\frac{9}{64}uv^{4}-\frac{3}{8}u^{2}v^{2}+\frac{1}{16}u_{x}^{2}\\
 & \quad-\frac{1}{8}uu_{2x}-\frac{1}{4}u^{3}-\frac{9}{64}v^{3}v_{2x},\\
c_{2}=h_{1} & \equiv-\frac{1}{8}vu_{2x}+\frac{1}{8}u_{x}v_{x}-\frac{1}{8}uv_{2x}-\frac{3}{4}uv^{3}-\frac{3}{4}u^{2}v-\frac{9}{32}v^{2}v_{2x}-\frac{9}{64}v^{5},
\end{split}
\end{align}
\end{subequations}
 and it maps the first two flows in (\ref{s2}) onto the first two
components of (\ref{s21}) and (\ref{s22}), respectively. The remaining
two components become identities on $\mathcal{M}_{2}$ due to (\ref{m3}).
The Lax matrices $\mathbb{V}_{3}^{(2)}$, $\mathbb{V}_{1}^{(2)}$
and $\mathbb{V}_{2}^{(2)}$ transforms by (\ref{m2}) respectively
onto $\mathbb{L}$, $\mathbb{U}_{1}$ and $\mathbb{U}_{2}$ given
by
\[
\mathbb{L}=\begin{pmatrix}\left(q_{1}p_{1}+q_{2}p_{2}\right)\lambda+q_{2}p_{1} & \lambda^{2}+q_{1}\lambda+q_{2}\\
\lambda^{4}-q_{1}\lambda^{3}+\left(q_{1}^{2}-q_{2}\right)\lambda^{2}-\left(q_{1}p_{1}^{2}+2q_{2}p_{1}p_{2}-c_{2}\frac{1}{q_{2}}+c_{1}\frac{q_{1}}{q_{2}^{2}}\right)\lambda +\bm{\kappa} & -\left(q_{1}p_{1}+q_{2}p_{2}\right)\lambda-q_{2}p_{1}
\end{pmatrix},
\]
\begin{align*}
\mathbb{U}_{1} & =\begin{pmatrix}0 & 1\\
\lambda^{2}-2q_{1}\lambda+3q_{1}^{2}-2q_{2} & 0
\end{pmatrix},\qquad\\
\mathbb{U}_{2} & =\begin{pmatrix}q_{1}p_{1}+q_{2}p_{2} & \lambda+q_{1}\\
\lambda^{3}-q_{1}\lambda^{2}+\left(q_{1}^{2}-2q_{2}\right)\lambda-(q_{1}p_{1}^{2}+2q_{2}p_{1}p_{2}-2q_{1}q_{2}-c_2\frac{1}{q_{2}}+c_{1}\frac{q_{1}}{q_{2}^{2}}) & -q_{1}p_{1}-q_{2}p_{2}
\end{pmatrix},
\end{align*}
where
\[
\bm{\kappa}= -q_{2}p_{1}^{2}+c_{1}\frac{1}{q_{2}}.
\]
The inverse of the map \eqref{m3} is given by
\begin{align*}
u=2q_{2}-3q_{1}^{2},\qquad u_{x}=12p_{1}q_{1}^{2}+12p_{2}q_{2}q_{1}-4p_{1}q_{2},\qquad v=2q_{1},\qquad v_{x}=-4p_{1}q_{1}-4p_{2}q_{2},
\end{align*}
together with
\begin{align*}
u_{2x} & =12c_{2}q_{1}-\frac{4c_{1}}{q_{2}}-24p_{1}^{2}q_{1}^{2}-48p_{1}p_{2}q_{2}q_{1}-24p_{2}^{2}q_{2}^{2}+4p_{1}^{2}q_{2}-48q_{1}^{4}+84q_{2}q_{1}^{2}-8q_{2}^{2},\\
v_{2x} & =-4c_{2}+16q_{1}^{3}-24q_{2}q_{1}.
\end{align*}

\subsubsection*{Miura maps}

Consider again the case $N=n=2$. For $m=1$ the Miura
map (\ref{mu1}) between parametrizations $(q_{1},q_{2},p_{1},p_{2},c_{1},c_{2})$
and $(\bar{q}_{1},\bar{q}_{2},\bar{p}_{1},\bar{p}_{2},\bar{c}_{1},\bar{c}_{2})$
of the stationary manifold $\mathcal{M}_{6}$ attains the form
\begin{align}
q_{1} & =\bar{q}_{1},\qquad q_{2}=\bar{q}_{2},\qquad p_{1}=-\bar{q}_{1}\bar{p}_{1}-\bar{q}_{2}\bar{p}_{2},\qquad p_{2}=\bar{p}_{1},\nonumber \\
c_{1} & =\bar{H}_{1}(\bar{\bm{q}},\bar{\bm{p}},\bar{\bm{c}})=-\bar{p}_{2}^{2}\bar{q}_{2}+\bar{p}_{1}^{2}-\bar{q}_{1}^{4}+3\bar{q}_{2}\bar{q}_{1}^{2}-\bar{q}_{2}^{2}+\bar{c}_{1}\frac{1}{\bar{q}_{2}}+\bar{c}_{2}\bar{q}_{1},\label{mp41}\\
c_{2} & =\bar{c}_{2},\nonumber
\end{align}
where $\bar{H}_{1}$ is given by (\ref{2H}) and for $m=2$ the respective
form
\begin{align}
q_{1} & =\bar{q}_{1},\qquad q_{2}=\bar{q}_{2},\qquad p_{1}=\left(\bar{q}_{1}^{2}-\bar{q}_{2}\right)\bar{p}_{1}+\bar{q}_{1}\bar{q}_{2}\bar{p}_{2},\qquad p_{2}=-\bar{q}_{1}\bar{p}_{1}-\bar{q}_{2}\bar{p}_{2},\nonumber \\
c_{1} & =\bar{H}_{2}(\bar{\bm{q}},\bar{\bm{p}},\bar{\bm{c}})=-\bar{q}_{2}\bar{p}_{1}^{2}+\bar{p}_{2}^{2}\bar{q}_{2}^{2}-\bar{q}_{2}^{2}+\bar{q}_{1}^{2}\bar{q}_{2}+\bar{c}_{1}\left(\frac{1}{\bar{q}_{2}}-\frac{\bar{q}_{1}^{2}}{\bar{q}_{2}^{2}}\right)+\bar{c}_{2}\frac{\bar{q}_{1}}{\bar{q}_{2}},\label{mp42}\\
c_{2} & =\bar{H}_{1}(\bar{\bm{q}},\bar{\bm{p}},\bar{\bm{c}})=-\bar{p}_{1}^{2}\bar{q}_{1}-2\bar{p}_{1}\bar{p}_{2}\bar{q}_{2}+\bar{q}_{1}^{3}-2\bar{q}_{2}\bar{q}_{1}-\bar{c}_{1}\frac{\bar{q}_{1}}{\bar{q}_{2}^{2}}+\bar{c}_{2}\frac{1}{\bar{q}_{2}}.\nonumber
\end{align}
where $\bar{H}_{1}$ and $\bar{H}_{2}$ are given by (\ref{3H}).
This yields the three-Hamiltonian representation of all the vector
fields of the Stäckel system (\ref{K0}) (and thus the three-Hamiltonian
representation of the stationary cKdV system (\ref{r2})) in the $(q_{1},q_{2},p_{1},p_{2},c_{1},c_{2})$
parametrization:
\begin{align*}
\pi_{0}dH_{1} & =\pi_{1}dc_{1}=\pi_{2}dc_{2},\\
\pi_{0}dH_{2} & =\pi_{1}dH_{1}=\pi_{2}dc_{1},
\end{align*}
where $\pi_{0}=\pi$, while $\pi_{1}$ is generated by the Miura map (\ref{mp41})
and $\pi_{2}$ is generated by the Miura map (\ref{mp42}). Explicitly,
the matrix representations of Poisson bi-vectors $\pi_{i}$ are as
follows
\[
\pi_{0}=\begin{pmatrix}0 & 0 & 1 & 0 & 0 & 0\\
0 & 0 & 0 & 1 & 0 & 0\\
-1 & 0 & 0 & 0 & 0 & 0\\
0 & -1 & 0 & 0 & 0 & 0\\
0 & 0 & 0 & 0 & 0 & 0\\
0 & 0 & 0 & 0 & 0 & 0
\end{pmatrix},\qquad\pi_{1}=\begin{pmatrix}0 & 0 & -q_{1} & 1 & \frac{\partial H_{1}}{\partial p_{1}} & 0\\
0 & 0 & -q_{2} & 0 & \frac{\partial H_{1}}{\partial p_{2}} & 0\\
q_{1} & q_{2} & 0 & -p_{2} & -\frac{\partial H_{1}}{\partial q_{1}} & 0\\
-1 & 0 & p_{2} & 0 & -\frac{\partial H_{1}}{\partial q_{2}} & 0\\
-\frac{\partial H_{1}}{\partial p_{1}} & -\frac{\partial H_{1}}{\partial p_{2}} & \frac{\partial H_{1}}{\partial q_{1}} & \frac{\partial H_{1}}{\partial q_{2}} & 0 & 0\\
0 & 0 & 0 & 0 & 0 & 0
\end{pmatrix},
\]
\[
\pi_{2}=\begin{pmatrix}0 & 0 & q_{1}^{2}-q_{2} & -q_{1} & \frac{\partial H_{2}}{\partial p_{1}} & \frac{\partial H_{1}}{\partial p_{1}}\\
0 & 0 & q_{1}q_{2} & -q_{2} & \frac{\partial H_{2}}{\partial p_{2}} & \frac{\partial H_{1}}{\partial p_{2}}\\
-q_{1}^{2}+q_{2} & -q_{1}q_{2} & 0 & q_{1}p_{2} & -\frac{\partial H_{2}}{\partial q_{1}} & -\frac{\partial H_{1}}{\partial q_{1}}\\
q_{1} & q_{2} & -q_{1}p_{2} & 0 & -\frac{\partial H_{2}}{\partial q_{2}} & -\frac{\partial H_{1}}{\partial q_{2}}\\
-\frac{\partial H_{2}}{\partial p_{1}} & -\frac{\partial H_{2}}{\partial p_{2}} & \frac{\partial H_{2}}{\partial q_{1}} & \frac{\partial H_{2}}{\partial q_{2}} & 0 & 0\\
-\frac{\partial H_{1}}{\partial p_{1}} & -\frac{\partial H_{1}}{\partial p_{2}} & \frac{\partial H_{1}}{\partial q_{1}} & \frac{\partial H_{1}}{\partial q_{2}} & 0 & 0
\end{pmatrix},
\]
where the Hamiltonians $H_{1}$, $H_{2}$ are given by (\ref{1H}). Now, the three Hamiltonian representations for the Stäckel systems associated with $m=1$ and $m=2$ can be obtained by means of
the Miura maps \eqref{mp41} and \eqref{mp42}.

Let us finally observe that this system has also been discussed in \cite{FH}, where the authors considered its three equivalent Ostrogradsky representations related with three different albeit equivalent Lagrangian formulations and where the above Poisson operators have also been presented.

\subsubsection{The stationary reduction with $n=3$ and $m=0$}\label{krotki}

Here we consider the $n=3$ stationary DWW system (see Section~\ref{P1}),
i.e.~the case $N=2$ and $n=3$.

\subsubsection*{Stationary system}

The third ($n=3$) DWW stationary system consists of first three evolution
equations from the hierarchy (given already in \eqref{pole123}):
\begin{equation}
\begin{pmatrix}u\\
v
\end{pmatrix}_{t_{1}}=\bm{K}_{1}\equiv\begin{pmatrix}u_{x}\\
v_{x}
\end{pmatrix},\qquad\begin{pmatrix}u\\
v
\end{pmatrix}_{t_{2}}=\bm{K}_{2}\equiv\begin{pmatrix}\frac{1}{2}vu_{x}+uv_{x}+\frac{1}{4}v_{3x}\\
u_{x}+\frac{3}{2}vv_{x}
\end{pmatrix},\label{ex1}
\end{equation}
\begin{equation}
\begin{pmatrix}u\\
v
\end{pmatrix}_{t_{3}}=\bm{K}_{3}\equiv\begin{pmatrix}\frac{3}{8}v^{2}u_{x}+\frac{3}{2}uvv_{x}+\frac{3}{2}uu_{x}+\frac{1}{4}u_{3x}+\frac{3}{8}vv_{3x}+\frac{9}{8}v_{x}v_{2x}\\
\frac{3}{2}vu_{x}+\frac{3}{2}uv_{x}+\frac{15}{8}v^{2}v_{x}+\frac{1}{4}v_{3x}
\end{pmatrix},\label{ex2}
\end{equation}
and the fourth stationary flow, $\bm{K}_{4}=0$, (cf. and (\ref{pole4}))
which in the normal form is given by
\begin{equation}\label{wiezy}
\begin{split}
u_{3x} & =-\frac{15}{2}v^{2}u_{x}-15uvv_{x}-6uu_{x}-\frac{1}{4}35v^{3}v_{x}-\frac{5}{2}vv_{3x}-5v_{x}v_{2x},\\
v_{5x} & =-24u^{2}v_{x}+40v^{3}u_{x}+60uv^{2}v_{x}-18u_{2x}v_{x}-20u_{x}v_{2x}-10uv_{3x}+\frac{105}{2}v^{4}v_{x}+\frac{15}{2}v^{2}v_{3x}\\
&\quad-15vv_{x}v_{2x}-15v_{x}^{3}.
\end{split}
\end{equation}
The constraints \eqref{wiezy} define the $2n+N=8$-th
dimensional stationary manifold $\mathcal{M}_{3}$ parametrized by
the jet coordinates $[\bm{u}]=(u,u_{x},u_{2x}$,$v,v_{x}\ldots,v_{4x})$.
The two vector fields \eqref{ex1} preserve their form in the above
parametrization of $\mathcal{M}_{3}$, while the vector field \eqref{ex2}
attains on $\mathcal{M}_{3}$, parametrized as above, the form
\begin{equation}
\begin{pmatrix}u\\
v
\end{pmatrix}_{t_{3}}=\begin{pmatrix}-\frac{3}{2}v^{2}u_{x}-\frac{9}{4}uvv_{x}-\frac{35}{16}v^{3}v_{x}-\frac{1}{4}vv_{3x}-\frac{1}{8}v_{x}v_{2x}\\
\frac{3vu_{x}}{2}vu_{x}+\frac{3}{2}uv_{x}+\frac{15}{8}v^{2}v_{x}+\frac{1}{4}v_{3x}
\end{pmatrix}.\label{ex3}
\end{equation}
The corresponding spectral curve \eqref{s9} is
\begin{equation}
\lambda^{8}+h_{4}\lambda^{4}+h_{3}\lambda^{3}+h_{2}\lambda^{2}+h_{1}\lambda+h_{0}=\mu^{2}.\label{ex4}
\end{equation}
The corresponding functions $h_{r}$ in (\ref{s14}) attain the explicit
form
\begin{align*}
h_{4} & =-\frac{35}{64}v^{4}-\frac{15}{8}uv^{2}-\frac{5}{8}v_{2x}v-\frac{3}{4}u^{2}-\frac{5}{16}v_{x}^{2}-\frac{1}{4}u_{2x},\\
h_{3} & =-\frac{7}{32}v^{5}-\frac{5}{4}uv^{3}-\frac{25}{32}v_{2x}v^{2}-\frac{3}{2}u^{2}v-\frac{15}{16}v_{x}^{2}v-\frac{1}{2}u_{2x}v-\frac{5}{8}u_{x}v_{x}-\frac{5}{8}uv_{2x}-\frac{1}{16}v_{4x},\\
h_{2} & =-\frac{35}{256}v^{6}-\frac{55}{64}uv^{4}-\frac{15}{32}v_{2x}v^{3}-\frac{21}{16}u^{2}v^{2}-\frac{15}{64}v_{x}^{2}v^{2}-\frac{9}{32}u_{2x}v^{2}-\frac{5}{8}uv_{2x}v-\frac{1}{32}v_{4x}v-\frac{1}{4}u^{3}+\frac{1}{16}u_{x}^{2}\\
 & \quad-\frac{1}{64}v_{2x}^{2}-\frac{1}{8}uu_{2x}+\frac{1}{32}v_{x}v_{3x},\\
h_{1} & =-\frac{25}{256}v^{7}-\frac{45}{64}uv^{5}-\frac{95}{256}v_{2x}v^{4}-\frac{23}{16}u^{2}v^{3}-\frac{15}{128}v_{x}^{2}v^{3}-\frac{7}{32}u_{2x}v^{3}+\frac{15}{64}u_{x}v_{x}v^{2}-\frac{15}{16}uv_{2x}v^{2}\\
 & \quad-\frac{3}{128}v_{4x}v^{2}-\frac{3}{4}u^{3}v+\frac{3}{16}u_{x}^{2}v-\frac{15}{32}uv_{x}^{2}v-\frac{1}{16}v_{2x}^{2}v-\frac{3}{8}uu_{2x}v+\frac{3}{64}v_{x}v_{3x}v-\frac{3}{16}uu_{x}v_{x}-\frac{5}{16}u^{2}v_{2x}\\
 & \quad-\frac{3}{64}v_{x}^{2}v_{2x}-\frac{1}{32}u_{2x}v_{2x}+\frac{1}{32}u_{x}v_{3x}-\frac{1}{32}uv_{4x},\\
h_{0} & =-\frac{25}{256}uv^{6}-\frac{75}{512}v_{2x}v^{5}-\frac{15}{32}u^{2}v^{4}-\frac{75}{1024}v_{x}^{2}v^{4}-\frac{15}{128}u_{2x}v^{4}+\frac{15}{128}u_{x}v_{x}v^{3}-\frac{35}{64}uv_{2x}v^{3}-\frac{5}{256}v_{4x}v^{3}\\
 & \quad-\frac{9}{16}u^{3}v^{2}+\frac{9}{64}u_{x}^{2}v^{2}-\frac{45}{128}uv_{x}^{2}v^{2}-\frac{15}{256}v_{2x}^{2}v^{2}-\frac{9}{32}uu_{2x}v^{2}+\frac{15}{256}v_{x}v_{3x}v^{2}-\frac{9}{32}uu_{x}v_{x}v\\
 & \quad-\frac{15}{32}u^{2}v_{2x}v-\frac{15}{128}v_{x}^{2}v_{2x}v-\frac{3}{64}u_{2x}v_{2x}v+\frac{3}{64}u_{x}v_{3x}v-\frac{3}{64}uv_{4x}v+\frac{9}{64}u^{2}v_{x}^{2}-\frac{1}{16}uv_{2x}^{2}
 +\frac{1}{256}v_{3x}^{2}\\
 & \quad-\frac{3}{32}u_{x}v_{x}v_{2x}+\frac{3}{64}uv_{x}v_{3x}-\frac{1}{128}v_{2x}v_{4x}.
\end{align*}

\subsubsection*{Foliation for $m=0$}

We will now consider the Hamiltonian foliation of $\mathcal{M}_{3}$
defined by (\ref{s19}) for $m=0$, that is $h_{4}=c_{2}$ and $h_{3}=c_{1}$,
which yields the $2n=6$-dimensional leaves $\mathcal{M}_{3,0}^{\bm{c}}$
defined by the conditions
\begin{equation}
\begin{split}u_{2x} & =-4c_{2}-3u^{2}-\frac{15}{2}uv^{2}-\frac{35}{16}v^{4}-\frac{5}{2}vv_{2x}-\frac{5}{4}v_{x}^{2},\\
v_{4x} & =32c_{2}v-16c_{1}+40uv^{3}-10u_{x}v_{x}-10uv_{2x}+14v^{5}+\frac{15}{2}v^{2}v_{2x}-5vv_{x}^{2}.
\end{split}
\label{ex5}
\end{equation}
Each leaf $\mathcal{M}_{3,0}^{\bm{c}}$ is now parametrized by the
jet coordinates $[\bm{u}]=(u,u_{x},v,v_{x},v_{2x},v_{3x})$. All three
vector fields $\bm{K}_{i}$ preserve their explicit form when reducing
from $\mathcal{M}_{3}$ to the leaf $\mathcal{M}_{3,0}^{\bm{c}}$,
due to the fact that they do not contain neither $u_{2x}$ nor $v_{4x}$
that define this leaf within the manifold $\mathcal{M}_{3}$. Taking
into account conditions \eqref{ex5} the spectral curve \eqref{ex4}
attains the form (cf.~\eqref{SCP})
\begin{equation}
\lambda^{8}+c_{2}\lambda^{4}+c_{1}\lambda^{3}+H_{1}\lambda^{2}+H_{2}\lambda+H_{3}=\mu^{2},\label{ex6}
\end{equation}
while $H_{i}$ are explicitly given by
\begin{align*}
H_{1} & =\frac{1}{16}u_{x}^{2}+\frac{5}{16}vv_{x}u_{x}+\frac{35}{128}v^{2}v_{x}^{2}+\frac{5}{32}uv_{x}^{2}+\frac{1}{32}v_{x}v_{3x}+\frac{1}{8}u^{3}+\frac{15}{32}u^{2}v^{2}+\frac{35}{128}uv^{4}+\frac{21}{512}v^{6}-\frac{1}{64}v_{2x}^{2}\\
 & \quad+\frac{1}{2}c_{2}u+\frac{1}{8}c_{2}v^{2}+\frac{1}{2}c_{1}v,\\
H_{2} & =\frac{35}{128}v_{x}^{2}v^{3}+\frac{15}{32}u_{x}v_{x}v^{2}+\frac{3}{16}u_{x}^{2}v+\frac{5}{32}uv_{x}^{2}v+\frac{3}{64}v_{x}v_{3x}v+\frac{1}{8}uu_{x}v_{x}-\frac{1}{128}v_{x}^{2}v_{2x}+\frac{1}{32}u_{x}v_{3x}\\
 & \quad+\frac{1}{2}c_{2}uv+\frac{1}{8}c_{2}v^{3}+\frac{1}{8}c_{2}v_{2x}+\frac{1}{2}c_{1}u+\frac{3}{8}c_{1}v^{2},
\end{align*}
and
\begin{align*}
H_{3} & =\frac{175}{1024}v_{x}^{2}v^{4}+\frac{5}{16}u_{x}v_{x}v^{3}+\frac{9}{64}u_{x}^{2}v^{2}+\frac{15}{64}uv_{x}^{2}v^{2}+\frac{15}{256}v_{x}v_{3x}v^{2}+\frac{3}{16}uu_{x}v_{x}v-\frac{5}{256}v_{x}^{2}v_{2x}v\\
 & \quad+\frac{3}{64}u_{x}v_{3x}v+\frac{9}{64}u^{2}v_{x}^{2}+\frac{1}{256}v_{3x}^{2}-\frac{1}{64}u_{x}v_{x}v_{2x}+\frac{3}{64}uv_{x}v_{3x}+\frac{9}{32}u^{3}v^{2}+\frac{15}{128}u^{2}v^{4}\\
 & \quad +\frac{9}{64}u^{2}vv_{2x}
-\frac{21}{512}uv^{6}+\frac{5}{128}uv^{3}v_{2x}+\frac{1}{64}uv_{2x}^{2}-\frac{35}{2048}v^{8}-\frac{7}{1024}v^{5}v_{2x}-\frac{3}{8}c_{2}uv^{2}\\
 & \quad-\frac{1}{32}5c_{2}v^{4}-\frac{1}{16}c_{2}vv_{2x}+\frac{3}{4}c_{1}uv+\frac{5}{16}c_{1}v^{3}+\frac{1}{8}c_{1}v_{2x}.
\end{align*}

\subsubsection*{Stäckel system}

The relation \eqref{ex6}, if treated as the separation curve~\eqref{sc},
yields the Stäckel Hamiltonians (given here directly in Viète coordinates
(\ref{V1}))
\begin{align*}
H_{1} & =2p_{3}p_{2}q_{1}+p_{3}^{2}q_{2}+p_{2}^{2}+2p_{1}p_{3}-q_{1}^{6}+5q_{2}q_{1}^{4}-4q_{3}q_{1}^{3}-6q_{2}^{2}q_{1}^{2}+6q_{2}q_{3}q_{1}+q_{2}^{3}-q_{3}^{2}+c_{2}q_{2}-c_{2}q_{1}^{2}+c_{1}q_{1},\\
H_{2} & =2p_{2}^{2}q_{1}+2p_{3}p_{2}q_{1}^{2}+2p_{1}p_{3}q_{1}+p_{3}^{2}q_{1}q_{2}-p_{3}^{2}q_{3}+2p_{1}p_{2}+q_{3}q_{1}^{4}+4q_{2}^{2}q_{1}^{3}-6q_{2}q_{3}q_{1}^{2}-3q_{2}^{3}q_{1}+2q_{3}^{2}q_{1},\\
 & \quad-q_{1}^{5}q_{2}+3q_{2}^{2}q_{3}+c_{2}q_{3}-c_{2}q_{1}q_{2}+c_{1}q_{2}\\
H_{3} & =2p_{2}p_{1}q_{1}+2p_{3}p_{1}q_{2}+p_{2}^{2}q_{1}^{2}+p_{3}^{2}q_{2}^{2}+2p_{2}p_{3}q_{1}q_{2}-2p_{2}p_{3}q_{3}-p_{3}^{2}q_{1}q_{3}+p_{1}^{2}+4q_{2}q_{3}q_{1}^{3}-3q_{3}^{2}q_{1}^{2}-3q_{2}^{2}q_{3}q_{1}\\
 & \quad+2q_{2}q_{3}^{2}-q_{1}^{5}q_{3}-c_{2}q_{1}q_{3}+c_{1}q_{3},
\end{align*}
which generate the following Stäckel system (cf.~\eqref{Hamk}) on
$\mathcal{M}_{3,0}^{\bm{c}}$:
\[
\begin{pmatrix}q_{1}\\
q_{2}\\
q_{3}\\
p_{1}\\
p_{2}\\
p_{3}
\end{pmatrix}_{t_{1}}=\begin{pmatrix}2p_{3}\\
2p_{2}+2p_{3}q_{1}\\
2p_{1}+2p_{2}q_{1}+2p_{3}q_{2}\\
6q_{1}^{5}-20q_{2}q_{1}^{3}+12q_{3}q_{1}^{2}+12q_{2}^{2}q_{1}+2c_{2}q_{1}-c_{1}-2p_{2}p_{3}-6q_{2}q_{3}\\
-5q_{1}^{4}+12q_{2}q_{1}^{2}-6q_{3}q_{1}-p_{3}^{2}-3q_{2}^{2}-c_{2}\\
4q_{1}^{3}-6q_{2}q_{1}+2q_{3}
\end{pmatrix},
\]
\[
\begin{pmatrix}q_{1}\\
q_{2}\\
q_{3}\\
p_{1}\\
p_{2}\\
p_{3}
\end{pmatrix}_{t_{2}}=\begin{pmatrix}2p_{2}+2p_{3}q_{1}\\
2p_{3}q_{1}^{2}+4p_{2}q_{1}+2p_{1}\\
2p_{2}q_{1}^{2}+2p_{1}q_{1}+2p_{3}q_{2}q_{1}-2p_{3}q_{3}\\
5q_{2}q_{1}^{4}-4q_{3}q_{1}^{3}-12q_{2}^{2}q_{1}^{2}-4p_{2}p_{3}q_{1}+12q_{2}q_{3}q_{1}+3q_{2}^{3}-2p_{2}^{2}-2q_{3}^{2}-2p_{1}p_{3}-p_{3}^{2}q_{2}+c_{2}q_{2}\\
q_{1}^{5}-8q_{2}q_{1}^{3}+6q_{3}q_{1}^{2}-p_{3}^{2}q_{1}+9q_{2}^{2}q_{1}+c_{2}q_{1}-c_{1}-6q_{2}q_{3}\\
-q_{1}^{4}+6q_{2}q_{1}^{2}-4q_{3}q_{1}+p_{3}^{2}-3q_{2}^{2}-c_{2}
\end{pmatrix},
\]
\[
\begin{pmatrix}q_{1}\\
q_{2}\\
q_{3}\\
p_{1}\\
p_{2}\\
p_{3}
\end{pmatrix}_{t_{3}}=\begin{pmatrix}2p_{1}+2p_{2}q_{1}+2p_{3}q_{2}\\
2p_{2}q_{1}^{2}+2p_{1}q_{1}+2p_{3}q_{2}q_{1}-2p_{3}q_{3}\\
2p_{3}q_{2}^{2}+2p_{1}q_{2}+2p_{2}q_{1}q_{2}-2p_{2}q_{3}-2p_{3}q_{1}q_{3}\\
5q_{3}q_{1}^{4}-12q_{2}q_{3}q_{1}^{2}-2p_{2}^{2}q_{1}+6q_{3}^{2}q_{1}-2p_{1}p_{2}-2p_{2}p_{3}q_{2}+p_{3}^{2}q_{3}+3q_{2}^{2}q_{3}+c_{2}q_{3}\\
-4q_{3}q_{1}^{3}-2p_{2}p_{3}q_{1}+6q_{2}q_{3}q_{1}-2q_{3}^{2}-2p_{1}p_{3}-2p_{3}^{2}q_{2}\\
q_{1}^{5}-4q_{2}q_{1}^{3}+6q_{3}q_{1}^{2}+p_{3}^{2}q_{1}+3q_{2}^{2}q_{1}+c_{2}q_{1}-c_{1}+2p_{2}p_{3}-4q_{2}q_{3}
\end{pmatrix}.
\]
Their Lax representation \eqref{T1} is given by the Lax matrix
(\ref{laxm}), given in Viète coordinates by
\[
\mathbb{L}=\begin{pmatrix}-p_{3}\lambda^{2}-\left(p_{2}+p_{3}q_{1}\right)\lambda-p_{1}-p_{2}q_{1}-p_{3}q_{2} & \lambda^{3}+q_{1}\lambda^{2}+q_{2}\lambda+q_{3}\\
\mathbb{L}_{21} & p_{3}\lambda^{2}+\left(p_{2}+p_{3}q_{1}\right)\lambda+p_{1}+p_{2}q_{1}+p_{3}q_{2}
\end{pmatrix},
\]
where
\begin{align*}
\mathbb{L}_{21} & =\lambda^{5}-q_{1}\lambda^{4}+\left(q_{1}^{2}-q_{2}\right)\lambda^{3}-\left(q_{1}^{3}-2q_{2}q_{1}+q_{3}\right)\lambda^{2}+\left(q_{1}^{4}-3q_{2}q_{1}^{2}+2q_{3}q_{1}-p_{3}^{2}+q_{2}^{2}+c_{2}\right)\lambda\\
 & \quad-q_{1}^{5}-3q_{1}q_{2}^{2}-2p_{2}p_{3}-p_{3}^{2}q_{1}+4q_{1}^{3}q_{2}-3q_{1}^{2}q_{3}+2q_{2}q_{3}+c_{1}-c_{2}q_{1},
\end{align*}
and by the auxiliary matrices (\ref{T8}), given in Viète
coordinates by
\[
\mathbb{U}_{1}=\begin{pmatrix}0 & 1\\
\lambda^{2}-2q_{1}\lambda+3q_{1}^{2}-2q_{2} & 0
\end{pmatrix},\qquad\mathbb{U}_{2}=\begin{pmatrix}-p_{3} & \lambda+q_{1}\\
\lambda^{3}-q_{1}\lambda^{2}+\left(q_{1}^{2}-2q_{2}\right)\lambda-q_{1}^{3}+4q_{1}q_{2}-2q_{3} & p_{3}
\end{pmatrix},
\]
\[
\mathbb{U}_{3}=\begin{pmatrix}-\lambda p_{3}-p_{2}-p_{3}q_{1} & \lambda^{2}+q_{1}\lambda+q_{2}\\
\lambda^{4}-q_{1}\lambda^{3}+\left(q_{1}^{2}-q_{2}\right)\lambda^{2}-\left(q_{1}^{3}-2q_{1}q_{2}+2q_{3}\right)\lambda+\bm{\kappa}& \lambda p_{3}+p_{2}+p_{3}q_{1}
\end{pmatrix},
\]
where
\[
\bm{\kappa} = +q_{1}^{4}-p_{3}^{2}+q_{2}^{2}+c_{2}-3q_{1}^{2}q_{2}+4q_{1}q_{3}.
\]

\subsubsection*{Transformation between jet and canonical coordinates}

Finally, let us present the map (\ref{mapa}) between the jet variables
$[\bm{u}]=(u,u_{x},u_{2x}$,$v,v_{x}\ldots,v_{4x})$ and the (extended
by Casimirs $c_{i}$) Viète coordinates on the stationary manifold
$\mathcal{M}_{3}$. It is explicitly given by
\begin{subequations}\label{ex134}
\begin{equation}
\begin{split}q_{1} & =\frac{1}{2}v,\qquad q_{2}=\frac{1}{2}u+\frac{3}{8}v^{2},\qquad q_{3}=\frac{3}{4}uv+\frac{5}{16}v^{3}+\frac{1}{8}v_{2x},\\
p_{1} & =\frac{1}{4}vu_{x}+\frac{1}{4}uv_{x}+\frac{1}{4}v^{2}v_{x}+\frac{1}{16}v_{3x},\qquad p_{2}=\frac{1}{4}u_{x}+\frac{1}{4}vv_{x},\qquad p_{3}=\frac{1}{4}v_{x}
\end{split}
\label{ex13}
\end{equation}
and by
\begin{equation}
\begin{split}c_{1} & =h_{3}\equiv-\frac{7}{32}v^{5}-\frac{5}{4}uv^{3}-\frac{25}{32}v_{2x}v^{2}-\frac{3}{2}u^{2}v-\frac{15}{16}v_{x}^{2}v-\frac{1}{2}u_{2x}v-\frac{5}{8}u_{x}v_{x}-\frac{5}{8}uv_{2x}-\frac{1}{16}v_{4x},\\
c_{2} & =h_{4}\equiv-\frac{35}{64}v^{4}-\frac{15}{8}uv^{2}-\frac{5}{8}v_{2x}v-\frac{3}{4}u^{2}-\frac{5}{16}v_{x}^{2}-\frac{1}{4}u_{2x}.
\end{split}
\label{ex14}
\end{equation}
\end{subequations}

Applying the map \eqref{ex13} to the above Stäckel system we reconstruct
the respective vector fields \eqref{ex1} and \eqref{ex3} together with
the constraints \eqref{ex5} or equivalently \eqref{ex14}.

The inverse of the map \eqref{ex134} is given by
\begin{align*}
u & =2q_{2}-3q_{1}^{2},\qquad u_{x}=4p_{2}-8p_{3}q_{1},\\
v & =2q_{1},\qquad v_{x}=4p_{3},\qquad v_{2x}=16q_{1}^{3}-24q_{2}q_{1}+8q_{3},\qquad v_{3x}=48p_{3}q_{1}^{2}-32p_{2}q_{1}-32p_{3}q_{2}+16p_{1},
\end{align*}
together with
\begin{align*}
u_{2x} & =-4c_{2}-20p_{3}^{2}-52q_{1}^{4}+96q_{2}q_{1}^{2}-40q_{3}q_{1}-12q_{2}^{2},\\
v_{4x} & =160p_{3}^{2}q_{1}-160p_{2}p_{3}+448q_{1}^{5}-1120q_{2}q_{1}^{3}+480q_{3}q_{1}^{2}+480q_{2}^{2}q_{1}-160q_{2}q_{3}-16c_{1}+64c_{2}q_{1}.
\end{align*}

\subsection{The case of $N=4$, $n=2$ and $m=0$} \label{ostatni}

Let us now take a closer look at four-component ($N=4$) stationary
cKdV system with $n=2$. We will again only consider the case $m=0$.

\subsubsection*{Four component cKdV hierarchy}

Assume thus that $N=4$ and denote $\bm{u}=(u_{0},u_{1},u_{2},u_{3})^{T}\equiv(u,v,w,r)^T$.
Then, the coefficients $P_{i}$ of the series \eqref{c4} are
\begin{align*}
P_{0} & =2,\qquad P_{1}=r,\qquad P_{2}=\frac{3}{4}r^{2}+w,\qquad P_{3}=\frac{5}{8}r^{3}+\frac{3}{2}rw+v,\\
P_{4} & =\frac{35}{64}r^{4}+\frac{15}{8}r^{2}w+\frac{3}{2}rv+u+\frac{3}{4}w^{2},\\
P_{5} & =\frac{63}{128}r^{5}+\frac{35}{16}r^{3}w+\frac{15}{8}r^{2}v+\frac{3}{2}ru+\frac{15}{8}rw^{2}+\frac{1}{4}r_{2x}+\frac{3}{2}vw\\
P_{6} & =\frac{231}{512}r^{6}+\frac{315}{128}r^{4}w+\frac{35}{16}r^{3}v+\frac{15}{8}r^{2}u+\frac{105}{32}r^{2}w^{2}+\frac{15}{4}rvw+\frac{5}{8}rr_{2x}+\frac{5}{16}r_{x}^{2}+\frac{3}{2}uw\\
&\quad+\frac{3}{4}v^{2}+\frac{5}{8}w^{3}+\frac{1}{4}w_{2x},\\
 & \vdots
\end{align*}
The first three members of the four component cKdV hierarchy (\ref{c11a})
have the form
\begin{equation}
\begin{pmatrix}u\\
v\\
w\\
r
\end{pmatrix}_{t_{1}}=\bm{K}_{1}\equiv\begin{pmatrix}u_{x}\\
v_{x}\\
w_{x}\\
r_{x}
\end{pmatrix},\qquad\begin{pmatrix}u\\
v\\
w\\
r
\end{pmatrix}_{t_{2}}=\bm{K}_{2}\equiv\begin{pmatrix}ur_{x}+\frac{1}{2}ru_{x}+\frac{1}{4}r_{3x}\\
vr_{x}+\frac{1}{2}rv_{x}+u_{x}\\
wr_{x}+\frac{1}{2}rw_{x}+v_{x}\\
\frac{3}{2}rr_{x}+w_{x}
\end{pmatrix},\label{ez1}
\end{equation}
\[
\begin{pmatrix}u\\
v\\
w\\
r
\end{pmatrix}_{t_{3}}=\bm{K}_{3}\equiv\begin{pmatrix}\frac{3}{8}r^{2}u_{x}+\frac{3}{2}rur_{x}+\frac{3}{8}rr_{3x}+\frac{9}{8}r_{x}r_{2x}+\frac{1}{2}wu_{x}+uw_{x}+\frac{1}{4}w_{3x}\\
\frac{3}{8}r^{2}v_{x}+\frac{1}{2}ru_{x}+ur_{x}+\frac{3}{2}rvr_{x}+\frac{1}{4}r_{3x}+\frac{1}{2}wv_{x}+vw_{x}\\
\frac{3}{8}r^{2}w_{x}+\frac{1}{2}rv_{x}+vr_{x}+\frac{3}{2}rwr_{x}+u_{x}+\frac{3}{2}ww_{x}\\
\frac{15}{8}r^{2}r_{x}+\frac{3}{2}rw_{x}+\frac{3}{2}wr_{x}+v_{x}.
\end{pmatrix}
\]
Their zero-curvature curvature representation \eqref{l2} is generated
by the Lax matrix
\[
\mathbb{V}_{1}=\begin{pmatrix}0 & 1\\
\lambda^{4}-r\lambda^{3}-w\lambda^{2}-v\lambda-u & 0
\end{pmatrix}
\]
and the auxiliary matrices
\begin{align*}
\mathbb{V}_{2} & =\begin{pmatrix}-\frac{1}{4}r_{x} & \lambda+\frac{1}{2}r\\
\lambda^{5}-\frac{1}{2}r\lambda^{4}-\left(\frac{1}{2}r^{2}+w\right)\lambda^{3}-\left(v+\frac{1}{2}wr\right)\lambda^{2}-\left(u+\frac{1}{2}vr\right)\lambda-\frac{1}{2}ur-\frac{1}{4}r_{2x} & \frac{1}{4}r_{x}
\end{pmatrix},\\
\mathbb{V}_{3} & =\begin{pmatrix}-\frac{1}{4}r_{x}\lambda-\frac{1}{4}w_{x}-\frac{3}{8}rr_{x} & \lambda^{2}+\frac{1}{2}r\lambda+\frac{3}{8}r^{2}+\frac{1}{2}w\\
(\mathbb{V}_{3})_{21} & \frac{1}{4}r_{x}\lambda+\frac{1}{4}w_{x}+\frac{3}{8}rr_{x}
\end{pmatrix},
\end{align*}
where
\begin{align*}
(\mathbb{V}_{3})_{21} & =\lambda^{6}-\frac{1}{2}r\lambda^{5}-\left(\frac{1}{8}r^{2}+\frac{1}{2}w\right)\lambda^{4}-\left(\frac{3}{8}r^{3}+wr+v\right)\lambda^{3}-\left(\frac{1}{2}w^{2}+\frac{3}{8}r^{2}w+u+\frac{1}{2}vr\right)\lambda^{2}\\
 & \quad-\left(\frac{3}{8}vr^{2}+\frac{1}{2}ur+\frac{1}{2}vw+\frac{1}{4}r_{2x}\right)\lambda-\frac{3}{8}ur^{2}-\frac{3}{8}r_{x}^{2}-\frac{1}{2}uw-\frac{1}{4}w_{2x}-\frac{3}{8}rr_{2x}.
\end{align*}

\subsubsection*{Stationary system}

In the case of $N=4$ the second cKdV stationary system consists of
the two evolution equations in (\ref{ez1}) and the third stationary
flow, $\bm{K}_{3}=0$, which in the normal form is given by
\begin{align*}
u_{x} & =\frac{15}{16}r^{3}r_{x}+\frac{3}{8}r^{2}w_{x}-vr_{x}-\frac{3}{4}rwr_{x}-\frac{3}{2}ww_{x},\qquad v_{x}=-\frac{15}{8}r^{2}r_{x}-\frac{3}{2}rw_{x}-\frac{3}{2}wr_{x},\\
w_{3x} & =-\frac{45}{16}r^{5}r_{x}-\frac{45}{16}r^{4}w_{x}-12r^{3}wr_{x}+\frac{15}{2}r^{2}vr_{x}-\frac{15}{2}r^{2}ww_{x}+6rvw_{x}+2vwr_{x}-3rw^{2}r_{x}-\frac{9}{2}r_{x}r_{2x}\\
 & \quad-4uw_{x}+3w^{2}w_{x},\\
r_{3x} & =\frac{15}{16}r^{4}r_{x}+\frac{3}{2}r^{3}w_{x}+\frac{15}{2}r^{2}wr_{x}-4ur_{x}-4rvr_{x}+3w^{2}r_{x}+6rww_{x}-4vw_{x}.
\end{align*}
These constraints define the $2n+N=8$-th dimensional stationary manifold
$\mathcal{M}_{2}$, parametrized by the jet coordinates $[\bm{u}]=(u,v,w,w_{x},w_{2x},r,r_{x},r_{2x})$.
The vector fields \eqref{ez1} attain on $\mathcal{M}_{2}$, parametrized
as above, the form:
\begin{align*}
\begin{pmatrix}u\\
v\\
w\\
r
\end{pmatrix}_{t_{1}} & =\begin{pmatrix}\frac{15}{16}r^{3}r_{x}+\frac{3}{8}r^{2}w_{x}-vr_{x}-\frac{3}{4}rwr_{x}-\frac{3}{2}ww_{x}\\
-\frac{15}{8}r^{2}r_{x}-\frac{3}{2}rw_{x}-\frac{3}{2}wr_{x}\\
w_{x}\\
r_{x}
\end{pmatrix},\\
\begin{pmatrix}u\\
v\\
w\\
r
\end{pmatrix}_{t_{2}} & =\begin{pmatrix}\frac{45}{64}r^{4}r_{x}+\frac{9}{16}r^{3}w_{x}+\frac{3}{2}r^{2}wr_{x}-\frac{3}{2}rvr_{x}+\frac{3}{4}w^{2}r_{x}+\frac{3}{4}rww_{x}-vw_{x}\\
-\frac{3}{8}r^{2}w_{x}-\frac{3}{2}rwr_{x}-\frac{3}{2}ww_{x}\\
-\frac{15}{8}r^{2}r_{x}-rw_{x}-\frac{1}{2}wr_{x}\\
\frac{3}{2}rr_{x}+w_{x}
\end{pmatrix}.
\end{align*}
The spectral curve \eqref{s9} corresponding to $\mathcal{M}_{3}$
is
\begin{equation}
\lambda^{8}+h_{5}\lambda^{5}+h_{4}\lambda^{4}+h_{3}\lambda^{3}+h_{2}\lambda^{2}+h_{1}\lambda+h_{0}=\mu^{2},\label{ez7}
\end{equation}
while the functions $h_{k}$ in (\ref{s14}) are
\[
\begin{split}h_{5} & =-\frac{5}{8}r^{3}-\frac{3}{2}wr-v,\qquad h_{4}=-\frac{15}{64}r^{4}-\frac{9}{8}wr^{2}-vr-\frac{3}{4}w^{2}-u,\\
h_{3} & =-\frac{9}{64}r^{5}-\frac{3}{4}wr^{3}-vr^{2}-\frac{3}{4}w^{2}r-ur-vw-\frac{1}{4}r_{2x},\\
h_{2} & =-\frac{9}{64}wr^{4}-\frac{3}{8}vr^{3}-\frac{3}{8}w^{2}r^{2}-ur^{2}-\frac{1}{2}vwr-\frac{1}{2}r_{2x}r-\frac{1}{4}w^{3}-\frac{5}{16}r_{x}^{2}-uw-\frac{1}{4}w_{2x},\\
h_{1} & =-\frac{9}{64}vr^{4}-\frac{3}{8}ur^{3}-\frac{3}{8}vwr^{2}-\frac{9}{32}r_{2x}r^{2}-\frac{1}{2}uwr-\frac{1}{8}w_{2x}r-\frac{1}{4}vw^{2}+\frac{1}{8}w_{x}r_{x}-\frac{1}{8}wr_{2x},\\
h_{0} & =-\frac{9}{64}ur^{4}-\frac{9}{64}r_{2x}r^{3}-\frac{3}{8}uwr^{2}-\frac{3}{32}w_{2x}r^{2}+\frac{3}{16}w_{x}r_{x}r-\frac{3}{16}wr_{2x}r-\frac{1}{4}uw^{2}+\frac{1}{16}w_{x}^{2}\\
&\quad-\frac{3}{16}wr_{x}^{2}-\frac{1}{8}ww_{2x}.
\end{split}
\]

\subsubsection*{Foliation}

Let us consider the Hamiltonian foliation of $\mathcal{M}_{2}$ defined
by (\ref{s19}) for $m=0$. That is, we consider the foliation of
$\mathcal{M}_{2}$ given by the conditions
\[
h_{5}=c_{4},\quad h_{4}=c_{3},\quad h_{3}=c_{2},\quad h_{2}=c_{1},
\]
which in the normal form are given by
\begin{equation}
\begin{split}u & =c_{4}r-c_{3}+\frac{25}{64}r^{4}+\frac{3}{8}r^{2}w-\frac{3}{4}w^{2},\qquad v=-c_{4}-\frac{5}{8}r^{3}-\frac{3}{2}rw,\\
w_{2x} & =-\frac{5}{2}c_{4}r^{3}-4c_{3}r^{2}-10c_{4}rw+8c_{2}r+4c_{3}w-4c_{1}-\frac{11}{8}r^{6}-\frac{65}{8}r^{4}w-9r^{2}w^{2}-\frac{5}{4}r_{x}^{2}+2w^{3},\\
r_{2x} & =4c_{3}r+4c_{4}w-4c_{2}+\frac{3}{8}r^{5}+4r^{3}w+6rw^{2}.
\end{split}
\label{ez8}
\end{equation}
The above conditions define leaves $\mathcal{M}_{2,0}^{\bm{c}}$ parametrized
by the jet coordinates $[\bm{u}]=(w,w_{x},r,r_{x})$. Notice that
the field variables $u$ and $v$ are entirely eliminated using \eqref{ez8}.
In consequence, the vector fields \eqref{ez1} on the leave $\mathcal{M}_{2,0}^{\bm{c}}$
attains the \emph{two-component} form
\begin{equation}
\begin{pmatrix}w\\
r
\end{pmatrix}_{t_{1}}=\begin{pmatrix}w_{x}\\
r_{x}
\end{pmatrix},\qquad\begin{pmatrix}w\\
r
\end{pmatrix}_{t_{2}}=\begin{pmatrix}-\frac{15}{8}r^{2}r_{x}-rw_{x}-\frac{1}{2}wr_{x}\\
\frac{3}{2}rr_{x}+w_{x}.
\end{pmatrix}\label{ez9}
\end{equation}
as the remaining components in these vector fields become identities.

Taking into account \eqref{ez8} the spectral curve \eqref{ez7} takes
the form, cf.~\eqref{SCP},
\begin{equation}
\lambda^{8}+c_{4}\lambda^{5}+c_{3}\lambda^{4}+c_{2}\lambda^{3}+c_{1}\lambda^{2}+H_{1}\lambda+H_{2}=\mu^{2},\label{ez10}
\end{equation}
where
\begin{align*}
H_{1} & =\frac{1}{8}r_{x}w_{x}+\frac{5}{32}rr_{x}^{2}+\frac{1}{128}r^{7}-\frac{3}{64}r^{5}w-\frac{1}{4}r^{3}w^{2}-\frac{1}{4}rw^{3}+\frac{5}{64}c_{4}r^{4}-\frac{1}{4}c_{4}w^{2}-\frac{1}{4}c_{3}r^{3}-\frac{1}{2}c_{3}rw\\
 & \quad+\frac{1}{8}c_{2}r^{2}+\frac{1}{2}c_{2}w+\frac{1}{2}c_{1}r,\\
H_{2} & =\frac{15}{128}r^{2}r_{x}^{2}-\frac{1}{32}wr_{x}^{2}+\frac{3}{16}rr_{x}w_{x}+\frac{1}{16}w_{x}^{2}+\frac{87}{4096}r^{8}+\frac{13}{128}r^{6}w+\frac{17}{128}r^{4}w^{2}-\frac{1}{16}w^{4}+\frac{3}{32}c_{4}r^{5}\\
 & \quad+\frac{5}{16}c_{4}r^{3}w+\frac{1}{4}c_{4}rw^{2}-\frac{1}{64}3c_{3}r^{4}-\frac{1}{4}c_{3}r^{2}w-\frac{1}{4}c_{3}w^{2}-\frac{1}{16}3c_{2}r^{3}-\frac{1}{4}c_{2}rw+\frac{3}{8}c_{1}r^{2}+\frac{1}{2}c_{1}w.
\end{align*}

\subsubsection*{Stäckel system}

The Stäckel Hamiltonians, defined by the separation curve \eqref{ez10}
and written in Viète coordinates are
\begin{align*}
H_{1} & =q_{1}p_{2}^{2}+2p_{1}p_{2}+q_{1}^{7}-6q_{2}q_{1}^{5}+10q_{2}^{2}q_{1}^{3}-4q_{2}^{3}q_{1}-c_{4}q_{1}^{4}+3c_{4}q_{2}q_{1}^{2}-c_{4}q_{2}^{2}+c_{3}q_{1}^{3}-2c_{3}q_{1}q_{2}\\
 & \quad+c_{2}q_{2}-c_{2}q_{1}^{2}+c_{1}q_{1},\\
H_{2} & =p_{1}^{2}+2p_{2}q_{1}p_{1}+p_{2}^{2}q_{1}^{2}-p_{2}^{2}q_{2}+q_{2}q_{1}^{6}-5q_{2}^{2}q_{1}^{4}+6q_{2}^{3}q_{1}^{2}-q_{2}^{4}+2c_{4}q_{1}q_{2}^{2}-c_{4}q_{1}^{3}q_{2}+c_{3}q_{1}^{2}q_{2}-c_{3}q_{2}^{2}\\
 & \quad-c_{2}q_{1}q_{2}+c_{1}q_{2}.
\end{align*}
They generate the following Stäckel system
\[
\begin{pmatrix}q_{1}\\
q_{2}\\
p_{1}\\
p_{2}
\end{pmatrix}_{t_{1}}=\begin{pmatrix}2p_{2}\\
2p_{1}+2p_{2}q_{1}\\
-7q_{1}^{6}+30q_{2}q_{1}^{4}+4c_{4}q_{1}^{3}-30q_{2}^{2}q_{1}^{2}-3c_{3}q_{1}^{2}+2c_{2}q_{1}-6c_{4}q_{2}q_{1}+4q_{2}^{3}-p_{2}^{2}-c_{1}+2c_{3}q_{2}\\
6q_{1}^{5}-20q_{2}q_{1}^{3}-3c_{4}q_{1}^{2}+12q_{2}^{2}q_{1}+2c_{3}q_{1}-c_{2}+2c_{4}q_{2}
\end{pmatrix},
\]
\[
\begin{pmatrix}q_{1}\\
q_{2}\\
p_{1}\\
p_{2}
\end{pmatrix}_{t_{2}}=\begin{pmatrix}2p_{1}+2p_{2}q_{1}\\
2p_{2}q_{1}^{2}+2p_{1}q_{1}-2p_{2}q_{2}\\
-6q_{2}q_{1}^{5}+20q_{2}^{2}q_{1}^{3}+3c_{4}q_{2}q_{1}^{2}-12q_{2}^{3}q_{1}-2p_{2}^{2}q_{1}-2c_{3}q_{2}q_{1}-2c_{4}q_{2}^{2}-2p_{1}p_{2}+c_{2}q_{2}\\
-q_{1}^{6}+10q_{2}q_{1}^{4}+c_{4}q_{1}^{3}-18q_{2}^{2}q_{1}^{2}-c_{3}q_{1}^{2}+c_{2}q_{1}-4c_{4}q_{2}q_{1}+4q_{2}^{3}+p_{2}^{2}-c_{1}+2c_{3}q_{2}
\end{pmatrix}.
\]
Their Lax representation \eqref{T1} contains the Lax matrix (\ref{laxm}),
given in Viète coordinates by
\[
\mathbb{L}=\begin{pmatrix}-p_{2}\lambda-p_{1}-p_{2}q_{1} & \lambda^{2}+q_{1}\lambda+q_{2}\\
\mathbb{L}_{21} & p_{2}\lambda+p_{1}+p_{2}q_{1}
\end{pmatrix},
\]
where
\begin{align*}
\mathbb{L}_{21} & =\lambda^{6}-q_{1}\lambda^{5}+\left(q_{1}^{2}-q_{2}\right)\lambda^{4}+\left(-q_{1}^{3}+2q_{2}q_{1}+c_{4}\right)\lambda^{3}+\left(q_{1}^{4}-3q_{2}q_{1}^{2}-c_{4}q_{1}+q_{2}^{2}+c_{3}\right)\lambda^{2}\\
 & \quad+\left(-q_{1}^{5}+4q_{2}q_{1}^{3}+c_{4}q_{1}^{2}-3q_{2}^{2}q_{1}-c_{3}q_{1}+c_{2}-c_{4}q_{2}\right)\lambda\\
 & \quad+q_{1}^{6}-c_{4}q_{1}^{3}-q_{2}^{3}-p_{2}^{2}+c_{3}q_{1}^{2}+6q_{1}^{2}q_{2}^{2}+c_{1}-c_{2}q_{1}-5q_{1}^{4}q_{2}-c_{3}q_{2}+2c_{4}q_{1}q_{2},
\end{align*}
and the auxiliary matrices that are given by
\[
\mathbb{U}_{1}=\begin{pmatrix}0 & 1\\
\lambda^{4}-2q_{1}\lambda^{3}+\left(3q_{1}^{2}-2q_{2}\right)\lambda^{2}+\left(-4q_{1}^{3}+6q_{2}q_{1}+c_{4}\right)\lambda+5q_{1}^{4}+3q_{2}^{2}+c_{3}-2c_{4}q_{1}-12q_{1}^{2}q_{2} & 0
\end{pmatrix},
\]
\[
\mathbb{U}_{2}=\begin{pmatrix}-p_{2} & \lambda+q_{1}\\
(\mathbb{U}_{2})_{21} & p_{2}
\end{pmatrix},
\]
where
\begin{align*}
(\mathbb{U}_{2})_{21} & =\lambda^{5}-q_{1}\lambda^{4}+\left(q_{1}^{2}-2q_{2}\right)\lambda^{3}+\left(-q_{1}^{3}+4q_{2}q_{1}+c_{4}\right)\lambda^{2}+\left(q_{1}^{4}-6q_{2}q_{1}^{2}-c_{4}q_{1}+3q_{2}^{2}+c_{3}\right)\lambda\\
 & \quad-q_{1}^{5}+c_{4}q_{1}^{2}-9q_{1}q_{2}^{2}+c_{2}-c_{3}q_{1}+8q_{1}^{3}q_{2}-2c_{4}q_{2}.
\end{align*}

\subsubsection*{Transformation between jet and canonical coordinates}

Finally, the map (\ref{mapa}) between the jet variables and the
(extended by Casimir variables $c_{i}$) Viète coordinates on the
stationary manifold $\mathcal{M}_{2}$ is given by
\begin{subequations}\label{ez156}
\begin{equation}
q_{1}=\frac{1}{2}r,\qquad q_{2}=\frac{3}{8}r^{2}+\frac{1}{2}w,\qquad p_{1}=\frac{1}{4}rr_{x}+\frac{1}{4}w_{x},\qquad p_{2}=\frac{1}{4}r_{x},\label{ez15}
\end{equation}
together with
\begin{equation}
\begin{split}c_{1} & =h_{2}\equiv-\frac{9}{64}wr^{4}-\frac{3}{8}vr^{3}-\frac{3}{8}w^{2}r^{2}-ur^{2}-\frac{1}{2}vwr-\frac{1}{2}r_{2x}r-\frac{1}{4}w^{3}-\frac{5}{16}r_{x}^{2}-uw-\frac{1}{4}w_{2x},\\
c_{2} & =h_{3}\equiv-\frac{9}{64}r^{5}-\frac{3}{4}wr^{3}-vr^{2}-\frac{3}{4}w^{2}r-ur-vw-\frac{1}{4}r_{2x},\\
c_{3} & =h_{4}\equiv-\frac{15}{64}r^{4}-\frac{9}{8}wr^{2}-vr-\frac{3}{4}w^{2}-u,\qquad c_{4}=h_{5}\equiv-\frac{5}{8}r^{3}-\frac{3}{2}wr-v.
\end{split}
\label{ez16}
\end{equation}
\end{subequations}
Notice that eliminating the fields $u$ and $v$ from \eqref{ez16}
one obtains the relations
\begin{equation}
\begin{split}c_{1} & =-\frac{5}{8}c_{4}r^{3}+c_{3}r^{2}-\frac{1}{2}c_{4}rw+c_{3}w-\frac{5}{32}r^{6}-\frac{1}{32}r^{4}w+\frac{3}{4}r^{2}w^{2}-\frac{1}{2}rr_{2x}-\frac{5}{16}r_{x}^{2}+\frac{1}{2}w^{3}-\frac{1}{4}w_{2x},\\
c_{2} & =c_{3}r+c_{4}w+\frac{3}{32}r^{5}+r^{3}w+\frac{3}{2}rw^{2}-\frac{1}{4}r_{2x}.
\end{split}
\label{ez17}
\end{equation}
Applying \eqref{ez15} to the above Hamiltonian equations we obtain
respective vector fields \eqref{ez9} together with the relations
\eqref{ez17}.

The inverse of the whole map \eqref{ez156} is given
by
\begin{align*}
w=2q_{2}-3q_{1}^{2},\qquad w_{x}=4p_{1}-8p_{2}q_{1},\qquad r=2q_{1},\qquad r_{x}=4p_{2},
\end{align*}
together with
\begin{align*}
u & =2c_{4}q_{1}-c_{3}-5q_{1}^{4}+12q_{2}q_{1}^{2}-3q_{2}^{2},\qquad v=-c_{4}+4q_{1}^{3}-6q_{2}q_{1},\\
w_{2x} & =40c_{4}q_{1}^{3}-28c_{3}q_{1}^{2}+16c_{2}q_{1}-40c_{4}q_{2}q_{1}+8c_{3}q_{2}-4c_{1}-20p_{2}^{2}-76q_{1}^{6}+280q_{2}q_{1}^{4}-216q_{2}^{2}q_{1}^{2}+16q_{2}^{3},\\
r_{2x} & =-12c_{4}q_{1}^{2}+8c_{3}q_{1}+8c_{4}q_{2}-4c_{2}+24q_{1}^{5}-80q_{2}q_{1}^{3}+48q_{2}^{2}q_{1}.
\end{align*}

\section{Conclusions}

In this article we investigated a surprising link between the stationary $N$-field cKdV system and a family of $N+1$ Stäckel systems. The result of this paper is that -- in a very precise sense -- each stationary cKdV system can be parameterized as a Stäckel system in exactly $N+1$ different ways. These different parametrizations are then shown to be equivalent through appropriate finite-dimensional analogues of Miura maps. One of the profits coming from our construction is the fact that $n$-time solutions of these Stäckel systems lead to a particular $n$-time solution of the first $n$ systems of the $N$-field cKdV hierarchy.

An open problem, worth further investigation, is to find other constraints on various soliton hierarchies, including cKdV, that also lead to some classes of Stäckel systems.

\setcounter{equation}{0}
\renewcommand{\theequation}{A.\arabic{equation}}

\section*{Appendix}\addcontentsline{toc}{section}{$\quad$ Appendix}

\paragraph*{Proof of Lemma \ref{lemmahk}.} It is immediate to see that
\eqref{s14} for $k<N$ has the form:
\[
h_{k}=\sum_{i=0}^{k}\sum_{j=i}^{k}\mathcal{J}_{i}(P_{n-k+j},P_{n+i-j})\equiv f_{k+1},
\]
hence \eqref{s18a}.

Notice that by \eqref{c8a} we have the equality
\begin{equation}
\sum_{i=0}^{N}\sum_{j=k-n}^{i+n}\mathcal{J}_{i}(P_{n-k+j},P_{n+i-j})=0\label{ll1}
\end{equation}
valid for $k\leqslant2n+N-1$. Taking into account that $P_{n-k+j}\neq0$
only for $j\geqslant k-n$ and $P_{n+i-j}\neq0$ only for $j\leqslant i+n$
one can see that if $k\geqslant N$ and $k\geqslant n$ the formula
\eqref{s14} takes the form:
\begin{align*}
h_{k} & =\sum_{i=0}^{N}\sum_{j=\max\{i,k-n\}}^{\min\{k,i+n\}}\mathcal{J}_{i}(P_{n-k+j},P_{n+i-j})\equiv\Biggl(\sum_{i=k-n+1}^{N}\sum_{j=i}^{k}+\sum_{i=0}^{k-n}\sum_{j=k-n}^{i+n}\Biggr)\mathcal{J}_{i}(P_{n-k+j},P_{n+i-j})\\
 & \overset{\text{by \eqref{ll1}}}{=}\sum_{i=k-n+1}^{N}\Biggl(\sum_{j=i}^{k}-\sum_{j=k-n}^{i+n}\Biggr)\mathcal{J}_{i}(P_{n-k+j},P_{n+i-j})\\
 &=-\sum_{i=k-n+1}^{N}\Biggl(\sum_{j=k+1}^{i+n}+\sum_{j=k-n}^{i-1}\Biggr)\mathcal{J}_{i}(P_{n-k+j},P_{n+i-j})\\
 & \equiv-2\sum_{i=k-n+1}^{N}\sum_{j=k-n}^{i-1}\mathcal{J}_{i}(P_{n-k+j},P_{n+i-j})\equiv g_{k-n+1}.
\end{align*}
Thus, comparing it with \eqref{gk}, we prove the formula \eqref{s18b}
for $k\geqslant N$. The remaining case $k=N-1$ will be obtained
as part of the following computation.

Similarly as above, the formula \eqref{s14} for $n\leqslant k\leqslant N-1$
takes the form:
\begin{align*}
h_{k} & =\sum_{i=0}^{k}\sum_{j=\max\{i,k-n\}}^{\min\{k,i+n\}}\mathcal{J}_{i}(P_{n-k+j},P_{n+i-j})\equiv\Biggl(\sum_{i=k-n+1}^{k}\sum_{j=i}^{k}+\sum_{i=0}^{k-n}\sum_{j=k-n}^{i+n}\Biggr)\mathcal{J}_{i}(P_{n-k+j},P_{n+i-j})\\
 & \overset{\text{by \eqref{ll1}}}{=}\Biggl(\sum_{i=k-n+1}^{k}\sum_{j=i}^{k}-\sum_{i=k-n+1}^{N}\sum_{j=k-n}^{i+n}\Biggr)\mathcal{J}_{i}(P_{n-k+j},P_{n+i-j})\\
 & \equiv\Biggl(\sum_{i=k-n+1}^{k}\sum_{j=i}^{k}-\sum_{i=k-n+1}^{k}\sum_{j=k-n}^{i+n}-\sum_{i=k+1}^{N}\sum_{j=k-n}^{i+n}\Biggr)\mathcal{J}_{i}(P_{n-k+j},P_{n+i-j})\\
 & =-\Biggl(\sum_{i=k-n+1}^{k}\sum_{j=k+1}^{i+n}+\sum_{i=k-n+1}^{k}\sum_{j=k-n}^{i-1}+\sum_{i=k+1}^{N}\sum_{j=k-n}^{i+n}\Biggr)\mathcal{J}_{i}(P_{n-k+j},P_{n+i-j})\\
 & \equiv-\Biggl(2\sum_{i=k-n+1}^{k}\sum_{j=k-n}^{i-1}+\sum_{i=k+1}^{N}\sum_{j=k-n}^{i+n}\Biggr)\mathcal{J}_{i}(P_{n-k+j},P_{n+i-j}).
\end{align*}
Now, comparing with \eqref{gk} and \eqref{ggk} we see that
\begin{align*}
h_{k} & \equiv g_{k-n+1}+\Biggl(2\sum_{i=k+1}^{N}\sum_{j=k-n}^{i-1}-\sum_{i=k+1}^{N}\sum_{j=k-n}^{i+n}\Biggr)\mathcal{J}_{i}(P_{n-k+j},P_{n+i-j})\\
 & \equiv g_{k-n+1}+\Biggl(\sum_{i=k+1}^{N}\sum_{j=k-n}^{i-1}-\sum_{i=k+1}^{N}\sum_{j=i}^{i+n}\Biggr)\mathcal{J}_{i}(P_{n-k+j},P_{n+i-j})\\
 & \equiv g_{k-n+1}+\Biggl(\sum_{i=k+1}^{N}\sum_{j=k-n}^{i-1}-\sum_{i=k+1}^{N}\sum_{j=k-n}^{k}\Biggr)\mathcal{J}_{i}(P_{n-k+j},P_{n+i-j})\\
 & =g_{k-n+1}+\sum_{i=k+1}^{N}\sum_{j=k+1}^{i-1}\mathcal{J}_{i}(P_{n-k+j},P_{n+i-j})\equiv g_{k-n+1}+\tilde{g}_{k+2}.
\end{align*}
Hence, we obtain \eqref{s18c} and the remaining case of \eqref{s18b},
since the sum in the last line of the above computation vanishes for
$k=N-1$.


\end{document}